\newcommand{\rand}{\pi}
\newcommand{\revelationgap}{\gaplowerbound}
\DeclareMathAlphabet{\mathpzc}{OT1}{pzc}{m}{it}
\newcommand{\agind}[1][i]{_{#1}}
\newcommand{\ironed}{\bar}
\newcommand{\constrained}{\hat}
\newcommand{\optconstrained}{\composed{\optimized}{\constrained}}
\newcommand{\optimized}[1]{#1\opt}
\newcommand{\differentiated}[1]{#1'}
\newcommand{\tagged}[2]{{#2}^{#1}}
\newcommand{\starred}[1]{#1^\star}
\newcommand{\primedarg}[1]{#1\primed}
\newcommand{\noaccents}[1]{#1}
\newcommand{\composed}[3]{#1{#2{#3}}}
\newcommand{\newagentvar}[3][\noaccents]{%
\expandafter\newcommand\expandafter{\csname #2\endcsname}{#1{#3}}%
\expandafter\newcommand\expandafter{\csname #2s\endcsname}{#1{\boldsymbol{#3}}}%
\expandafter\newcommand\expandafter{\csname #2smi\endcsname}[1][i]{#1{\boldsymbol{#3}}_{-##1}}%
\expandafter\newcommand\expandafter{\csname #2i\endcsname}[1][i]{#1{#3}\agind[##1]}%
\expandafter\newcommand\expandafter{\csname #2ith\endcsname}[1][i]{#1{#3}_{(##1)}}%
}
\newcommand{\newitemvar}[3][\noaccents]{%
\expandafter\newcommand\expandafter{\csname #2\endcsname}{#1{#3}}%
\expandafter\newcommand\expandafter{\csname #2s\endcsname}{#1{\boldsymbol{#3}}}%
\expandafter\newcommand\expandafter{\csname #2smj\endcsname}[1][j]{#1{\boldsymbol{#3}}_{-##1}}%
\expandafter\newcommand\expandafter{\csname #2j\endcsname}[1][j]{#1{#3}_{##1}}%
\expandafter\newcommand\expandafter{\csname #2jth\endcsname}[1][j]{#1{#3}_{(##1)}}%
}
\newcommand{\exposted}[1]{#1^{\text{\it EP}}}
\composed{\exposted}{\constrained}]{excalloc}{\qalloc}
\newcommand{\served}[1]{#1^1}
\newcommand{\nonserved}[1]{#1^0}
\newcommand{\alloced}[1]{#1^{\alloc}}
\newcommand{\allocedi}[1]{#1^{\alloci}}
\newcommand{\Obj}[1]{\mathbf{Payoff}[#1]}
\newcommand{\lagrange}{\lambda}
\newcommand{\budget}{B}
\DeclareMathOperator{\OPT}{OPT}
\newcommand{\reals}{{\mathbb R}}
\newcommand{\opt}{^{\star}}
\newcommand{\primed}{^\dagger}
\newcommand{\doubleprimed}{^{\ddagger}}
\newcommand{\highval}{h}
\newcommand{\welfare}{\text{\textbf{welfare}}}
\newcommand{\PO}{\text{PP}}
\newcommand{\allocPO}{\alloc_\PO}
\newcommand{\CL}{\text{Clinch}_k}
\newcommand{\allocCL}{\alloc_{\CL}}
\newcommand{\aalloc}{z}
\newcommand{\event}{\mathcal{E}}
\newcommand{\valp}{\val\primed}
\newcommand{\valdp}{\val\doubleprimed}
\newcommand{\dvp}{\lambda(\valdp - 1)}
\newcommand{\mt}{Z}
\newcommand{\allocspace}{\mathcal{X}}
\newcommand{\MECHS}{\text{MECHS}}
\newcommand{\DISTS}{\text{DISTS}}
\newcommand{\gaplowerbound}{1.013}
\newcommand{\clinchinglowerbound}{1.03}
\newcommand{\lvalue}{\mathbf L}
\newcommand{\mvalue}{\mathbf M}
\newcommand{\hvalue}{\mathbf H}
\newcommand{\LS}{\lvalue*}
\newcommand{\MM}{\mvalue\mvalue}
\newcommand{\HM}{\hvalue\mvalue}
\newcommand{\MH}{\mvalue\hvalue}
\newcommand{\HH}{\hvalue\hvalue}
\newcommand{\lambdas}{\mathbold \lambda}
\newcommand{\Lambdas}{\mathbf \Lambda}
\newcommand{\betas}{\mathbold \beta}
\newcommand{\mus}{\mathbold \mu}
\newcommand{\Sp}{S\primed}
\newcommand{\Sdp}{S\doubleprimed}
\newcommand{\kp}{k\primed}
\newcommand{\kdp}{k\doubleprimed}
\DeclareMathOperator{\argmax}{argmax}
\newcommand{\given}{\,\mid\,}
\newcommand{\prob}[2][]{\text{\bf Pr}\ifthenelse{\not\equal{}{#1}}{_{#1}}{}\!\left[{\def\givenn{\middle|}#2}\right]}
\newcommand{\expect}[2][]{\text{\bf E}\ifthenelse{\not\equal{}{#1}}{_{#1}}{}\!\left[{\def\givenn{\middle|}#2}\right]}
\newcommand{\tparen}{\big}
\newcommand{\tprob}[2][]{\text{\bf Pr}\ifthenelse{\not\equal{}{#1}}{_{#1}}{}\tparen[{\def\given{\tparen|}#2}\tparen]}
\newcommand{\texpect}[2][]{\text{\bf E}\ifthenelse{\not\equal{}{#1}}{_{#1}}{}\tparen[{\def\given{\tparen|}#2}\tparen]}
\newcommand{\sprob}[2][]{\text{\bf Pr}\ifthenelse{\not\equal{}{#1}}{_{#1}}{}[#2]}
\newcommand{\sexpect}[2][]{\text{\bf E}\ifthenelse{\not\equal{}{#1}}{_{#1}}{}[#2]}
\theoremstyle{plain}
\newtheorem{theorem}{Theorem}[section]
\newtheorem{lemma}[theorem]{Lemma}
\newtheorem{proposition}[theorem]{Proposition}
\newtheorem{corollary}[theorem]{Corollary}% reset theorem numbering for each chapter
\newenvironment{numberedtheorem}[1]{%
\begin{theorem}}{\end{theorem}\addtocounter{theorem}{-1}}
\theoremstyle{plain}
\newtheorem{definition}{Definition}[section] % definition numbers are dependent on theorem numbers
\title{An End-to-end Argument in Mechanism Design\\
(Prior-independent Auctions for Budgeted Agents)
\thanks{Appearing at the 2018 IEEE Symposium on Foundations of Computer Science.}}
\author{Yiding Feng
  \thanks{Department of Electrical Engineering and Computer Science, Northwestern University. Email: \texttt{yidingfeng2021@u.northwestern.edu}.}
  \and
Jason D. Hartline
  \thanks{Department of Electrical Engineering and Computer Science, Northwestern University.
  Email: \texttt{hartline@northwestern.edu}.}}
\date{}
\begin{document}

%\begin{titlingpage}
\maketitle

\begin{abstract}
This paper considers prior-independent mechanism design, namely
identifying a single mechanism that has near optimal performance on
every prior distribution.  We show that mechanisms with truthtelling
equilibria, a.k.a., revelation mechanisms, do not always give optimal
prior-independent mechanisms and we define the revelation gap to
quantify the non-optimality of revelation mechanisms.  This study
suggests that it is important to develop a theory for the design of
non-revelation mechanisms.

Our analysis focuses on welfare maximization in single-item auctions
for agents with budgets and a natural regularity assumption on their
distribution of values.  The all-pay auction (a non-revelation
mechanism) is the Bayesian optimal mechanism; as it is
prior-independent it is also the prior-independent optimal mechanism
(a 1-approximation).  We prove a lower bound on the prior-independent
approximation of revelation mechanisms of $\gaplowerbound$ and that
the clinching auction (a revelation mechanism) is a prior-independent
$e\approx 2.714$ approximation.  Thus the revelation gap for
single-item welfare maximization with public budget agents is in
$[\gaplowerbound,e]$.

Some of our analyses extend to the revenue objective, position
environments, and irregular distributions.
\end{abstract}

%\end{titlingpage}

\section{Introduction}

%
% end-to-end principle vs revelation principle
%
The {\em end-to-end principle} in distributed systems advocates
environment-independent protocols (for the center) that push
environment-specific complexity to the applications (the end points)
that use the protocol \citep{SRC-84}.  This principle enabled the
Internet protocols designed for the workloads of the 1980s to continue
to succeed with workloads of the 2010s.  On the other hand, research
in mechanism design (which governs the design of protocols for
strategic agents and has application both in computer science and
economics) almost exclusively adheres to the {\em revelation
  principle} which suggests the design of mechanisms where each
agent's best strategy is to truthfully report her preferences.  In
revelation mechanisms the agents (the end points) have very a simple
``report your true preference'' strategies and the mechanism (the
center) has the complex task of finding an outcome that both enforces
this truthfulness property and also obtains a desirable outcome.
Unsurprisingly, optimal revelation mechanisms tend to be complex and
dependent on the environment.  This paper demonstrates that the
end-to-end argument has bite in mechanism design by showing that
non-revelation mechanisms are strictly better than revelation
mechanisms for a canonical mechanism design problem.

%
% prior-independent mechanism design.
%
In prior-independent mechanism design, it is assumed that the agents'
preferences are drawn from a distribution that is not known to the
designer.  The goal of prior-independent mechanism design is to
identify mechanisms that are good approximations to the optimal
mechanism that is tailored to the distribution of preferences.
Specifically, a mechanism is sought to minimize the ratio of its
expected performance to the expected performance of the optimal
mechanism in worst case over distributions from which the preferences
of the agents are drawn.  This notion is a standard one that has been
applied to revenue maximization \citep{DRY-10,RTY-12,FILS-15,AB-18},
multi-dimensional mechanism design \citep{DHKT-11,RTY-15}, makespan
minimization \citep{CHMS-13}, mechanism design for risk-averse agents
\citep{FHH-13}, and mechanism design for agents with interdependent
values \citep{CFK-14}.  In none of these scenarios is the optimal
prior-independent mechanism known; cf.\ \citet{FILS-15} and
\citet{AB-18}.

%
% the revelation principle
%
The revelation principle suggests that if there is a mechanism with a
good equilibrium outcome, there is a mechanism where truthtelling
achieves the same outcome in a truthtelling equilibrium.  Due to the
revelation principle, much of the theory of mechanism design is
developed for revelation mechanisms, i.e., ones where truthtelling is
an equilibrium.  The proof of the revelation principle is simple: A
revelation mechanism can simulate the equilibrium strategies in the
non-revelation mechanism to obtain the same outcome as a truthtelling
equilibrium, i.e., agents report true preferences to the revelation
mechanism, it simulates the agent strategies in the non-revelation
mechanism, and it outputs the outcome of the simulation.  For Bayesian
non-revelation mechanisms (where the agents' preferences are drawn
from a prior distribution), the agents' equilibrium strategies are a
function of the prior and thus the corresponding revelation mechanism
constructed via the revelation principle is not prior-independent.
Thus, the restriction to revelation mechanisms is not generally
without loss for prior-independent mechanism design.  Non-revelation
mechanisms, on the other hand, are widely used in practice and
frequently have easily to identify and natural equilibria (e.g., in
rank-based auctions, see \citealp{CH-13}).  Our proof of a non-trivial
revelation gap -- that the prior-independent approximation factor of
the best non-revelation mechanism is better than that of the best
revelation mechanism -- gives concrete motivation for a theory of
mechanism design without the revelation principle.

It is not hard to invent pathological scenarios where there is a
non-trivial revelation gap.  Instead, this paper considers the
canonical environment of welfare maximization for agents with budgets
and shows such a gap even for distributions on preferences that
satisfy a standard regularity property.  Moreover, the environment in
which we exhibit the revelation gap suggests the end-to-end principle:
the agents can easily implement the optimal outcome in the equilibrium
of a simple mechanism, while revelation mechanisms that satisfy the
constraints must be complex and either prior-dependent or non-optimal.

\paragraph{Main Results.}
%
% single-item auctions with public budgets, iid, regular distributions.
%
Our analysis focuses on welfare maximization in a canonical
single-item environment with ex ante symmetric budget constrained
agents, i.e., each agent's value is drawn independently and
identically from an unknown distribution and the agent cannot make
payments that exceed a known and identical budget \citep[cf.][]{mas-00}.  Our
main treatment of this problem will make a simplifying assumption that
the distribution follows a regularity property that implies that the
Bayesian optimal mechanism has a nice form \citep{PV-14}.  Our results
require a symmetric environment, i.e., an i.i.d.\ distribution and
identical budget.  A number of our results extend to the objective of
revenue \citep{LR-96}, to position environments as popularized as a
model for ad auctions \citep{DHH-13}, and beyond regular distributions
\citep{DHH-13}.  For clarity the main results are described first for
welfare maximization, single-item environments, and regularly
distributed agent values.

The main challenge in demonstrating a revelation gap is that it is
difficult to identify prior-independent optimal mechanisms,
cf.\ \citet{FILS-15}.  Though the question has been considered, the
prior literature has no examples of optimal prior-independent
mechanisms for non-trivial environments. \footnote{Contemporaneously
  with our results, \citet{AB-18} showed that the second-price auction
  is the prior-independent optimal revelation mechanism for revenue
  when the agents' values are distributed according to a monotone
  hazard rate distribution.} Our non-trivial revelation-gap theorem
follows from three results.  First, the all-pay auction (from the
literature, defined below) has a unique equilibrium that is Bayesian
optimal and it is prior-independent.  Second, we obtain a lower bound
on the ability of a prior-independent revelation mechanism to
approximate the Bayesian optimal mechanism by identifying the dominant
strategy incentive compatible mechanism that is Bayesian optimal for
the uniform distribution.  The performance of this mechanism is
strictly worse than that of the Bayesian optimal mechanism (which is
Bayesian incentive compatible); specifically the gap is
$\revelationgap$. Third, we show that the dominant strategy incentive
compatible clinching auction (from the literature, defined below) is
an $e\approx 2.72$ approximation to the Bayesian optimal mechanism.
Combining the upper and lower bounds we see a revelation gap between
$\revelationgap$ and $e$. \footnote{To better appreciate the magnitude
  of this lower bound, notice that it is demonstrated for two agents
  with uniformly distributed values where the optimal expected welfare (even
  without budgets) is is $2/3$ and the lottery mechanism (which gives
  the item to a random agent) has expected welfare $1/2$ and is a $4/3
  \approx 1.33$ approximation.}  The first result follows naturally
  from the literature; the second and third results are the main
  technical contributions of the paper.

Three auctions are at the forefront of our study.  The {\em all-pay
  auction} solicits bids, assigns the item to the highest agent, and
charges all agents their bids.  The {\em clinching auction}
\citep{aus-04,DLN-08,GML-15} is an ascending price auction that can be
thought of as allocating a unit measure of lottery tickets: a price is
offered in each stage, each agent specifies the measure of tickets
desired at the given price, each agent is allocated a number of
tickets that is equal to the minimum of her demand and the measure of
remaining tickets if this agent is only allowed to buy tickets after
all other agents have bought as much as they desire
first. \footnote{For example, at a price of 0 all agents would want to
  buy all the tickets, but the agent that arrives last gets no
  tickets, thus no agents get any tickets at this price; the price
  increases.}  The {\em middle-ironed clinching auction} -- which we
identify as the optimal dominant strategy incentive compatible
mechanism -- behaves like the clinching auction except that values
that fall within a middle range are ironed. The allocation that an
agent in this middle range receives is the average over he original
allocation of for middle range values in the clinching auction.  This
averaging results in the the budget binding later and more efficient
outcomes than in the original clinching auction.

The second step, mentioned above, is to obtain a lower bound on the
prior-independent approximation of a revelation mechanism.  Our
analysis begins with the observation that a prior-independent
revelation mechanism must be Bayesian incentive compatible for every
distribution.  For two agents, this condition is equivalent to being
dominant strategy incentive compatible.  We ask whether there a gap
between the Bayesian optimal dominant strategy and Bayesian incentive
compatible mechanism.  The comparison between optimal dominant
strategy and Bayesian incentive compatible mechanism is standard for
multi-dimensional mechanism design problems, e.g., see
\citet{GGKMS-13} and \citet{yao-17}; we are unaware of previous
studies of this phenomenon for single-dimensional agents with
non-linear preferences.  We answer this question positively by writing
the dominant strategy mechanism design problem as a linear program and
solving it by identifying a dual solution that proves the optimality
of the middle-ironed clinching auction, cf.\ \citet{PV-14} and
\citet{DW-17}.  The identified gap gives a lower bound on the
approximation factor of the optimal prior-independent mechanism.

The third step, mentioned above, proves that the prior-independent
approximation factor of the clinching auction auction is at most $e$
and resolves in the affirmative an open question from \citet{DHH-13}.
Our proof follows from a novel adaptation of a standard method for
approximation results in mechanism design where an auction's
performance is compared to the upper bound given by the ex ante
relaxation, in this case, the welfare of the optimal mechanism that
sells one item in expectation over the random draws of the agents'
values (i.e., ex ante) rather than for all draws of the agents' values
(i.e., ex post).  This method was introduced by \citet{CHK-07},
formalized by \citet{ala-11,ala-14}, generalized by \citet{AFHH-13}, and
employed in many subsequent analyses.

\paragraph{Extensions.}
A number of our results extend beyond regular distributions,
single-item environments, and the welfare objective as described
above.  These extensions all require that the environment be
symmetric, specifically, that the agents' values are independent and
identically distributed and their budgets are identical.

For irregular distributions the welfare-optimal auction is not
generally the all-pay auction; moreover, it does not generally have a
prior-independent implementation.  We prove that the all-pay auction
is a prior-independent two approximation.  Both the regular and
irregular prior-independent optimality and approximation results for
the one-item all-pay auction extend to the all-pay position auction.

The degradation of the approximation factor by a factor of two for
irregular distributions extends to the single-item clinching auction
which is an $e$ approximation for regular distributions (as described
above) and a $2e$ approximation for irregular distributions.

For the revenue objective \citep{LR-96}, with appropriate definition
of regularity, the $n$-agent single-item all-pay auction is a
prior-independent $n/(n-1)$ approximation to the revenue optimal
auction \citep[cf.][]{BK-96}.

%% In contrast to our approximation bounds for revenue and welfare of the
%% all-pay auction and (for welfare) the clinching auction, any auction
%% where losers make no payments, such as the first-price and
%% second-price auctions, does not give a good approximation.  For
%% example, in the first-price auction the agent with the highest bid
%% wins and pays her bid, while other agents lose and pay nothing.  We
%% prove that for $n$-agent environments the first price auction is an
%% $n$-approximation to the Bayesian optimal mechanism.

\paragraph{Important Directions.}

The most general direction suggested by this work is for a systematic
development of non-revelation mechanism design.  Unfortunately, it is
not generally helpful to do revelation mechanism design and then try
to go from the suggested revelation mechanism to a practical and
simple non-revelation mechanism.  There is a nascent literature on this
topic.  Papers working to develop a theory of non-revelation mechanism
design include \cite{CH-13}, which proves the uniqueness and
optimality of equilibria in symmetric rank-based auctions;
\citet{CHN-14,CHN-16}, which gives data driven methods for optimizing
non-revelation mechanisms in symmetric environments; and
\citet{HT-16}, which gives a theory for non-revelation sample
complexity and the design of approximately optimal non-revelation
mechanisms in asymmetric environments.

While the literature has many interesting approximation bounds for
prior-independent mechanism design.  Rarely have the prior-independent
optimal mechanism been identified.  Moreover, the prior-independent
approximation factors achievable tend to be surprising; for example,
\citet{FILS-15} show that the second-price action is not the optimal
prior-independent mechanisms for two-agent revenue maximization with
agents with regularly distributed values. \footnote{\citet{AB-18} show
  that with more restrictive monotone hazard rate distributions, the
  second-price auction is an optimal prior-independent revelation
  mechanism.}  The literature lacks general techniques for answering
this question.

We have observed that there is a very simple prior-independent optimal
mechanism for welfare maximization in symmetric environments for
agents with identical budgets.  This mechanism, namely the all-pay
auction, achieves its optimal outcome in Bayes-Nash equilibrium.  The
general question of identifying prior-independent non-revelation
mechanisms that optimize a desired objective, like welfare or revenue,
needs to be asked with care.  Without restrictions to this question,
it is asked and answered in the literature on non-parametric
implementation theory, see the survey of \citet{J-01}.  This
literature shows that arbitrarily close approximations, called
``virtual implementations'', to the Bayesian optimal mechanism can be
implemented by an uninformed principal.  The mechanisms in this
literature tend to be sequential -- where agents interact in multiple
rounds -- and require agents to make reports about their own
preferences and crossreports about their beliefs on other agents'
preferences.  Our perspective on these results is that they take the
model of Bayes-Nash equilibrium too literally and the resulting
cross-reporting mechanisms are both fragile and impractical.  One
approach for ruling out these mechanisms is to restrict attention to
mechanism formats that are commonly occurring in practice.
Specifically, in the general {\em winner-pays-bid} format: agents bid,
an allocation rule maps bids to a set of winners, and all winners pay
their bids; in the general {\em all-pay} format: agents bid, an
allocation rule maps bids to a set of winners, and all agents pay
their bids.  There may be other restricted formats that are also
interesting for specific scenarios, e.g., the seller-offer mechanisms
that are prevalent as real estate exchange mechanisms \citep{NYK-14}.

Finally, there are still many gaps in our understanding of auctions
for identically distributed agents with common budgets.  For welfare,
the bounds in this paper show that the clinching auction's
approximation factor for the welfare objective is in
$[\clinchinglowerbound,e]$ for regular distributions and $[2,2e]$ for
irregular distribution.  Sharpening these bounds is an open question.
Moreover, we conjecture that the clinching auction is also a
prior-independent constant approximation for the revenue objective
(with i.i.d.\ public-budget regular distributions).  We also leave
open a number of questions with regard to the Bayesian optimal
dominant strategy incentive compatible mechanism for agents with
budgets.  We conjecture that the welfare optimality of the
middle-ironed clinching auction extends from uniform distributions to
regular distributions.  We leave open the question of a similar result
for the revenue objective, even for the special case of uniform
distributions.  There are specific issues that prevent straightforward
generalization of our approach of using the dual to certify the
optimality of the middle-ironed clinching auction for these questions.

\paragraph{Other Related Work.}
There is a significant area of research analyzing the performance of
simple non-revelation mechanisms in equilibrium, a.k.a., the {\em
  price of anarchy}.  Generally these
mechanisms are prior-independent and the aim of the literature, e.g.\ \citet{ST-13}, is to demonstrate that  they are approximately
efficient.  On the other hand, for welfare maximization in many of the studied environments, there is a DSIC revelation mechanism that is (exactly) efficient and, thus, there is
no revelation gap.  Though this literature focuses on the analysis rather than the design of
mechanisms, two conclusions for mechanism design
are: (a) that a simple {\em revenue covering property} is sufficient
\citep{HHT-14}, necessary \citep{DK-15}, and potentially optimizable; and (b) that this property
(and also a more general {\em smoothness property}) is closed under composition, i.e., when multiple
independent mechanisms are run simultaneously \citep{ST-13}.  For a surveys of these
and other results see \citet{RST-17} and Chapter~6 of \citet{har-16}.

For agents with budgets, approximation mechanisms have been studied
from both a design and analysis perspective for the {\em liquid
  welfare} benchmark proposed by \citet{CMM-11}, \citet{ST-13}, and
\citet{DP-14}.  The liquid welfare benchmark is the optimal surplus of
a feasible allocation when each agent's contribution to the surplus is
the maximum of her value for her allocation and her budget.  These and
subsequent papers show that simple mechanisms have welfare that
approximate the liquid welfare benchmark.  Unfortunately, when
evaluated under the formal study of benchmarks for mechanism design
developed by \citet{HR-08} and summarized in Chapter~7 of
\citet{har-16}, the liquid welfare does not satisfy a key property.
Specifically, there are i.i.d.\ distributions where the expected
welfare of the Bayesian optimal mechanism is arbitrarily larger than
the expected optimal liquid welfare.  This bound follows because
liquid welfare is at most the sum of the agent budgets which can be
arbitrarily close to zero and, in such cases, is unrelated to the
welfare possible by a mechanism.  \footnote{For example, in a
  single-item environment with budgets identically equal to zero and
  agent values identically equal to one (trivially an
  i.i.d.\ distribution); the lottery, which allocates the item to a
  random agent for free, has welfare one while the liquid welfare is
  zero.}  Thus, testing mechanisms for their worst-case approximation
factor with respect to liquid welfare does not necessarily separate good mechanisms from bad mechanisms.

\paragraph{Organization.}  
In \Cref{s:prelim} we give the preliminaries of our setup. 
%\autoref{s:optimal} we derive the Bayesian optimal mechanism for
%symmetric agents with budgets.  
%In \autoref{s:all-pay} we analyze the
%prior-independent approximation factor of the all-pay auction and
%prove that the revelation gap is strictly bigger than one.  
In \Cref{s:clinching} we analyze the prior-independent approximation
factor of the clinching auction for public-budget regular agents.  In
\Cref{s:DSIC} we derive the Bayesian optimal DSIC mechanism for two
agents with value uniformly distributed and show it is the
middle-ironed clinching auction.  In \Cref{s:gap}, we prove that the
revelation gap is a constant.  In \Cref{s:irregular}, we analyze the
prior-independent approximation factor of the all-pay auction and the
clinching auction for irregular agents.  In \Cref{s:winner}, we
analyze the approximation ratio of winner-pays-bid mechanisms.  In
\Cref{s:revenue}, we analyze the prior-independent revenue
approximation of the all-pay auction for public-budget regular agents.
In \Cref{app:price-jump}, we give an ascending implementation of the
clinching auction with price jumps (a generalization of the
middle-ironed clinching auction).  In \Cref{app:winner}, we give a
geometric framework for deriving Bayesian optimal mechanisms for
agents with budgets and apply this framework to solve for the Bayesian
optimal winner-pays-bid mechanism.

\section{Preliminaries}
\label{s:prelim}

\paragraph{Model for auctions with budgets}
We consider mechanisms for agents with independent and
identically distributed values and identical public budgets.  The budget is
denoted by $\budget$.  For allocation $\alloc \in [0,1]$ and payment
$\price \in \reals$, an agent with value $\val \in \reals$ has utility
$\val \alloc - \price$ if $\price$ is at most the budget $\budget$ and
utility $-\infty$ otherwise.  In other words, the agent cannot under
any circumstances pay more than her budget.
The agents' values are drawn independently and identically from distribution $F$ with support $[0,\highestval]$. 

\begin{comment}
To
expose the geometry of the mechanism design problem we will define an
agent's quantile $\quant \in [0,1]$ to be the probability that a
random value from the distribution is higher than this agent's value.
By definition, an agent's quantile is uniformly distributed on
$[0,1]$.  Define the value of an agent with quantile $\quant$ as
$\val(\quant)$.  Notice that $\val(\cdot)$ is given precisely by the
cumulative distribution function (and its derivative $\val'(\cdot)$ is
the density function).
\end{comment}

%In general mechanisms, agents do not generally report their values in equilibrium.  
Denote the strategy function of an agent by
$\strat(\cdot)$ where $\strat(\val)$ is the bid made by the agent
when her value is $\val$.  
A bid profile is $\bids =
(\bidi[1],\ldots,\bidi[n])$.  A mechanism is given by mapping from
bids to allocations and payments which we will denote by
$\ballocs(\bids) = (\balloci[1](\bids),\cdots,\balloci[n](\bids))$ and
$\bprices(\bids) = (\bpricei[1](\bids),\cdots,\bpricei[n](\bids))$.
The outcome of the mechanism $(\ballocs,\bprices)$ and strategy profile
$\strats = (\strati[1],\ldots,\strati[n])$ on a profile of agent
values $\vals$ is denoted by allocation rule $\allocs(\vals) =
\ballocs(\strats(\vals))$ and payment rule $\prices(\vals) =
\bprices(\strats(\vals))$.

The auction designer typically faces a feasibility constraint that
restricts the allocations that can be produced.  For example, a
single-item auction requires that the number of agents allocated is at
most one, i.e., $\sum_i \alloci(\vals) \leq 1$.  A position
environment generalizes a single item auction and is given by a
sequence of decreasing probabilities $\wals =
(\wali[1],\ldots,\wali[n])$ where without loss of generality the
number of positions is equal to the number of agents.  The probability
that an agent is allocated if assigned to position $j$ is $\wali[j]$.
A mechanism then can assign agents to positions (deterministically or
randomly) and this process and the position probabilities induce
allocation probabilities $\alloci(\vals)$.

\paragraph{Basic auction theory}

A Bayes-Nash equilibrium (BNE) in the mechanism $(\ballocs, \bprices)$ 
is a profile of agent strategies $\strats$ where each $\strati[i]$ maps
a value to a bid that is a best response to the other strategies and the 
common knowledge that values are drawn i.i.d.\ from distribution $\dist$. 
The mechanism $(\ballocs,\bprices)$ and strategy profile $\strats$ 
induce for each agent an interim
allocation rule $\alloci(\val) = \expect[\valsmi]{
\alloci(\val,\valsmi)}$.  We will consider only symmetric
distributions and symmetric auctions.  In such auctions, \citet{CH-13}
show that all equilibria are symmetric, thus it is without loss to
drop the subscript and refer to the interim allocation rule and
payment rule as $(\alloc,\price)$.  The \citet{mye-81}
characterization of BNE requires that (a)
the interim allocation $\alloc(\val)$ is monotone non-decreasing and (b) 
the interim payment $\price(\val) =
\val\cdot\alloc(\val) - \int_0^{\val} \alloc(t)\,dt$.  Condition (b) is known as the payment
identity.
A mechanism is Bayesian incentive compatible (BIC) if it induces a 
BNE where each agents strategy is reporting her value truthfully.
A mechanism is interim individual rational (IIR) if the interim allocation is 
non-negative for all value.

A dominant strategy equilibrium (DSE) in the mechanism $(\ballocs, \bprices)$ 
is a profile of agent strategies $\strats$ where each $\strati[i]$ maps
a value to a bid that is a best response regardless of what other agents are doing. 
The characterization of DSE follows from the BNE characterization:
(a) the allocation $\alloc(\vali[i],\valsmi)$ is monotone non-decreasing in $\vali[i]$ and
(b) the payment $\price(\vali[i],\valsmi) = \vali[i]\cdot\alloc(\vali[i],\valsmi) 
- \int_0^{\vali[i]}\alloc(t,\valsmi)dt$. 
A mechanism is dominant strategy incentive compatible (DSIC) if it induces a 
DSE where each agents strategy is reporting her value truthfully.
A mechanism is ex-post individual rational (ex-post IR) if the allocation is 
non-negative for all value profile.

\paragraph{Optimal auctions with budgets}

\citet{LR-96} and \citet{mas-00} for the revenue and welfare
objectives, respectively, characterize the Bayesian optimal mechanisms
for agents with public budgets.  With the following regularity
assumptions on the distribution, defined distinctly for revenue and
welfare, the optimal mechanism has a nice form.

\begin{definition}
A single-dimensional public budget agent is \emph{public-budget
  regular for welfare} if her cumulative distribution function
$\dist(\cdot)$ is concave; she is \emph{public-budget regular for
  revenue} if additionally $\val - \frac{1 -
  F(\val)}{f(\val)}$ is non-decreasing.
\end{definition}

The results of \citeauthor{LR-96} and \citeauthor{mas-00} can be
reinterpreted, \`a la \citet{AFHH-13}, as solving a single-agent
interim optimization problem that is given by an interim constraint
$\calloc(\cdot)$.  An interim allocation is interim feasible under the
interim constraint $\calloc(\cdot)$ if for all values $\val \in
[0,\highestval]$, the probability of allocating item to an agent with
value greater than $\val$ with allocation rule $\alloc(\cdot)$ is at
most that with allocation rule $\calloc$, i.e.\ $\int_v^\highestval
\alloc(t)\,d\dist(t) \leq \int_v^\highestval \calloc(t)\,d\dist(t)$.
In many cases solution to these interim optimization problems will
take the form of the original constraint with ironed interval and
reserve.  Ironing on arbitrary interval $[\valp,\valdp]$ corresponds
to the distribution weighted averaging as follow, $\alloc(\val) =
\int_{\valp}^{\valdp} \calloc(t)\,d\dist(t)$ for all $\val \in
    [\valp,\valdp]$.  Reserve at value $\valp$ corresponds to
    rejecting all value below $\valp$ as follows, $\alloc(\val) = 0$
    for all $\val \in [0,\valp]$.  An important allocation constraint
    is that given by the {\em highest-bid-wins} allocation rule.  The
    highest-bid-wins allocation rule for $n$ agents and with values
    from cumulative distribution function $\dist$ is $\calloc(\val) =
    \dist^{n-1}(\val)$, e.g., for two agents with uniform values it is
    $\calloc(\val) = \val$.

\begin{theorem}[\citealp{LR-96}; \citealp{mas-00}; \citealp{AFHH-13}]
\label{thm: opt welfare}\label{thm: opt revenue}
For public-budget regular i.i.d.\ agents and interim allocation
constraint $\calloc(\cdot)$, the welfare-optimal single-agent
mechanism allocates as by $\calloc(\cdot)$ except that values in
$[\valp,\highestval]$ are ironed for some $\valp$ and the
revenue-optimal single-agent mechanism additionally reserve prices
values in $[0, \valdp]$ for some $\valdp$; payments are given
deterministically by the payment identity.
\end{theorem}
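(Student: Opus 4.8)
The plan is to recover the characterizations of \citet{LR-96} and \citet{mas-00} in the interim-constraint language of \citet{AFHH-13}. By the \citet{mye-81} characterization and symmetry, a BIC mechanism is determined by a monotone non-decreasing interim allocation rule $\alloc(\cdot)$ together with the payment identity, and (reinterpreting \citeauthor{LR-96} and \citeauthor{mas-00} as in \citeauthor{AFHH-13}) it suffices to solve the single-agent problem: choose $\alloc(\cdot)$ to maximize the objective subject to (i) monotonicity of $\alloc$, (ii) interim individual rationality, (iii) interim feasibility $\int_\val^\highestval \alloc(t)\,d\dist(t) \le \int_\val^\highestval \calloc(t)\,d\dist(t)$ for all $\val$, and (iv) the budget $\price(\val)\le\budget$ for all $\val$, where $\price(\cdot)$ is the payment identity applied to $\alloc$. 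For welfare the objective is $\int_0^\highestval \val\,\alloc(\val)\,d\dist(\val)$; for revenue, \citeauthor{mye-81}'s integration by parts (still valid since (i)--(iv) and the payment identity hold) turns it into $\int_0^\highestval \virt(\val)\,\alloc(\val)\,d\dist(\val)$ with $\virt(\val)=\val-(1-\dist(\val))/\dens(\val)$. It is without loss to make payments deterministic functions of the agent's own value equal to $\price(\cdot)$ -- replacing a random payment by its conditional mean preserves incentives and the objective and cannot violate an ex-post budget cap -- and since $\alloc$ is monotone, $\price(\cdot)$ is monotone, so (iv) collapses to the single inequality $\highestval\,\alloc(\highestval)-\int_0^\highestval\alloc(t)\,dt \le \budget$.

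With the budget reduced to one linear inequality the problem is a linear program in $\alloc$, which I would solve by Lagrangian duality -- exactly the route taken for the harder DSIC version in \Cref{s:DSIC}. Dualizing the budget with a multiplier $\lagrange\ge 0$ and collecting terms, the Lagrangian becomes $\int_0^\highestval\bigl(\val+\lagrange/\dens(\val)\bigr)\alloc(\val)\,d\dist(\val)-\lagrange\,\highestval\,\alloc(\highestval)+\lagrange\budget$ (with $\virt(\val)+\lagrange/\dens(\val)$ in the revenue case). Public-budget regularity is precisely what makes the integrand of the first term non-decreasing: concavity of $\dist$ makes $1/\dens$ non-decreasing, and in the revenue case $\virt$ is assumed non-decreasing. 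Therefore, absent the boundary term, a short Fubini computation shows constraint (iii) binds from the top and the maximizer is $\calloc$ itself, set to zero on the initial interval $[0,\valdp]$ where the integrand is negative in the revenue case -- this is the reserve. The term $-\lagrange\,\highestval\,\alloc(\highestval)$ penalizes ``back-loading'' allocation onto the top types; since by monotonicity it depends only on $\sup_\val\alloc(\val)=\alloc(\highestval)$, its effect is to cap the top allocation and, via monotonicity, flatten an interval ending at $\highestval$ -- ironing.

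It then remains to guess-and-verify a primal/dual pair. The candidate is $\alloc=0$ on $[0,\valdp]$ ($\valdp=0$ for welfare), $\alloc=\calloc$ on $[\valdp,\valp]$, and on $[\valp,\highestval]$ the constant level that preserves the interim mass, $\alloc(\val)\cdot(1-\dist(\valp))=\int_\valp^\highestval\calloc(t)\,d\dist(t)$; $\valdp$ is where the Lagrangian integrand turns non-negative, and $(\valp,\lagrange)$ are fixed by complementary slackness -- either $\valp=\highestval,\ \lagrange=0$ when the un-ironed allocation already respects the budget, or $\lagrange>0$ with $\valp$ chosen so the ironed allocation makes the budget bind. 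Primal feasibility is routine: mass-preserving ironing makes (iii) hold with equality on $[0,\valp]$ and with slack above (using that $\calloc$ is non-decreasing), monotonicity holds because the ironed level exceeds $\calloc(\valp)$, and (ii), (iv) hold by construction. For optimality one checks this $\alloc$ maximizes the Lagrangian: on $[0,\valp]$ by the pointwise argument above, and on $[\valp,\highestval]$ by dualizing the monotonicity constraint \`a la \citeauthor{mye-81}, so that the effective integrand is constant there and mass-preserving ironing is the maximizer given the cap on $\alloc(\highestval)$ dictated by $\lagrange$; weak LP duality plus complementary slackness then give optimality, with payments equal to the payment identity by construction. Finally, one lifts the single-agent solution to an $n$-agent single-item (or position) mechanism, which is possible because the highest-bid-wins constraint $\calloc(\val)=\dist^{n-1}(\val)$ has the product/Border form -- the only step using the symmetric i.i.d.\ assumption.

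The step I expect to be the main obstacle is arguing that a single ironed interval abutting $\highestval$ (plus a single reserve interval in the revenue case) is not merely feasible but optimal -- that the two/three-piece form cannot be improved by interior ironing. This is exactly where concavity of $\dist$ is essential: it keeps the Lagrangian integrand monotone and keeps the region flattened by the boundary penalty connected and anchored at $\highestval$, whereas for irregular distributions the optimum genuinely can have interior ironed intervals (treated separately later in the paper). A secondary technical point is choosing the dual multiplier function for the family of constraints (iii) so that complementary slackness is exact throughout: tight against (iii) on $[0,\valp]$, and handed off to the budget and ironing multipliers on $[\valp,\highestval]$.
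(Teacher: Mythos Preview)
Your approach is correct and essentially matches the paper's: \Cref{thm: opt welfare} is cited rather than proved in the main text, but \Cref{app:winner} develops exactly this Lagrangian-relaxation argument (there applied to the winner-pays-bid variant, with the remark that the same framework derives \Cref{thm: opt welfare}), the only difference being that the paper works in quantile space via the \citet{AFHH-13} payoff-curve formalism---showing the Lagrangian price-posting curve is concave on $(0,1]$ with a discontinuity at $\quant=0$, hence ironed on a single top interval---while you work directly in value space with an explicit primal/dual guess-and-verify. Both arguments rest on the same observation you isolate: public-budget regularity makes the Lagrangian integrand $\val+\lagrange/\dens(\val)$ (resp.\ $\virt(\val)+\lagrange/\dens(\val)$) monotone, so the boundary penalty from the relaxed budget forces exactly one ironed interval anchored at $\highestval$.
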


%% To see why the optimal payments are given deterministically by the
%% payment identity, note that the realized payment must be feasible for
%% the budget and if it is then so is the deterministic payment equal to
%% the expectation of the realized payment.  Moreover, if the budget is
%% binding then the payment of the highest type must be deterministic.

%% \begin{example}\label{example: all-pay}
%% For two agents with private value drawn from uniform distribution
%% $U[0,1]$ and budget $\budget = \tfrac{1}{4}$, the welfare-optimal
%% mechanism has allocation rule and payment rule as follows,
%% $$
%% \alloc(\val) = 
%% \left\{
%% \begin{array}{ll}
%% \val \quad &\text{if } \val \leq \tfrac{1}{2}, \\
%% \tfrac{3}{4} &\text{otherwise;} 
%% \end{array}
%% \right.
%% \qquad\qquad
%% \price(\val) = 
%% \left\{
%% \begin{array}{ll}
%% \tfrac{\val^2}{2} \quad &\text{if } \val \leq \tfrac{1}{2}, \\
%% \budget &\text{otherwise.} 
%% \end{array}
%% \right.
%% $$
%% \end{example}

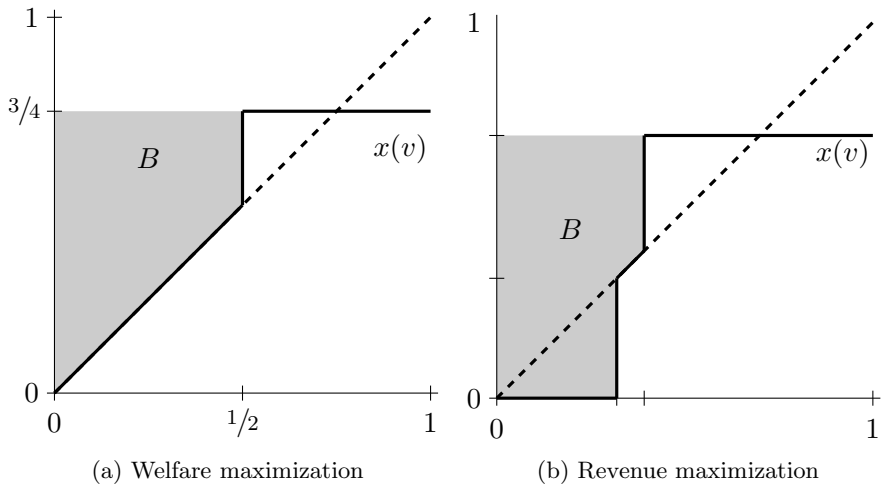
\begin{figure}
\centering
\subfloat[Welfare maximization]
{\begin{tikzpicture}[scale = 0.5]
\fill[gray!40!white] (0, 0) -- (5, 5) -- (5, 7.5) -- (0,7.5);

\draw (-0.2,0) -- (10.2, 0);
\draw (0, -0.2) -- (0, 10.2);
\draw (5, 0.2) -- (5, -0.2);
\draw (10, 0.2) -- (10, -0.2);
\draw (-0.2, 7.5) -- (0.2, 7.5);
\draw (-0.2, 10) -- (0.2, 10);

\begin{scope}[very thick]
\draw (0, 0) -- (5, 5);
\draw (5, 5) -- (5, 7.5);
\draw (5, 7.5) -- (10,7.5);

\draw [dashed] (0, 0) -- (10, 10);
\end{scope}

\draw (0, -0.8) node {$0$};
\draw (5, -0.8) node {$\sfrac{1}{2}$};
\draw (10, -0.8) node {$1$};

\draw (-0.6, 0) node {$0$};
\draw (-0.8, 7.5) node {$\sfrac{3}{4}$};
\draw (-0.6, 10) node {$1$};

\draw (2.5, 6.25) node {$\budget$};

\draw (9.2, 6.5) node {$\alloc(\val)$};
\end{tikzpicture}}
\subfloat[Revenue maximization]
{\begin{tikzpicture}[scale = 0.5]
\fill[gray!40!white] (0, 0) -- (3.19, 0) -- (3.19, 3.19) -- (3.92, 3.92) -- (3.92, 6.99) 
-- (0, 6.99);

\draw (-0.2,0) -- (10.2, 0);
\draw (0, -0.2) -- (0, 10.2);
\draw (3.19, 0.2) -- (3.19, -0.2);
\draw (3.92, 0.2) -- (3.92, -0.2);
\draw (10, 0.2) -- (10, -0.2);
\draw (-0.2, 3.19) -- (0.2, 3.19);
\draw (-0.2, 6.99) -- (0.2, 6.99);

\begin{scope}[very thick]
\draw (0, 0) -- (3.19, 0);
\draw (3.19, 0) -- (3.19, 3.19);
\draw (3.19, 3.19) -- (3.92, 3.92);
\draw (3.92, 3.92) -- (3.92, 6.99);
\draw (3.92, 6.99) -- (10, 6.99);

\draw [dashed] (0, 0) -- (10, 10);
\end{scope}

\draw (0, -0.8) node {$0$};
%\draw (5, -0.8) node {$\sfrac{1}{2}$};
\draw (10, -0.8) node {$1$};

\draw (-0.6, 0) node {$0$};
%\draw (-0.8, 7.5) node {$\sfrac{3}{4}$};
\draw (-0.6, 10) node {$1$};

\draw (1.96, 4.495) node {$\budget$};

\draw (9.2, 6.5) node {$\alloc(\val)$};
\end{tikzpicture}}
\caption{Depicted are the interim allocation rules of the
  welfare-optimal and revenue-optimal mechanisms for two agents with
  uniform values on $[0,1]$.  In each figure the highest-bid-wins
  allocation rule is depicted with a dashed line.}
\end{figure}

\begin{comment}
\begin{SCfigure}
\centering
\input{all-pay-figure}
\caption{
Depicted are the interim allocation rule for the all-pay auction in 
Example \ref{example: all-pay}.
The allocation rule of the two-agent all-pay auction without the budget constraint
is also depicted as the dashed line.
}
\end{SCfigure}
\end{comment}

For single-item environments, one possible implementation of
\autoref{thm: opt welfare} is the all-pay auction.  The all-pay
auction has a unique Bayes-Nash equilibrium which is identical to
outcome described in \autoref{thm: opt welfare} for the allocation
constraint given by the highest-bid-wins allocation rule.

\begin{definition}[all-pay auction]
The \emph{all-pay auction} is a mechanism $(\ballocs, \bprices)$ where $\ballocs(\cdot)$ allocates item to the agent with highest bid with tie broken at random and $\bprices(\cdot)$ charges each agent their bid,
i.e.\ $\bpricei[i](\bids) = \bidi[i]$.
\end{definition}

%% \begin{proposition}[\citealp{CH-13}]\label{lem:all-pay allocation}
%% For public-budget i.i.d.\ agents and the all-pay auction, there is a
%% unique Bayes-Nash equilibrium with interim allocation and payment
%% rule $(\alloc, \price)$ where $\alloc$ and $\price$ are the
%% same as highest-bid-wins but the top interval from $\val\primed$ to
%% $\highestval$ is ironed so that the payment satisfies $\price(\val) =
%% \budget$ for all $\val \in [\val\primed, \highestval]$.
%% \end{proposition}

\begin{theorem}[\citealp{mas-00}]\label{thm: all-pay regular}
For public-budget regular i.i.d.\ agents, the all-pay auction is
welfare optimal.
\end{theorem}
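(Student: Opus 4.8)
The plan is to verify that the unique Bayes-Nash equilibrium of the all-pay auction reproduces exactly the interim allocation rule characterized in \autoref{thm: opt welfare} for the highest-bid-wins constraint $\calloc(\val) = \dist^{n-1}(\val)$, and that its payments coincide with the payment identity, so that no feasible BIC mechanism can do better. First I would characterize the symmetric equilibrium of the all-pay auction directly. Guessing a symmetric, strictly increasing bid function $\strat(\cdot)$, an agent with value $\val$ who bids $\strat(\hat\val)$ wins with probability $\dist^{n-1}(\hat\val)$ and always pays $\strat(\hat\val)$, so her utility is $\val\,\dist^{n-1}(\hat\val) - \strat(\hat\val)$. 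The first-order condition at $\hat\val = \val$ gives $\strat'(\val) = \val\,\ddx[{\dist^{n-1}(\val)}]{\val}$, hence $\strat(\val) = \int_0^\val t\,d\,\dist^{n-1}(t)$, which is exactly the payment-identity payment for the interim allocation $\alloc(\val) = \dist^{n-1}(\val)$. The key point where regularity (concavity of $\dist$) enters is that this bid function must respect the budget: since $\strat$ is increasing, its maximum is $\strat(\highestval) = \int_0^\highestval t\,d\,\dist^{n-1}(t)$; when this exceeds $\budget$, high-value agents pool at a bid equal to $\budget$, and the allocation in the pooled region becomes the distribution-weighted average of $\dist^{n-1}$ over that region — which is precisely the ``ironing on $[\valp,\highestval]$'' described in \autoref{thm: opt welfare}. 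I would then cite \citet{CH-13} (invoked already in the preliminaries) for uniqueness of the equilibrium among symmetric distributions and symmetric auctions, so the equilibrium is well-defined and unique.

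The second half of the argument is the optimality comparison. By \autoref{thm: opt welfare}, the welfare-optimal single-agent mechanism subject to the highest-bid-wins interim constraint allocates according to $\calloc(\cdot) = \dist^{n-1}(\cdot)$ ironed on some top interval $[\valp,\highestval]$, with deterministic payments given by the payment identity. I would argue that the all-pay equilibrium outcome computed above matches this description exactly: the ironing threshold $\valp$ is pinned down by the binding budget constraint $\strat(\valp) = \budget$ together with the averaging formula, and one checks that the resulting interim allocation is feasible for the single-item constraint (it is a distribution over which agent is highest, hence realizable ex post by the highest-bid-wins rule with randomized tie-breaking in the pooled region). Since any single-item BIC IIR mechanism for budgeted agents induces an interim allocation that is interim-feasible under $\calloc = \dist^{n-1}$, \autoref{thm: opt welfare} says the all-pay outcome is welfare-optimal among all such mechanisms, and since its equilibrium description does not reference $\dist$ beyond what the agents themselves must know, it is prior-independent.

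The main obstacle is the budget-constrained region: I need to confirm that the ``pool at bid $\budget$'' behavior is genuinely an equilibrium (no profitable deviation either to a higher bid, which is infeasible past the budget, or to a lower separating bid, where the incentive comparison must be checked using concavity of $\dist$ to ensure the pooled allocation is still monotone and the deviation unprofitable) and that the induced ironed allocation is exactly the one \autoref{thm: opt welfare} prescribes rather than merely a feasible suboptimal one. Concavity of $\dist$ is what guarantees $\dist^{n-1}$ is ``regular enough'' that the optimal ironed region is a single top interval and that the all-pay equilibrium realizes it; without it the optimal mechanism could iron a middle interval (as in the DSIC analysis later in the paper) and the plain all-pay auction would no longer suffice. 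I would handle this by reducing to the known analysis of the all-pay auction with budgets in \citet{mas-00} / \citet{PV-14} for the structural claim, and limiting the new work to the verification that the equilibrium bid function integrates to the payment identity and caps correctly at the budget.
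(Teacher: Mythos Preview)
The paper does not give its own proof of this theorem; it is stated as a cited result from \citet{mas-00}, accompanied only by the one-line remark that ``the all-pay auction has a unique Bayes-Nash equilibrium which is identical to the outcome described in \autoref{thm: opt welfare} for the allocation constraint given by the highest-bid-wins allocation rule.'' Your proposal unpacks precisely this sentence---compute the symmetric BNE of the all-pay auction, observe that budget-capped bidding produces top-ironing of $\dist^{n-1}$, and invoke \autoref{thm: opt welfare} to conclude optimality---so your approach is correct and is exactly the argument the paper relies on implicitly.
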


\section{Welfare Approximation of the Clinching Auction}
\label{s:clinching}

%In the previous section, we looked at the all-pay auction 
%which is a prior-independent non-revelation mechanism 
%that is welfare-optimal under for public-budget regular agents 
%and 2-approximation for irregular agents. 
In this section, we study a prior-independent revelation mechanism
called the clinching auction in single-item environments.
\citet{DLN-08} gave the following formulation of the clinching auction
and characterized properties of its outcome.  See
Figure~\ref{f:clinching-auction}.

\begin{definition}[clinching auction]\label{def:clinching}
The \emph{clinching auction} maintains an allocation and price-clock starting from zero. The price-clock ascends continuously and the allocation and budget are adjusted as follows.
\begin{enumerate}
\item Agents whose values are less than price-clock are removed and their allocation is frozen.
\item The demand of any remaining agent is the remaining budget divided by the price-clock.
\item Each remaining agent clinches (and adds to their current allocation) an amount that corresponds to the largest fraction of their demand that can be satisfied when all other remaining agents are first given as much of their demand as possible.
\item The budget and allocation are updated to reflect the amount clinched in the previous step.
\end{enumerate}
\end{definition}

\begin{proposition}[\citealp{DLN-08}]
For public-budget agents, the clinching auction always allocates all
items, is ex-post IR, and is DSIC.
\end{proposition}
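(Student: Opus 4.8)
The plan is to derive all three properties from a few invariants maintained as the price clock ascends, so I would begin by fixing a clean description of a run of the clinching auction (\Cref{def:clinching}): track the clock $\price$, the residual supply $s(\price)$, the active set $\{i : \vali[i] \ge \price\}$, each active agent's cumulative payment $\pricei[i]$ (so her remaining budget is $\budget - \pricei[i]$), and her running allocation $\alloci[i]$. Two bookkeeping facts are immediate from the definition: a clinch of $\dd\alloci[i]$ units is paid for at the current clock price, so $\dd\pricei[i] = \price\,\dd\alloci[i]$; and $\dd s = -\sum_i \dd\alloci[i]$. I would then establish, by induction along the clock, the invariants: (i) $s(\price) \ge 0$, $\pricei[i] \le \budget$, and $\sum_i \alloci[i] \le 1$; (ii) each $\alloci[i]$ is non-decreasing in the clock, is frozen once the clock passes $\vali[i]$, and---fixing the other agents' reports---is non-decreasing in agent $i$'s own reported value; and (iii) the payment identity, which upon integrating $\dd\pricei[i] = \price\,\dd\alloci[i]$ by parts and using that $\alloci[i]$ freezes at the clock value $\price = \vali[i]$ reads $\pricei[i] = \vali[i]\alloci[i](\vali[i]) - \int_0^{\vali[i]}\alloci[i](t)\,\dd t$.

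Ex-post IR is then immediate: by (ii)--(iii) a truthful agent's ex-post utility equals $\vali[i]\alloci[i] - \pricei[i] = \int_0^{\vali[i]}\alloci[i](t)\,\dd t \ge 0$, and by (i) her payment never exceeds the budget, so this utility is finite rather than $-\infty$. DSIC then follows by the standard single-dimensional argument (as in the characterization of dominant-strategy equilibrium recalled in the preliminaries): since the only way an agent's reported value enters the run is the clock value at which she is removed, reporting $\widehat{\vali[i]}$ in place of $\vali[i]$ (with others truthful) makes her allocation and payment exactly those the dynamics produce with parameter $\widehat{\vali[i]}$, namely $\alloci[i](\widehat{\vali[i]};\valsmi)$ and $\pricei[i](\widehat{\vali[i]};\valsmi) = \widehat{\vali[i]}\alloci[i](\widehat{\vali[i]}) - \int_0^{\widehat{\vali[i]}}\alloci[i](t)\,\dd t$. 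Her payoff is therefore $(\vali[i] - \widehat{\vali[i]})\alloci[i](\widehat{\vali[i]}) + \int_0^{\widehat{\vali[i]}}\alloci[i](t)\,\dd t$, which by the monotonicity in (ii) is maximized at $\widehat{\vali[i]} = \vali[i]$; and because the mechanism's built-in demand cap keeps every payment at most $\budget$ regardless of the report, the $-\infty$ region of the utility function is never reached and the argument applies unconditionally.

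Finally, the clinching auction allocates all items, i.e.\ $s(\price) \to 0$ before the clock passes $\max_i \vali[i]$, after which no agent is active and the run ends. Since agents are removed exactly when the clock reaches their values, the run eventually narrows to a single active agent (the highest-valued one); the content of this step is that the clinching rule guarantees any residual supply to her and drives it to zero before she too is removed. I would approach this by tracking the residual supply together with the remaining budgets through the late stages of the run, arguing that whatever supply is ``guaranteed'' to a surviving agent by the clinching rule has in fact been clinched by her---gradually, at prices she can afford---so that nothing is left stranded. This is the main obstacle: it hinges on the precise form of the clinching rule and on controlling the coupled evolution of $s(\price)$ and the budgets, rather than on a soft argument. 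The secondary delicate point is the own-value monotonicity in (ii) used for DSIC---that keeping an agent active longer can only increase her total clinched amount, even though it also drains her budget faster.
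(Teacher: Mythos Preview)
The paper does not prove this proposition: it is stated as a cited result from \citet{DLN-08} and no proof is given. There is therefore nothing to compare your proposal against in the paper itself. Your sketch is a reasonable outline of how the original DLN-08 argument proceeds (invariants along the clock, the payment identity from integrating $\dd\pricei[i]=\price\,\dd\alloci[i]$, and Myerson-style monotonicity plus payment identity for DSIC); if you want to flesh it out fully you would consult that reference, since the paper treats the proposition as a black box.
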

\begin{lemma}[a special case of \citealp{DLN-08}]
\label{lem:CL}
In single-item environment, for public-budget agents with budget $\budget$ and value profile $\vals$, and let $k$ be the largest integer such that 
$$\val_{(k)} \geq \budget\cdot k\qquad \text{
where $\val_{(k)}$ is the $k$-th highest value in value profile $\vals$.}
$$
Then, the agents with highest $(k-1)$ values win with same probability greater or equal to $\frac{1}{k}$ and the agent with the $k$-th highest value wins with the remaining probability.
\end{lemma}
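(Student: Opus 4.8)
Since Lemma~\ref{lem:CL} is a specialization of the general clinching-auction analysis of \citet{DLN-08}, the plan is to instantiate the price-clock dynamics of \Cref{def:clinching} in the single-item, common-budget setting and read off the resulting allocation. Relabel the agents so that $\val_1 \ge \cdots \ge \val_n$. The first ingredient is a symmetry observation: because every remaining agent has the same budget $\budget$ and the clinching rule treats symmetric agents symmetrically, at each price $p$ all agents still in the auction share a common cumulative allocation $Y(p)$ and a common remaining budget $b(p)$ (an easy induction over clinching events, using that a removed agent's allocation freezes at the common value it held an instant before removal). Hence the whole run is summarized by the triple $(\#\text{remaining}, Y, b)$.

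Second, I would locate the onset of clinching. With $m$ remaining agents, each of budget $\budget$ and allocation $0$, the amount guaranteed to an agent after the other $m-1$ are served their full demand is $1-(m-1)\budget/p$, so clinching becomes possible only once the clock exceeds $(m-1)\budget$; since agents drop out exactly as the clock passes their values, clinching begins with a set of top-ranked agents of size exactly $k$, and agents ranked below $k$ leave with allocation $0$ — which already gives the last assertion. Depending on the value profile, clinching starts either continuously at price $(k-1)\budget$ or with a single discrete clinch at the moment the $(k{+}1)$-st agent is removed; both cases (and value ties, handled by a perturbation argument) leave the $k$ active agents symmetric with a common state $(Y,b)$.

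Third, I would analyze the symmetric phase with these $k$ agents. Once clinching is active each agent has clinched everything guaranteed to it, so the invariant $1 - kY(p) = (k-1)\,b(p)/p$ holds; combining it with the clinching-auction payment relation $\dd b = -p\,\dd Y$ yields a separable equation whose solution is $S(p):=1-kY(p) = S(p_0)\,(p_0/p)^{k}$, where $p_0$ is the onset price. The crux is the instant the clock reaches $\val_{(k)}$ and the $k$-th-highest agent is removed: the invariant says the residual supply there equals exactly $(k-1)$ times each surviving agent's demand, so each of the $k-1$ survivors clinches its entire demand and the item is completely allocated, with the high values $\val_{(1)},\dots,\val_{(k-1)}$ playing no further role. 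Writing $a:=Y(\val_{(k)})$, agents $1,\dots,k-1$ therefore win with common probability $q=(1-a)/(k-1)$ and agent $k$ wins with probability $a$; from $S(\val_{(k)})=1-ka\ge 0$ we get $a\le 1/k$, hence $q\ge 1/k$, and $(k-1)q+a=1$ shows agent $k$ receives exactly the remaining probability.

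The main obstacle I anticipate is the second step — cleanly characterizing the size of the active set at the onset of clinching (including the ``jump'' start and value ties) and matching it to the integer $k$ in the statement — together with the residual-supply identity in the third step, which is what forces the top $k-1$ agents to tie and the item to be sold in full precisely when agent $k$ exits; the ODE and the final inequality are then routine.
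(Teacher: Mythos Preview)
The paper does not prove this lemma; it is quoted as a special case of \citet{DLN-08}, so there is no in-paper argument to compare against. Your reconstruction via the price-clock dynamics is the natural route, and steps~1 and~3 are sound: the symmetry reduction is correct, and once $k'$ agents are clinching, the invariant $1-k'Y=(k'-1)b/p$ together with $\dd b=-p\,\dd Y$ gives $S(p)=S(p_0)(p_0/p)^{k'}$ and forces the top $k'-1$ agents to tie at $(1-Y(\val_{(k')}))/(k'-1)\ge 1/k'$ the instant agent $k'$ drops.

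The gap is in step~2. You assert that the active set at the onset of clinching has size exactly the lemma's $k$ (the largest integer with $\val_{(k)}\ge \budget k$), but this identification is not correct in general: with $m$ remaining agents the threshold is $(m-1)\budget$, so the onset size is the largest $j$ with $\val_{(j)}>(j-1)\budget$, which can strictly exceed the lemma's $k$. Concretely, take $\budget=1$ and values $(10,5,2.5,0.5)$. Here $\val_{(3)}=2.5<3$ gives $k=2$ in the lemma, yet clinching starts at price $2$ with three agents active; solving the dynamics, agent~$3$ receives $(1-(2/2.5)^3)/3\approx 0.163>0$ and agents~$1,2$ tie at $\approx 0.419<1/2$, contradicting the statement for $k=2$. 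So the obstacle you flagged in step~2 is real and in fact unresolvable for the $k$ written in the lemma --- the mismatch reflects an imprecision in the lemma's bookkeeping of $k$, not a flaw in your strategy. Your argument does establish the intended structural conclusion with $k$ replaced by the onset size, which is what is actually needed downstream in the proof of \Cref{thm:clinching auction e approx}.
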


We use the following approach to show that the clinching auction is an
$e$-approximation for public-budget regular agents.  We relax the
feasibility constraint to an ex ante constraint and show that the
optimal mechanism that sells to each agent with ex ante probability
$1/n$ simply posts a price (of exactly $\budget$ assuming that the
budget binds) for a chance to win the item (\Cref{lem:PO},
below).  This simple form of mechanism is closely approximated by the clinching auction which sells $k$ lotteries of $1/k$
probability (full details given subsequently).  A key property is that
in this clinching auction with lotteries, the budget does not bind with constant
probability.  The probability that the budget does not bind in the clinching auction with lotteries allows a
lower bound on the allocation probability in the clinching auction
which allows its welfare to be compared to the ex ante relaxation.

Consider the welfare-optimal auction. Since agents are symmetric, each
agent will win with ex ante probability exactly $\frac{1}{n}$ in the
welfare-optimal auction where $n$ is the number of agents.  We replace
the feasibility constraint that ex post allocation cannot allocate
more than one item (i.e. $\sum_{i\in[n]}\balloci(\vals) \leq 1$ for
all $\vals$) with a $\frac{1}{n}$ ex ante constraint that each agent cannot be
allocated more than $\frac{1}{n}$ in expectation
(i.e.\ $\expect[\val]{\alloc(\val)} \leq \sfrac{1}{n}$).  Ex ante optimal
mechanisms for agents with public budgets were proposed and studied by
\citet{AFHH-13}. 

\begin{figure}[t]
\centering
\begin{tikzpicture}[scale = 0.5]
\fill[gray!40!white] (0, 0) -- (8, 0) -- (8, 7) -- (0, 7);

\draw (4, 3.5) node {$\budget$};

\draw (-0.2,0) -- (15.2, 0);
\draw (0, -0.2) -- (0, 9.2);
\draw (8, 0.2) -- (8, -0.2);
\draw (15, 0.2) -- (15, -0.2);
\draw (-0.2, 7) -- (0.2, 7);
\draw (-0.2, 9) -- (0.2, 9);

\begin{scope}[very thick]
\draw [dashed] (0, 0) -- (8, 0);
\draw [dashed] (8, 0) -- (8, 7);
\draw [dashed] (8, 7) -- (15,7);

\draw [dotted] (0, 0) -- (8, 0);
\draw [dotted] (8, 0) -- (8, 2.575);
\draw [dotted] (8, 2.575) -- (15, 2.575); 

\draw (0, 0) .. controls (8, 2.575) .. (15,9.3); 
\end{scope}

\draw (0, -0.8) node {$0$};
\draw (8, -0.8) node {$\valp$};
\draw (15, -0.8) node {$\highestval$};

\draw (-0.6, 0) node {$0$};
\draw (-0.8, 7) node {$\sfrac{\budget}{\valp}$};
\draw (-0.6, 9) node {$1$};

\end{tikzpicture}
\caption{\label{f:clinching-vs-ex-ante} The allocation rules of the ex
  ante relaxation (dashed), an $1/e$-fraction of the ex ante
  relaxation (dotted), and the clinching auction with lotteries (solid) are
  depicted. The clinching auction with lotteries pointwise exceeds an $1/e$-fraction
  of the ex ante relaxation.}
\end{figure}
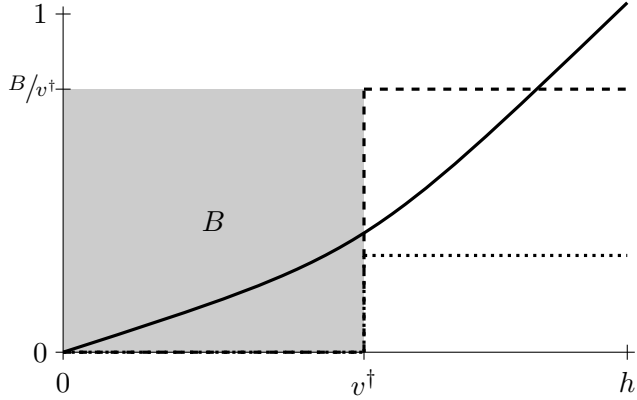

\begin{lemma}[\citealp{AFHH-13}]\label{lem:PO}
For public-budget regular i.i.d.\ agents with budget $\budget$, the
{\em ex ante welfare-optimal mechanism} is either:
\begin{enumerate}[i.]
\item\label{budget binds} Budget binds: 
Post the price $\budget$ for allocation probability $\frac{\budget}{  \val \primed} \leq 1$ with 
$\val \primed$ set to satisfy 
$\frac{1}{n} = \frac{\budget}{  \val\primed}(1 - F(  \val \primed))$. 
Values $\val \geq \val \primed$ select the lottery.

\item\label{allocation binds} Allocation probability binds: Post price $  \val\primed = F^{-1}(1 - \frac{1}{n})$
for allocation probability one.
\end{enumerate}
\end{lemma}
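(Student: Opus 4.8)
The plan is: set up the ex ante single-agent problem as a linear optimization over monotone interim allocation rules; pass to quantile space, where public-budget regularity becomes convexity of the value curve and concavity of the welfare curve; use a Jensen-type averaging to collapse any feasible rule to a single posted-price lottery; and finish with a two-parameter optimization that produces the two stated cases.

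First I would describe the feasible set. A single-agent mechanism is a monotone non-decreasing interim allocation $\alloc:[0,\highestval]\to[0,1]$ together with the payment identity $\price(\val)=\val\,\alloc(\val)-\int_0^\val\alloc(t)\,dt=\int_{[0,\val]}t\,d\alloc(t)$; interim IR holds automatically since $u(\val)=\int_0^\val\alloc(t)\,dt\ge0$. Because $\price(\cdot)$ is non-decreasing, the budget constraint $\price(\val)\le\budget$ for all $\val$ reduces to the single inequality $\price(\highestval)\le\budget$. So the ex ante problem is to maximize $\expect[\val]{\val\,\alloc(\val)}$ over monotone $\alloc$ with values in $[0,1]$, subject to $\expect[\val]{\alloc(\val)}\le\tfrac1n$ and $\int_{[0,\highestval]}t\,d\alloc(t)\le\budget$.

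Next I would substitute the quantile $q=1-F(\val)$, write $\val(q)=F^{-1}(1-q)$ and $y(q)=\alloc(\val(q))$ (non-increasing, $[0,1]$-valued), and encode $y$ by the nonnegative measure $\nu$ on $[0,1]$ with $y(q)=\nu([q,1])$, so that $\nu$ has total mass $\alloc(\highestval)\le1$ (any ``base allocation'' $y(1)$ appears as mass at $\theta=1$, where $\val(1)=0$). A change of variables and Fubini rewrite the three quantities as $\expect[\val]{\val\,\alloc}=\int W(\theta)\,d\nu(\theta)$, $\expect[\val]{\alloc}=\int\theta\,d\nu(\theta)$, and $\price(\highestval)=\int\val(\theta)\,d\nu(\theta)$, where $W(\theta)=\int_0^\theta\val(q)\,dq$. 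Here public-budget regularity enters decisively: $F$ concave makes $F^{-1}$ convex, hence $\val(\cdot)$ is convex and $W(\cdot)$ is concave on $[0,1]$.

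The main step — and where I expect the real work — is to show the optimal $\nu$ is a single point mass, i.e.\ the optimal mechanism posts one price for one lottery. Given any feasible $\nu$ with mass $A\in(0,1]$, set $\bar\theta=\tfrac1A\int\theta\,d\nu\in[0,1]$ and compare $\nu$ with $A\,\delta_{\bar\theta}$: the latter has the same ex ante usage $A\bar\theta=\int\theta\,d\nu$, money usage $A\,\val(\bar\theta)\le\int\val(\theta)\,d\nu\le\budget$ by convexity of $\val(\cdot)$, and welfare $A\,W(\bar\theta)\ge\int W(\theta)\,d\nu$ by concavity of $W(\cdot)$; so $A\,\delta_{\bar\theta}$ is feasible and weakly better. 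Hence it is without loss that $\nu^\star=A\,\delta_\theta$, which in value space is: post price $A\,\val(\theta)$ for a lottery with winning probability $A$, accepted by all values above $\val(\theta)$. It then remains to maximize $A\,W(\theta)$ over $A\in[0,1]$ and $\theta\in[0,1]$ subject to $A\theta\le\tfrac1n$ and $A\,\val(\theta)\le\budget$. Since $W\ge0$ we take $A=\min\{1,\tfrac{1/n}{\theta},\tfrac{\budget}{\val(\theta)}\}$, and monotonicity of $W(\theta)$, of $W(\theta)/\theta$, and of $\val(\theta)$ forces the ex ante constraint to bind at the optimum and splits into exactly two regimes. Either the allocation cap binds, $A=1$ with $\theta=\tfrac1n$ — feasible precisely when $\val(\tfrac1n)=F^{-1}(1-\tfrac1n)\le\budget$ — which is case~\ref{allocation binds}; or the money constraint binds, $A=\budget/\valp<1$ with $A\theta=\tfrac1n$, i.e.\ $\frac{\budget}{\valp}\bigl(1-F(\valp)\bigr)=\tfrac1n$ with $\valp=\val(\theta)$, which is case~\ref{budget binds}. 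Beyond routine calculus the only points needing care are the bookkeeping of the base allocation $y(1)$ and checking $A\le1$ in case~\ref{budget binds}, which holds because $v\mapsto\frac{\budget}{v}\bigl(1-F(v)\bigr)$ is strictly decreasing and equals $1-F(\budget)>\tfrac1n$ at $v=\budget$ whenever $\budget<F^{-1}(1-\tfrac1n)$.
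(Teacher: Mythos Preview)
Your argument is correct. The paper does not give its own proof of this lemma---it is quoted from \citet{AFHH-13}---but the surrounding machinery the paper relies on (and develops explicitly in \Cref{app:winner}) is the payoff-curve/Lagrangian framework: move the budget constraint into the objective with a multiplier $\lagrange$, observe that the resulting Lagrangian welfare is a linear objective in the sense of \Cref{d:revenue-linearity}, build the price-posting payoff curve, and read off the optimal ex ante mechanism from its concave hull. Regularity then shows no ironing is needed except possibly at the top, which for the ex ante problem collapses to a single posted-price lottery.

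Your route is genuinely different and more elementary. By encoding a monotone rule as a measure $\nu$ and writing welfare, ex ante probability, and payment as $\int W\,d\nu$, $\int\theta\,d\nu$, $\int \val\,d\nu$, you replace the Lagrangian/ironing apparatus with a single Jensen step: concavity of $W$ (which holds for any distribution, since $W''=\val'\le 0$) improves welfare when you collapse $\nu$ to $A\,\delta_{\bar\theta}$, while convexity of $\val$ (this is exactly where public-budget regularity is used) keeps the budget constraint satisfied. The remaining two-parameter optimization is routine and your case split is right, including the check that $A\le 1$ in the budget-binds case. What your approach buys is a self-contained argument that avoids payoff curves and Lagrangian duality entirely; what the paper's framework buys is generality---it handles any linear objective and any interim constraint uniformly, whereas your Jensen reduction is tailored to the ex ante constraint (a simplex-type constraint on $\int\theta\,d\nu$) and would not directly solve the interim problems of \Cref{thm: opt welfare} or \Cref{lem:opt-win}. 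One small expository point: your sentence ``hence $\val(\cdot)$ is convex and $W(\cdot)$ is concave'' slightly overstates the role of regularity---$W$ is concave for every distribution; regularity is needed only for the budget-side Jensen inequality.
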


We build the connection between the clinching auction and the ex ante
optimal mechanism by considering the an additional auction: the
clinching auction with lotteries $\CL$ which allocates $k$ lotteries
with winning probability $1/k$ per lottery, using the clinching
auction framework under the same public budget.  \Cref{lem:PO-CL}
below shows that by selecting an appropriate $k$, the probability that
an agent with value $\valp$ wins in the clinching auction with
lotteries $\CL$ is at least an $e$ fraction of the probability that
the agent (with value $\valp$) wins in the ex ante relaxation. See
\Cref{f:clinching-vs-ex-ante}.

\begin{lemma}\label{lem:PO-CL}
For public budget i.i.d.\ agents, at value $\valp$ defined in
\Cref{lem:PO}, 
there exists $k\in[n]$, 
such that the interim allocation of the clinching auction with lotteries $\allocCL(\valp)$ is
an $e$-approximation of the interim allocation of the ex ante optimal
mechanism $\allocPO(\valp)$, i.e.,
$
\allocCL(\valp) \geq \sfrac{1}{e}\cdot\allocPO(\valp).
$
\end{lemma}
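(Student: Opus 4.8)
The plan is to treat the two cases of \Cref{lem:PO} separately; in each case I choose the number of lotteries $k$ so that one lottery, worth $\sfrac{1}{k}$, is at least the ex ante allocation $\allocPO(\valp)$, and then argue that with probability at least $\sfrac{1}{e}$ the agent with value $\valp$ clinches (at least) a full lottery in the clinching auction with lotteries $\CL$. The $\sfrac{1}{e}$ should come out of a Poisson/Binomial lower‑tail probability whose worst case is a mean‑one Poisson, i.e.\ $e^{-1}$.

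\emph{Case~\ref{budget binds} (budget binds).} Here $\allocPO(\valp)=\budget/\valp\le 1$ and $1-F(\valp)=\valp/(n\budget)$, so the expected number of agents with value at least $\valp$ equals $m^\star:=n(1-F(\valp))=\valp/\budget\ge 1$. Choose $k=\lfloor m^\star\rfloor\in[n]$; then one lottery is worth $\sfrac{1}{k}\ge\budget/\valp=\allocPO(\valp)$. I would finish by isolating the event that the budget of the $\valp$-agent \emph{does not bind} in $\CL$: running the price clock up from zero, the lower agents leave first, and the $\valp$-agent has clinched an entire lottery by the time her budget could matter unless too many of the other $n-1$ agents still demand tickets; using concavity of $F$ to control the demand of the higher agents, this reduces to a lower‑tail estimate on how many of the others have sufficiently large values. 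Conditioned on that event she clinches the full $\sfrac{1}{k}\ge\allocPO(\valp)$, and the probability of the event is a Binomial lower‑tail probability whose worst case over $n$ and over concave $F$ equals $e^{-1}$. Combining, $\allocCL(\valp)\ge\sfrac{1}{e}\cdot\sfrac{1}{k}\ge\sfrac{1}{e}\cdot\allocPO(\valp)$; cf.\ \Cref{f:clinching-vs-ex-ante}.

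\emph{Case~\ref{allocation binds} (allocation probability binds).} Here $\allocPO(\valp)=1$, $\valp=F^{-1}(1-\sfrac{1}{n})$, and since this mechanism posts the price $\valp$ we have $\valp\le\budget$. Take $k=1$, so $\CL$ is the ordinary clinching auction $\CC$. Since the $\valp$-agent leaves the price clock at price $\valp\le\budget$, her budget never binds; she clinches the whole item exactly when all other agents have already dropped out, i.e.\ when every other value is below $\valp$. Hence $\allocCL(\valp)=\allocCC(\valp)=F(\valp)^{n-1}=(1-\sfrac{1}{n})^{n-1}\ge \sfrac{1}{e}=\sfrac{1}{e}\cdot\allocPO(\valp)$.

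The main obstacle is Case~\ref{budget binds}: pinning down exactly which realizations of the competing values let the $\valp$-agent clinch a full lottery before her budget binds — this requires the clinching dynamics of $\CL$, the $k$-lottery analogue of \Cref{lem:CL} — and then verifying that this event has probability at least $\sfrac{1}{e}$ uniformly in $n$, in the concave distribution $F$, and in the integer $k=\lfloor\valp/\budget\rfloor$; in particular one must reconcile the integrality of $k$ with the real number $\valp/\budget$ so that no slack is lost in the bound $\sfrac{1}{k}\ge\allocPO(\valp)$. I expect concavity of $F$ to enter precisely in bounding the number and demand of the agents above the relevant threshold, and the extremal instance to be two agents with uniform values, consistent with the constant $\sfrac{1}{e}$.
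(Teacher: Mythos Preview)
Your high-level plan and your Case~\ref{allocation binds} are correct; the latter is exactly the paper's argument specialized to $k=1$. The gap is in Case~\ref{budget binds}: the choice $k=\lfloor\valp/\budget\rfloor$ and the expectation that concavity must enter.

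The paper takes $k=\lceil k_0\rceil$ with $k_0=\valp/\budget$, i.e.\ the \emph{ceiling}, not the floor. This sacrifices your inequality (now $1/k\le\allocPO(\valp)$), but buys the decisive simplification: since $k\budget\ge k_0\budget=\valp$, the single event ``at most $k-1$ of the other $n-1$ agents exceed $\valp$'' simultaneously (a) puts the $\valp$-agent among the top~$k$ and (b) forces the $(k{+}1)$-st highest value to be at most $\valp\le k\budget$, so the per-lottery payment is at most $\budget$ and the budget does not bind in $\CL$. The probability of this event is a binomial tail depending only on $\dist(\valp)=1-k_0/n$; regularity is never used. The factor $k_0/k\le 1$ lost by taking the ceiling is then absorbed into the probability bound: one checks $\tfrac{k_0}{k}\cdot\prob{\text{at most }k-1\text{ of }n-1\text{ draws exceed }\valp}\ge 1/e$, with the minimum at $k_0=k=1$ and $n\to\infty$ (so the extremal instance is not two agents). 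This argument also makes the case split unnecessary, since Case~\ref{allocation binds} is simply $k_0=1$.

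Your floor choice cannot be rescued even with concavity. With $k=\lfloor k_0\rfloor$ one has $k\budget\le\valp$, so ``budget does not bind in $\CL$'' now requires the $k$-th highest competitor to lie below $k\budget$, which depends on $\dist(k\budget)$ rather than on the known $\dist(\valp)$. Take $k_0=3/2$ (hence $k=1$, $\valp=\tfrac{3}{2}\budget$) and let $\dist$ be linear on $[0,\valp]$ with $\dist(\valp)=1-\tfrac{3}{2n}$; this is concave and satisfies the Case~\ref{budget binds} identity. The event ``the $\valp$-agent clinches a full lottery'' is then ``all other values lie below $\budget$,'' whose probability is at most $\bigl(\tfrac{2}{3}\bigr)^{\,n-1}\to 0$, not $\ge 1/e$. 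The fix is precisely the ceiling trick above.
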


\begin{proof}
Denote the
notation $\val_{(j:m)}$ as the $j$-th order statistic among $m$ i.i.d.\ random variables from distribution $F$.

We denote the posted pricing in \Cref{lem:PO} as $\PO$. Let $k_0 =
1/\allocPO(\highestval)$ where $\allocPO(\highestval)$ is the interim
allocation at the highest value $\highestval$.  By the construction of
$\PO$, $F(\valp) = 1 - \tfrac{k_0}{n}$ and $\valp \leq \budget \cdot
k_0$ (equality holds when the budget binds in $\PO$).  Let $k = \lceil
k_0\rceil$ be the smallest integer which is greater or equal to $k_0$.
Consider the clinching auction $\CL$ which allocates $k$ lotteries
with winning probability $1/k$ per lottery, using the clinching
auction framework under public budget $\budget$.

First, fix an arbitrary agent and fix her value to be $\valp$, we
consider the following event $\event$: in $\CL$, this agent with value
$\valp$ is one of the highest $k$ valued agents and the budget does
not bind.  Recall that when the budget does not bind, the highest $k$
agents in $\CL$ each receive lotteries (with allocation probability
$1/k$) and pay the value of the $(k+1)$-st highest agent divided by
$k$ (i.e.\ $\val_{(k+1:n)}/k$).  The budget bids in $\CL$ if and only
if $\val_{(k+1:n)}/k \leq \budget$ and we can lower bound the lower
bound the probability of the event $\event$ as follows,
\begin{align*}
\prob{\event} 
%&=
%Pr[\text{the payment does not exceed $\budget$ and }  
%\text{agent $i$ with value $\val(k/n)$ is one of the highest $k$ agents}]
%\\
&=
\prob{(\val_{(k:n-1)} / k \leq B) \land (\val_{(k:n-1)} \leq \valp)}\\
&=
\prob{(v_{(k:n-1)} / k \leq \valp / k_0) \land (\val_{(k:n-1)} \leq \valp)} \\
&=
\prob{\val_{(k:n-1)} \leq \valp} \\
&=
\sum_{i = 0}^{k - 1} \tbinom{n - 1}{i}\left(\tfrac{k_0}{n}\right)^i\left(\tfrac{n - k_0}{n}\right)^{n - 1 - i}.\\
\intertext{
Above, the third line is derived from the second line using the definition of $k \geq k_0$.  Denote by $\alloc^{\event}$ and $\alloc^{\bar\event}$ the allocation
rule $\alloc$ conditioned on the events $\event$ and $\bar\event$,
respectively.
The interim allocation for $\CL$ at value $\valp$ can be lower bounded as follows.}
\allocCL (\valp) 
&= 
\allocCL^{\event}(\valp)\cdot \prob{\event} + 
\allocCL^{\bar \event}(\valp)\cdot \prob{\bar\event}
\\
&\geq
\allocCL^{\event}(\valp) \cdot \prob{\event} 
\\
%&=
%\allocVCGL^{\event}(\valp) \cdot \prob{\event}
%\\
&=
\tfrac{k_0}{k}\cdot \allocPO(\highestval) \cdot \prob{\event}\\
&\geq
\tfrac{1}{e}\cdot\allocPO(\highestval).
\end{align*}
The final inequality follows because the term $\tfrac{k_0}{k}\cdot
\prob{\event}$ achieves the minimum at $1/e$ when $k_0 = k = 1$ and $n$ goes to
infinity.  
\end{proof}

We now prove our main theorem about the approximation ratio for the
clinching auction.

\begin{theorem}\label{thm:clinching auction e approx}
For public-budget regular i.i.d.\ agents, the clinching auction is an $e$-approximation to the welfare-optimal mechanism.
\end{theorem}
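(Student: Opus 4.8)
The plan is to upper bound the welfare-optimal mechanism by the ex ante relaxation and then lower bound the clinching auction against it, using the clinching auction with $k$ lotteries $\CL$ of \Cref{lem:PO-CL} as a bridge. Recall that for any symmetric auction with interim allocation rule $\alloc(\cdot)$ the expected welfare equals $n\cdot\expect[\val]{\val\,\alloc(\val)}$, and that replacing the ex post feasibility constraint by the per-agent ex ante constraint $\expect[\val]{\alloc(\val)}\le\sfrac1n$ can only increase the optimal welfare; by \Cref{lem:PO} the welfare of the welfare-optimal mechanism is therefore at most $n\cdot\expect[\val]{\val\,\allocPO(\val)}$, where $\allocPO$ is the interim allocation of the ex ante optimal mechanism. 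So it suffices to establish two facts: (i) $\expect[\val]{\val\,\allocCL(\val)}\ge\sfrac1e\cdot\expect[\val]{\val\,\allocPO(\val)}$ for the auction $\CL$ produced by \Cref{lem:PO-CL}; and (ii) $\expect[\val]{\val\,\allocCC(\val)}\ge\expect[\val]{\val\,\allocCL(\val)}$, i.e.\ the single-item clinching auction is at least as efficient as the clinching auction with $k$ lotteries.

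For (i): by \Cref{lem:PO}, $\allocPO(\cdot)$ is the step function equal to $0$ on $[0,\valp)$ and to the constant $\allocPO(\highestval)=\min\{\sfrac{\budget}{\valp},1\}$ on $[\valp,\highestval]$. \Cref{lem:PO-CL} gives $\allocCL(\valp)\ge\sfrac1e\cdot\allocPO(\valp)$, and since the clinching auction with lotteries is DSIC its interim allocation $\allocCL(\cdot)$ is monotone non-decreasing; combining these with the step shape of $\allocPO$ gives $\allocCL(\val)\ge\sfrac1e\cdot\allocPO(\val)$ for every $\val$. Integrating this pointwise inequality against the non-negative measure $\val\,\dd\dist(\val)$ yields (i).

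For (ii): I would compare the two auctions profile by profile; both always allocate the whole item. On a value profile $\vals$, \Cref{lem:CL} says the single-item clinching auction splits the item (generically, uniformly) among the top $\tilde k=\tilde k(\vals)$ agents, where $\tilde k$ is the largest integer with $\val_{(\tilde k)}\ge\budget\cdot\tilde k$, so its sorted allocation vector is non-increasing with partial sums at least $\min\{\sfrac m{\tilde k},1\}$. The multi-unit analogue of \Cref{lem:CL}, also a consequence of \citet{DLN-08}, gives the clinching auction with $k$ lotteries the same shape but with $\tilde k$ replaced by the largest integer $\hat k=\hat k(\vals)$ with $\val_{(\hat k)}\ge\budget\cdot\sfrac{\hat k}k$; since this threshold is weaker by a factor $k$ we always have $\hat k\ge\tilde k$, so $\CL$ splits the item among a (weakly) longer prefix of the agents sorted by value, with partial sums $\min\{\sfrac m{\hat k},1\}\le\min\{\sfrac m{\tilde k},1\}$. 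Thus the clinching auction's sorted allocation vector majorizes that of $\CL$, and since the sorted values $\val_{(1)}\ge\val_{(2)}\ge\cdots$ are non-increasing, Abel summation gives $\sum_i\val_{(i)} b_i\ge\sum_i\val_{(i)} a_i$, where $b_i$ and $a_i$ are the winning probabilities of the $i$-th highest agent in the clinching auction and in $\CL$ respectively; taking expectations over $\vals$ gives (ii). Chaining (i) and (ii), the clinching auction's expected welfare is at least $\sfrac1e\cdot n\cdot\expect[\val]{\val\,\allocPO(\val)}$, hence at least $\sfrac1e$ times the optimal welfare.

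The main obstacle is step (ii): one must pin down the allocation structure of the clinching auction with $k$ lotteries (the multi-unit generalization of \Cref{lem:CL}) precisely enough to run the majorization, and in particular verify that $\hat k\ge\tilde k$ persists in the regime where the budget binds in $\CL$ and its allocation spreads below the top $k$ agents; the measure-zero boundary profiles, where the uniform split is replaced by a slightly more top-heavy one, must also be argued away. A pointwise comparison of the interim allocations would be cleaner but is false --- for the $k$ guaranteed by \Cref{lem:PO-CL} the single-item clinching auction can give a particular middle-value agent strictly less than $\CL$ does (it instead over-allocates the highest agents) --- so the argument genuinely needs the profile-wise majorization rather than a per-value inequality.
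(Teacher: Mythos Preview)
Your proof follows the paper's argument step for step: upper-bound the optimum by the ex ante relaxation (\Cref{lem:PO}), bridge through $\CL$ using \Cref{lem:PO-CL}, deduce the pointwise inequality $\allocCL(\val)\ge\tfrac1e\allocPO(\val)$ from monotonicity against the step function, and finish with a profile-by-profile welfare comparison between the single-item clinching auction and $\CL$. The paper is in fact just as terse on your step~(ii) as you feared --- it simply asserts that \Cref{lem:CL} ``implies that for every ex post value profile, the welfare of the clinching auction is at least that of the clinching auction with lotteries'' and moves on --- so your majorization sketch is exactly the right way to unpack that sentence, and your observation that a per-value interim comparison fails (forcing the ex-post argument) is a genuine point the paper does not make explicit. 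One caution on the details: your threshold $\val_{(\hat k)}\ge\budget\,\hat k/k$ does not agree with what the proof of \Cref{lem:PO-CL} says about when the budget binds in $\CL$ (the unit-demand lottery price is $\val_{(k+1)}/k$, so the budget binds iff $\val_{(k+1)}>k\budget$); the multi-unit clinching structure is not a clean rescaling of \Cref{lem:CL}, which is precisely the difficulty you flag in your last paragraph.
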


\begin{proof}
By \Cref{lem:PO} the interim allocation rule of the ex ante optimal
mechanism is a step function that steps at value $\valp$.  By
\Cref{lem:PO-CL}, at value $\valp$, the allocation rule of the
clinching auction with lotteries is an $e$-approximation to that of
the ex ante optimal mechanism.  The allocation rule of the clinching
auction with lotteries is monotone, so its allocation rule is an
$e$-approximation to that of the ex ante optimal mechanism at every
value.  Consequently, the expected welfare of the clinching auction
with lotteries is at least an $e$-approximation to that of the ex ante
relaxation.  See \autoref{f:clinching-vs-ex-ante}.

%Finally, since the clinching auction with lotteries irons agents ex
%ante, compared to the clinching auction which allocates the single
%item with no ex ante ironing, 
Finally, \Cref{lem:CL} implies that for every ex post value profile,
the  welfare of
the clinching auction is at least that of the clinching
auction with lotteries.
\end{proof}

For public-budget regular i.i.d.\ agents, the all-pay auction is optimal while the clinching auction is not, since the budget binds for more value profiles in the clinching auction than in the all-pay auction. 
Based on this, we give a $\clinchinglowerbound$ lower bound of the approximation ratio for the clinching auction and leave the actual approximation ratio as an open problem.
\begin{lemma}
There exists the instance of public-budget regular agents where the clinching auction is a $\clinchinglowerbound$-approximation of the welfare-optimal mechanism.
\end{lemma}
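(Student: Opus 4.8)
The plan is to exhibit a concrete instance of i.i.d.\ public-budget regular agents --- $n \geq 3$ agents with values drawn from a suitable concave distribution (for concreteness, uniform on $[0,1]$) and a carefully tuned common budget $\budget$ --- and to bound directly the ratio of the welfare of the welfare-optimal mechanism to that of the clinching auction. (At least three agents are needed: for $n = 2$ one checks that the clinching auction already coincides with the optimal mechanism for every concave distribution.) The two quantities to compute are the benchmark, the welfare of the all-pay auction, which is welfare optimal by \Cref{thm: all-pay regular}, and the welfare of the clinching auction; the instance parameters are then chosen to maximize their ratio.

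For the benchmark, by \Cref{thm: opt welfare,thm: all-pay regular} the all-pay equilibrium is the highest-bid-wins allocation rule $\calloc(\val) = \dist^{n-1}(\val)$ ironed on an interval $[\valp,\highestval]$, where $\valp$ is fixed by the requirement that the top type's payment equals $\budget$; for the uniform distribution, the payment identity together with $\int_0^1\alloc(v)\,dv = \sfrac{1}{n}$ gives $\valp\bigl(1 + \valp + \cdots + \valp^{n-2}\bigr) = n\budget$. Ex post, the ironed (flat) part of the interim rule corresponds to all agents with value at least $\valp$ bidding the same amount and splitting the item uniformly, while among agents with value below $\valp$ the highest-value agent wins; integrating the resulting welfare against the order statistics of $n$ uniform draws yields a closed form for the optimal welfare as a function of $\valp$ (equivalently $\budget$).

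For the clinching auction, \Cref{lem:CL} says that on a value profile $\vals$, with $k$ the largest integer satisfying $\val_{(k)} \geq \budget\, k$, the top $k-1$ agents each win with a common probability $q_k \in [\sfrac{1}{k}, \sfrac{1}{k-1}]$, the $k$-th highest agent wins with the remaining probability $1 - (k-1)q_k$, and all lower agents win with probability zero; determining or bounding $q_k$ from the ascending-price dynamics then lets us write the clinching welfare as an integral over the same order statistics. The crucial observation is that for $\budget$ small enough one has $2\budget < \valp$, so on the positive-measure event $\val_{(2)} \in [2\budget, \valp)$ the clinching budget binds (hence $k \geq 2$ and the item is randomized) while fewer than two agents have value above $\valp$, so the all-pay auction gives the item to the highest agent deterministically --- the clinching auction is strictly worse there. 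Since the all-pay auction is welfare optimal, the clinching welfare never exceeds the optimum, so it remains to evaluate both integrals and optimize over $n$ and $\budget$ to certify that the ratio is at least $\clinchinglowerbound$.

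The main obstacle is pinning down the clinching winning probabilities $q_k$ for $k \geq 2$: these require unwinding the coupled price/allocation dynamics of the set of tied high-value agents (each such agent's allocation is frozen once the price clock passes its value, so the ``interior'' winners end with strictly more than the lowest winner). A secondary difficulty is the bookkeeping on the complementary event $\val_{(2)} \geq \valp$, where both mechanisms randomize but by different rules --- the all-pay auction uniformly among all agents above $\valp$, the clinching auction according to the order-statistic thresholds $\val_{(j)} \geq \budget\, j$ --- so one must check that the clinching auction's loss on the event $\val_{(2)} \in [2\budget, \valp)$ survives; the optimality of the all-pay auction guarantees the sign of the net difference, but the quantitative bound $\clinchinglowerbound$ has to be extracted from the explicit (elementary, polynomial) integrals for a numerically chosen $n$ and $\budget$.
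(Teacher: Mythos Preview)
Your central premise is wrong: for $n=2$ the clinching auction does \emph{not} coincide with the welfare-optimal (all-pay) mechanism. With two agents, uniform values on $[0,\highestval]$, and budget $\budget$, the all-pay auction irons only on $[2\budget,\highestval]$ (the payment identity gives $\price(\highestval)=\valp/2$ after ironing from $\valp$, so $\valp=2\budget$), whereas the clinching auction already randomizes whenever both values exceed $\budget$ (see \Cref{lem:CL} and \Cref{f:clinching-auction}). Concretely, the interim allocation rules are
\[
\alloc_{\text{all-pay}}(\val)=\begin{cases}\val/\highestval & \val\le 2\budget,\\ (\highestval+2\budget)/(2\highestval) & \val>2\budget,\end{cases}
\qquad
\alloc_{\text{clinch}}(\val)=\begin{cases}\val/\highestval & \val\le \budget,\\ (\highestval+2\budget)/(2\highestval)-\budget^2/(2\val^2) & \val>\budget,\end{cases}
\]
which differ on $(\budget,\highestval]$. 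So your parenthetical ``one checks that the clinching auction already coincides with the optimal mechanism'' is false, and the restriction to $n\ge 3$ is unnecessary.

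This matters because the paper's proof is exactly the $n=2$ computation you ruled out: take $\budget=1$, integrate the two interim allocations above against the uniform density to get closed-form welfares in $\highestval$, and optimize the ratio (at $\highestval\approx 4.04$) to obtain $\clinchinglowerbound$. By moving to $n\ge 3$ you have replaced a two-line calculation with a genuinely hard one --- you yourself flag that pinning down the clinching probabilities $q_k$ for $k\ge 2$ requires unwinding the coupled dynamics --- and your sketch does not carry that out or produce a numerical bound. The fix is simply to drop the $n\ge 3$ assumption and do the explicit two-agent computation.
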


\begin{proof}
Consider a simple setting: there are 2 public-budget regular agents with value drawn 
uniformly from $[0, \highestval]$ and the budget $\budget = 1$.

By \Cref{thm: all-pay regular}, the all-pay auction is welfare-optimal for
public-budget regular agents.
The interim allocation rule of it is 
$\alloc(\val) = \frac{\val}{\highestval}$ if $\val \leq 2$ 
and 
$\alloc(\val) = \frac{\highestval+2}{2\highestval}$ otherwise. 
The expected welfare of all-pay auction is $(3\highestval^3+6\highestval^2-12\highestval+8)/6\highestval^2$.  

The interim allocation rule of the clinching auction is 
$\alloc(\val) = \frac{\val}{\highestval}$ if $\val \leq 1$ and 
$\alloc(\val) = \frac{\highestval+2}{2\highestval} - \frac{1}{2\val^2}$ otherwise. 
The expected welfare of the clinching auction is 
$(3\highestval^3+6\highestval^2-3\highestval-6\highestval\ln \highestval - 2)/6\highestval^2$. 

By setting $\highestval = 4.04$, %this value of $\highestval$
it optimizes the ratio as $\clinchinglowerbound$. 
\end{proof}

\section{Bayesian Optimal DSIC Mechanism}
\label{s:DSIC}
In \Cref{thm: all-pay regular}, the all-pay auction is
welfare-optimal under public-budget regular distribution.  Hence,
applying the revelation principle to the all-pay auction, it produces
a Bayesian optimal revelation mechanism.  This mechanism is BIC but
not DSIC.  In this section, we characterize the optimal DSIC mechanism
for two agents with uniformly distributed values.  We obtain a
lower bound on its approximation ratio with the BIC optimal mechanism.

\begin{figure}
\centering
\subfloat
%[The allocation rule $\alloc_1(\val_1, \val_2)$ in the middle-ironed clinching auction]
[The middle-ironed clinching auction]
{
\label{f:middle-ironed-clinching-auction}
\begin{tikzpicture}[scale = 0.5]

%\fill[gray!60!white] (0, 0) -- (12, 0) -- (12, 4) -- (4, 4) -- (4, 12) -- (0, 12);
%\fill[gray!20!white] (4, 4) -- (8, 4) -- (8, 8) -- (4, 8);
%\fill[gray!40!white] (8, 8) -- (12, 8) -- (12, 12) -- (8, 12);

\draw[->] (-0.2,0) -- (12.5, 0);
\draw[->] (0, -0.2) -- (0, 12.5);
\draw (3, 0.2) -- (3, -0.2);
\draw (6, 0.2) -- (6, -0.2);
\draw (12, 0.2) -- (12, -0.2);
\draw (0.2, 3) -- (-0.2, 3);
\draw (-0.2, 6) -- (0.2, 6);
\draw (-0.2, 12) -- (0.2, 12);

\draw (0, 12) -- (12, 12);
\draw (12, 12) -- (12, 0);
\draw (3, 3) -- (3, 12);
\draw (3, 3) -- (12, 3);
\draw (3, 6) -- (12, 6);
\draw (6, 3) -- (6, 12);
\draw (0,0) -- (3, 3);
\draw (6, 6) -- (12, 12);

\draw (0, -1.2) -- (0, -1.5);
\draw (3, -1.2) -- (3, -1.5);
\draw (6, -1.2) -- (6, -1.5);
\draw (12, -1.2) -- (12, -1.5);
\draw (0, -1.5) -- (1.2, -1.5);
\draw (1.8, -1.5) -- (3, -1.5);

\draw (0, -1.5) -- (1.1, -1.5);
\draw (1.9, -1.5) -- (3, -1.5);
\draw (3, -1.5) -- (4.1, -1.5);
\draw (4.9, -1.5) -- (6, -1.5);
\draw (6, -1.5) -- (8.6, -1.5);
\draw (9.4, -1.5) -- (12, -1.5);

\draw (0, -1.2) -- (0, -1.5);
\draw (3, -1.2) -- (3, -1.5);
\draw (0, -1.5) -- (1.2, -1.5);
\draw (1.8, -1.5) -- (3, -1.5);

\draw (0, -0.8) node {$0$};
\draw (3, -0.8) node {$\valp$};
\draw (6, -0.8) node {$\valdp$};
\draw (12, -0.8) node {$\highestval$};
\draw (13, -0.1) node {$\val_1$};

\draw (1.5, -1.5) node {$\lvalue$};
\draw (4.5, -1.5) node {$\mvalue$};
\draw (9, -1.5) node {$\hvalue$};

\draw (-1, 0) -- (-1.3, 0);
\draw (-1, 3) -- (-1.3, 3);
\draw (-1, 6) -- (-1.3, 6);
\draw (-1, 12) -- (-1.3, 12);

\draw (-1.3, 0) -- (-1.3, 1.1);
\draw (-1.3, 1.9) -- (-1.3, 3);
\draw (-1.3, 3) -- (-1.3, 4.1);
\draw (-1.3, 4.9) -- (-1.3, 6);
\draw (-1.3, 6) -- (-1.3, 8.6);
\draw (-1.3, 9.4) -- (-1.3, 12);

\draw (-1.3, 1.5) node {$\lvalue$};
\draw (-1.3, 4.5) node {$\mvalue$};
\draw (-1.3, 9) node {$\hvalue$};

\draw (-0.6, 0) node {$0$};
\draw (-0.6, 3) node {$\valp$};
\draw (-0.6, 6) node {$\valdp$};
\draw (-0.6, 12) node {$\highestval$};
\draw (0, 12.8) node {$\val_2$};

\draw (4.5, 1.5) node {\LARGE $1$};
%\draw (2, 6) node {\huge $0$};
\draw (1.5, 4.5) node {\LARGE $0$};
\draw (4.5, 4.5) node {\LARGE $\sfrac{1}{2}$};
\draw (9, 4.5) node {\LARGE $1$};
\draw (4.5, 9) node {\LARGE $0$};

%\draw(10,9) node {$\tfrac{1}{2} + \tfrac{\valp\valdp}{2\val_2^2}$};
\draw(9.9,7.3) node {\Large$\tfrac{1}{2} + \tfrac{\valp\valdp}{2\val_2^2}$};
\draw(8.4,10.7) node {\Large$\tfrac{1}{2} - \tfrac{\valp\valdp}{2\val_1^2}$};

\end{tikzpicture}
}
\subfloat
%[The allocation rule $\alloc(\val_1, \val_2)$ in the regular clinching auction]
[The clinching auction]
{
\label{f:clinching-auction}
\begin{tikzpicture}[scale = 0.5]

%\fill[gray!60!white] (0, 0) -- (12, 0) -- (12, 4) -- (4, 4) -- (4, 12) -- (0, 12);
%\fill[gray!20!white] (4, 4) -- (8, 4) -- (8, 8) -- (4, 8);
%\fill[gray!40!white] (8, 8) -- (12, 8) -- (12, 12) -- (8, 12);

\draw[->] (-0.2,0) -- (12.5, 0);
\draw[->] (0, -0.2) -- (0, 12.5);
\draw (4.5, 0.2) -- (4.5, -0.2);
\draw (12, 0.2) -- (12, -0.2);
\draw (0.2, 4.5) -- (-0.2, 4.5);
\draw (-0.2, 12) -- (0.2, 12);

\draw (0, 12) -- (12, 12);
\draw (12, 12) -- (12, 0);
\draw (4.5, 4.5) -- (4.5, 12);
\draw (4.5, 4.5) -- (12, 4.5);
\draw (0, 0) -- (12, 12);

\draw (0, -0.8) node {$0$};
\draw (4.5, -0.8) node {$\budget$};
\draw (12, -0.8) node {$\highestval$};
\draw (13, -0.1) node {$\val_1$};

\draw (-0.6, 0) node {$0$};
\draw (-0.6, 4.5) node {$\budget$};
\draw (-0.6, 12) node {$\highestval$};
\draw (0, 12.8) node {$\val_2$};

\draw (8, 2.25) node {\LARGE $1$};
\draw (2.25, 8) node {\LARGE $0$};

\draw(9.7,5.9) node {\LARGE $\tfrac{1}{2} + \tfrac{B^2}{2\val_2^2}$};
\draw(6.9,10.4) node {\LARGE $\tfrac{1}{2} - \tfrac{B^2}{2\val_1^2}$};

\end{tikzpicture}
}
\caption{ The comparison of the allocation rule
  $\alloc_1(\val_1,\val_2)$ for the middle-ironed clinching auction
  and the clinching auction.  In the middle-ironed clinching auction,
  for the values in interval $\mvalue$ can be thought as ``ironed'',
  i.e.\ an agent receives the same outcome for any value $\val \in \mvalue$.}
\end{figure}
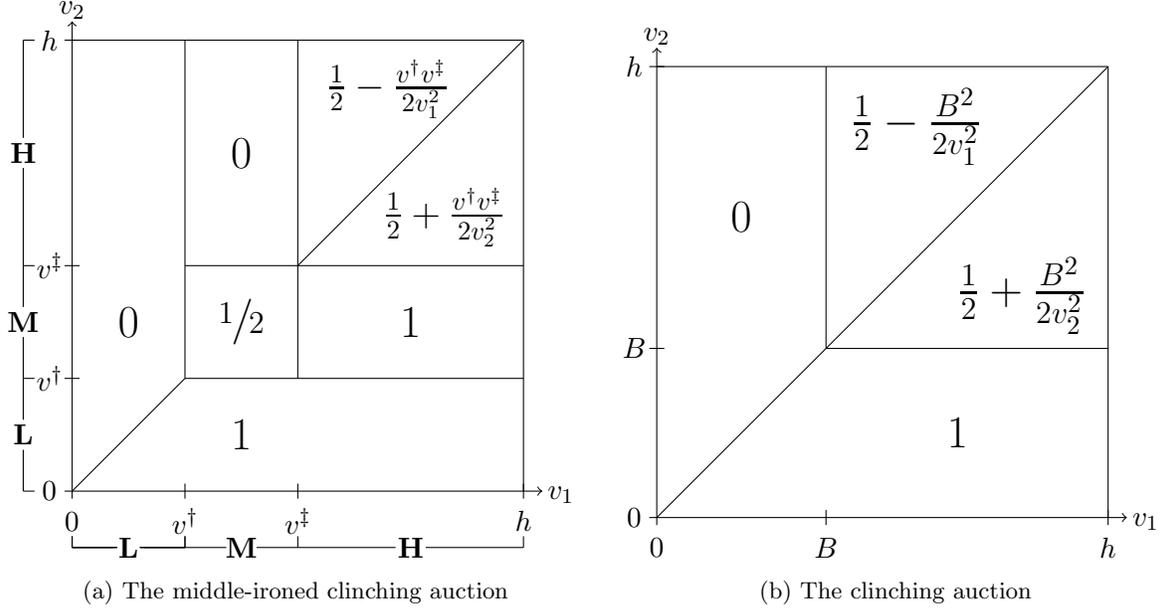

We first introduce the middle-ironed clinching auction (for two agents). 
\begin{definition}
\label{d:middle-ironed-clinching-auction}
The two-agent {\em middle-ironed clinching auction} is parameterized
by $\valp \leq \budget$ and $\valdp = 2\budget - \valp$ and its
outcome is highest-bid-wins on values less that $\valp$, a fair
lottery on values in $[\valp,\valdp]$, and the clinching auction on
values exceeding $\valdp$; a precise formulation for two-agents is
given in Figure~\ref{f:middle-ironed-clinching-auction} and a general
formulation is given in \autoref{app:price-jump}.
\end{definition}

For two-agents case, the middle-ironed clinching auction allocates the
item efficiently except for value profiles in $\mvalue\mvalue$ (both
agents with values in $\mvalue$) or $\hvalue\hvalue$ (both agents with
values in $\hvalue$).  For the value profile in $\mvalue\mvalue$, it
randomly allocates the item to one of the agent with probability
$\tfrac{1}{2}$ with payment $\tfrac{\valp}{2}$.  For the value profile
in $\hvalue\hvalue$, it allocates the item such that the budget binds
for the agent with higher value and the allocation rule depends on the
lower value only.  Figure~\ref{f:clinching-auction} depicts the
allocation rule of the clinching auction for comparison.  The
middle-ironed clinching auction can be implemented with an ascending
price via a generalization of the clinching auction that allows for
price jumps which we develop in Appendix~\ref{app:price-jump} (this
generalization is non-trivial).

We will show that by selecting the proper thresholds $\valp$ and
$\valdp$, the middle-ironed clinching auction is the Bayesian optimal
DSIC mechanism for two agents with uniformly distributed values.  An
intuition behind the optimality of the middle-ironed clinching auction
is as follows: \citet{DLN-08} show that for two public budget agents,
the clinching auction is the only Pareto optimal (i.e.\ there is no
outcome which is weakly better for all agents and strictly better for
one agent) and DSIC auction.  Moreover, after the price increases past
the point where the budget binds, a differential equation governs the
allocation of any DSIC mechanism.  Our goal is to optimize expected
welfare rather than satisfy Pareto optimality.  Sacrificing welfare
for lower-valued agents by ironing can delay the budget from binding
and enable greater welfare from higher-valued agents.  From our proof
of optimality, it is sufficient to only iron one region in the middle
of value space.

\begin{theorem}\label{thm: middle-ironed clinching}
For two public-budget agents with budget $\budget$ and value uniformly
drawn from $[0,\highestval]$, Bayesian optimal DSIC mechanism is the
middle-ironed clinching auction with some thresholds $\valp$ and
$\valdp$.
\end{theorem}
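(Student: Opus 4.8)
The plan is to formulate the search for the Bayesian optimal DSIC mechanism as a linear program over allocation rules and to certify the optimality of the middle-ironed clinching auction by producing a matching dual solution, following the dual-certificate approach of \citet{PV-14} and \citet{DW-17}.

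First I would reduce the problem to its allocation rule. By the DSE characterization in \Cref{s:prelim}, a DSIC mechanism for two agents is specified by a symmetric allocation rule $\alloc_1(\val_1,\val_2)$ that is monotone non-decreasing in the own value $\val_1$ and ex-post feasible, $\alloc_1(\val_1,\val_2)+\alloc_2(\val_1,\val_2)\le 1$, with payments pinned down by the payment identity $\price_1(\val_1,\val_2)=\val_1\alloc_1(\val_1,\val_2)-\int_0^{\val_1}\alloc_1(t,\val_2)\,dt$. Two simplifications make this tractable: ex-post IR is automatic since an agent's utility at truth equals $\int_0^{\val_1}\alloc_1(t,\val_2)\,dt\ge 0$; and since $\price_1(\cdot,\val_2)$ is non-decreasing whenever $\alloc_1(\cdot,\val_2)$ is, the budget constraint $\price_1\le\budget$ collapses to the single inequality at the top type, $\highestval\,\alloc_1(\highestval,\val_2)-\int_0^{\highestval}\alloc_1(t,\val_2)\,dt\le\budget$, for every $\val_2$ (and symmetrically for agent $2$). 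The objective $\expect[\vals]{\val_1\alloc_1(\vals)+\val_2\alloc_2(\vals)}$ and every constraint are linear in $\alloc$, so this is a linear program; I would either discretize the values into levels $\val\super{1}<\dots<\val\super{\numtypes}$ and take a limit, or work with the continuum program directly.

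Next I would guess the primal and the dual. Partition $[0,\highestval]$ into $\lvalue=[0,\valp]$, $\mvalue=[\valp,\valdp]$, and $\hvalue=[\valdp,\highestval]$ with $\valdp=2\budget-\valp$; the candidate primal is the middle-ironed clinching auction, i.e.\ highest-bid-wins on $\lvalue$, a fair coin on $\mvalue$, and the clinching allocation on $\hvalue$, with the cross-region cases as in \Cref{f:middle-ironed-clinching-auction}. The identity $\valdp=2\budget-\valp$ is exactly what is needed for primal feasibility: it makes an agent's payment equal $\tfrac{\valp+\valdp}{2}=\budget$ precisely when her value is $\highestval$ and the opponent's value lies in $\mvalue$, and at most $\budget$ on all other profiles. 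For the dual I would dualize the feasibility and budget constraints, with the budget multiplier $\lambdas$ supported only on the profiles where the candidate's budget binds, as complementary slackness requires, leaving monotonicity to be enforced by an ironing argument; substituting the payment identity and integrating by parts turns the Lagrangian into a pointwise virtual-welfare objective $\int\!\!\int[\psi_1(\vals)\alloc_1(\vals)+\psi_2(\vals)\alloc_2(\vals)]$ plus a constant. I would then choose the multipliers so that (i) on $\mvalue$ the virtual value $\psi_1(\cdot,\val_2)$ is constant in the own value — which is what makes the flat ironed allocation on $\mvalue$ pointwise optimal even though the underlying virtual value is not monotone there; (ii) on $\hvalue$ the feasibility- and monotonicity-constrained pointwise maximizer traces out exactly the clinching differential equation; and (iii) complementary slackness holds for feasibility as well, the item being fully allocated wherever its multiplier is positive. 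Weak duality then shows that no DSIC mechanism beats the middle-ironed clinching auction, and the matching objective values give optimality; the threshold $\valp$ is pinned down by the leftover dual-feasibility (smooth-pasting) condition at the $\mvalue$/$\hvalue$ boundary, which is a single equation in $\valp$.

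The main obstacle is the guess-and-verify of the dual: identifying the multiplier supported on the budget-binding profiles and checking that the middle-ironed clinching auction — with its delicate clinching-ODE behavior on $\hvalue\hvalue$ and its ironed behavior on $\mvalue\mvalue$ — genuinely maximizes the resulting virtual welfare among \emph{monotone} feasible allocation rules. This is where the two-dimensional analogue of Myerson ironing has to be executed, and where one rules out that ironing more than one region, or ironing somewhere other than the middle, could do better; it is also where the claim that it suffices to iron a single middle region gets proved. Once the multipliers are fixed, verifying primal feasibility, dual feasibility, and complementary slackness is a routine computation with the uniform distribution.
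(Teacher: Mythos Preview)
Your approach is essentially the paper's: write the DSIC problem as a linear program in the allocation rule, guess the middle-ironed clinching auction as the primal, and certify it with a dual solution satisfying complementary slackness. The reductions you make (payment identity, budget constraint collapsing to the top type, symmetry) and the region decomposition $\lvalue/\mvalue/\hvalue$ match the paper exactly, as does your identification of the main obstacle as the guess-and-verify of the dual.

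Two differences in execution are worth flagging. First, the paper does not work directly in the continuum: it discretizes the value space to $\{1,\ldots,\highestval\}$, constructs the dual variables $(\Lambdas,\betas,\mus)$ explicitly region by region, proves feasibility of the dual for some choice of $\valp,\valdp$ via an intermediate-value (``window'') argument on a scalar function $\mt(\cdot)$, and only then passes to the continuous limit. Your plan to ``work with the continuum program directly'' runs into the issue that the budget constraint is a boundary condition at $\val_1=\highestval$, so the term $\Lambda(\val_2)\,\highestval\,\alloc(\highestval,\val_2)$ in the Lagrangian does not integrate by parts into a density-level virtual value; it lives on a measure-zero set. The paper's discrete-then-limit route sidesteps this. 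Second, rather than ``leaving monotonicity to be enforced by an ironing argument,'' the paper relaxes monotonicity to the single constraint $\alloc(\val_1,\val_2)\le\alloc(\highestval,\val_2)$ and dualizes it explicitly with multipliers $\mus$; these $\mus$ are nonzero precisely on the profiles where the allocation is flat (in $\hvalue$ against a lower-valued opponent), which is how the clinching-ODE region is certified. Your ironing-in-the-objective plan is morally equivalent but would require you to develop a two-dimensional ironing calculus; the paper's explicit $\mus$ make the bookkeeping concrete. Finally, your claim that $\valp$ is ``pinned down by a single equation'' is a bit optimistic: in the discrete program there are four groups of residual dual constraints with distinct pivotal inequalities, and existence of a feasible $\valp$ requires a short separate argument rather than solving one equation.
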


The approach of the proof is to write down our
problem as a linear program (primal), assume the middle-ironed
clinching auction to be the solution, and then construct the dual
program with a dual solution which witnesses the optimality of the
primal solution by complementary slackness.  This approach is
reminiscent of that of \citet{PV-14} and \citet{DW-17}; however, our
multi-agent DSIC constrained program presents novel challenges and for
this reason we only solve the problem of two agents and uniform
distributions.

We first solve a discrete version of the problem. 
Then, we solve the continuous version as the limit from the discrete version. 
Consider the value distribution with finite value space $[\highestval] = \{1, 2,\dots, \highestval\}$ with equal probability each. 
We begin by writing down the optimization program for welfare maximization among all possible DSIC mechanism.

$$
\begin{array}{ll}
\sup\limits_{(\allocs,\prices)} 
&\sum\limits_{\val_1, \val_2 \in [\highestval]}
(\val_1\cdot\alloc_1(\val_1,\val_2)
+
\val_2\cdot\alloc_2(\val_1,\val_2))
\cdot
\tfrac{1}{\highestval}
\cdot
\tfrac{1}{\highestval}
\\
s.t.\\&\ 
\begin{array}{ll}
%(\alloc,\price) \text{ are BIC, IIR},
(\allocs,\prices) \text{ are DSIC, ex-post IR},
\text{and
feasible}
%\qquad &\text{for all i.i.d.\@ distribution};
\\
(\allocs, \prices) \text{ is budget balanced} 
%(\alloc, \price) \text{ is interim budget balanced} 
%\qquad &\text{for all i.i.d.\@ distribution}.
%\price_i(\val_1) \leq \budget 
%\qquad &\text{for all }\val_1 \in [\highestval], \text{ and }i\in [2].
\end{array}
\end{array}
$$

%First, we break down the interim constraints into ex-post constraints.

%Therefore, we could simplify this optimization programs  into a linear programs as follow,
By the characterization of dominant strategy equilibrium, we simplify this optimization program into a linear program as follows,
$$
\begin{array}{ll}
\max\limits_{(\allocs,\prices)\geq 0} 
&\sum\limits_{\val_1, \val_2 \in [\highestval]}
\val_1\cdot\alloc(\val_1,\val_2)
\\
s.t.\\&\ 
\begin{array}{lll}
\highestval \cdot \alloc(\highestval, \val_2) - \sum_{t = 1}^\highestval \alloc(t,\val_2) \leq \budget
\qquad &\text{for all }\val_2 \in [\highestval] 
\qquad &\text{[Budget Constraint]}
\\
\alloc(\val_1,\val_2) + \alloc(\val_2, \val_1) \leq 1
&\text{for all } \val_1,\val_2 \in [\highestval]
&\text{[Feasibility Constraint]}
\\
\alloc(\val_1,\val_2) \leq \alloc(\val_1 + 1,\val_2)
&\text{for all } \val_1 \in [\highestval - 1], \val_2 \in [\highestval]
&\text{[Monotonicity Constraint]}
\end{array}
\end{array}
$$
where we assume $\alloc_1(a, b) = \alloc_2(b, a) = \alloc(a, b)$ for all $a, b \in [\highestval]$ since it is an agent-symmetric linear program.
\footnote{
Note the program in turns of $\alloc(a, b)$ is asymmetric.
}

Additionally, we relax the monotonicity constraint by replacing it with $\alloc(\val_1, \val_2) \leq \alloc(\highestval, \val_2)$
which is common for Bayesian mechanism design with public budget agents.
%and get the final programs as follows,
$$
%\begin{array}{ll}
%\max\limits_{(\allocs,\prices)\geq 0} 
%&\sum\limits_{\val_1, \val_2 \in [\highestval]}
%\val_1\cdot\alloc(\val_1,\val_2)
%\\
%s.t.\\&\ 
\begin{array}{lll}
%\highestval \cdot \alloc(\highestval, \val_2) - \sum_{t = 1}^\highestval \alloc(t,\val_2) \leq \budget
%\qquad &\text{for all }\val_2 \in [\highestval] 
%\qquad &\text{[Budget Constraint]}
%\\
%\alloc(\val_1,\val_2) + \alloc(\val_2, \val_1) \leq 1
%&\text{for all } \val_1,\val_2 \in [\highestval]
%&\text{[Feasibility Constraint]}
%\\
\alloc(\val_1,\val_2) \leq \alloc(\highestval,\val_2)
&\text{for all } \val_1 \in [\highestval - 1], \val_2 \in [\highestval]
&\text{[Relaxed Monotonicity Constraint]}
%\end{array}
\end{array}
$$ 

Then we write down the corresponding dual program.
Let $\{\Lambda(\val_2)\}_{\val_2\in[h]}$ denote the dual variables for budget constraints; $\{\beta(\val_1, \val_2)\}_{\val_1,\val_2\in[\highestval]}$ denote the dual variables for feasibility constraints (for simplicity, we use both $\beta(\val_1, \val_2)$ and $\beta(\val_2,\val_1)$ to denote the same dual variable); and $\{\mu(\val_1, \val_2)\}_{\val_1\in[\highestval-1],\val_2\in[\highestval]}$ denote the dual variables for monotonicity constraints. The dual program is,
$$
\begin{array}{ll}
\min\limits_{(\Lambdas,\betas,\mus)\geq 0} 
&\sum\limits_{\val_2\in [\highestval]}
\budget\cdot\Lambda(\val_2)
+
\tfrac{1}{2}\sum\limits_{\val_1,\val_2\in [\highestval]}\beta(\val_1,\val_2)
\\
s.t.\\&\ 
\begin{array}{lll}
-\Lambda(\val_2) + \beta(\val_1,\val_2) + \mu(\val_1, \val_2) \geq \val_1
\qquad &\text{for all }\val_1 \in [\highestval - 1], \val_2 \in [\highestval] 
\qquad &\text{[$\alloc(\val_1, \val_2)$]}
\\
(\highestval - 1)\Lambda(\val_2) + \beta(\highestval,\val_2) - \sum_{t = 1}^{\highestval - 1}\mu(t,\val_2) \geq \highestval
\qquad &\text{for all }\val_2 \in [\highestval] 
\qquad &\text{[$\alloc(\highestval, \val_2)$]}
\end{array}
\end{array}
$$

We give a short overview of the plan to solve the program.
For each possible thresholds $\valp, \valdp$ chosen in the middle-ironed clinching auction, 
we first construct a solution in dual which satisfies the complementary slackness
with this middle-ironed clinching auction as a solution in primal.
These induced dual solutions may be infeasible.
Next, we will show that there exists a pair of thresholds $\valp, \valdp$ which induces a feasible dual solution. 
This feasible dual solution witnesses the optimality of the middle-ironed clinching auction. 

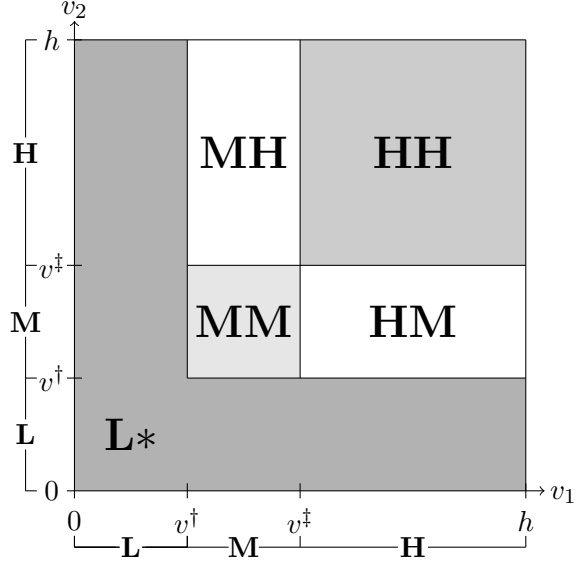
\begin{figure}
\centering
\begin{tikzpicture}[scale = 0.5]

\fill[gray!60!white] (0, 0) -- (12, 0) -- (12, 3) -- (3, 3) -- (3, 12) -- (0, 12);
\fill[gray!20!white] (3, 3) -- (6, 3) -- (6, 6) -- (3, 6);
\fill[gray!40!white] (6, 6) -- (12, 6) -- (12, 12) -- (6, 12);

\draw[->] (-0.2,0) -- (12.5, 0);
\draw[->] (0, -0.2) -- (0, 12.5);
\draw (3, 0.2) -- (3, -0.2);
\draw (6, 0.2) -- (6, -0.2);
\draw (12, 0.2) -- (12, -0.2);
\draw (0.2, 3) -- (-0.2, 3);
\draw (-0.2, 6) -- (0.2, 6);
\draw (-0.2, 12) -- (0.2, 12);

\draw (0, 12) -- (12, 12);
\draw (12, 12) -- (12, 0);
\draw (3, 3) -- (3, 12);
\draw (3, 3) -- (12, 3);
\draw (3, 6) -- (12, 6);
\draw (6, 3) -- (6, 12);

\draw (0, -1.2) -- (0, -1.5);
\draw (3, -1.2) -- (3, -1.5);
\draw (6, -1.2) -- (6, -1.5);
\draw (12, -1.2) -- (12, -1.5);
\draw (0, -1.5) -- (1.2, -1.5);
\draw (1.8, -1.5) -- (3, -1.5);

\draw (0, -1.5) -- (1.1, -1.5);
\draw (1.9, -1.5) -- (3, -1.5);
\draw (3, -1.5) -- (4.1, -1.5);
\draw (4.9, -1.5) -- (6, -1.5);
\draw (6, -1.5) -- (8.6, -1.5);
\draw (9.4, -1.5) -- (12, -1.5);

\draw (0, -1.2) -- (0, -1.5);
\draw (3, -1.2) -- (3, -1.5);
\draw (0, -1.5) -- (1.2, -1.5);
\draw (1.8, -1.5) -- (3, -1.5);

\draw (0, -0.8) node {$0$};
\draw (3, -0.8) node {$\valp$};
\draw (6, -0.8) node {$\valdp$};
\draw (12, -0.8) node {$\highestval$};
\draw (13, -0.1) node {$\val_1$};

\draw (1.5, -1.5) node {$\lvalue$};
\draw (4.5, -1.5) node {$\mvalue$};
\draw (9, -1.5) node {$\hvalue$};

\draw (-1, 0) -- (-1.3, 0);
\draw (-1, 3) -- (-1.3, 3);
\draw (-1, 6) -- (-1.3, 6);
\draw (-1, 12) -- (-1.3, 12);

\draw (-1.3, 0) -- (-1.3, 1.1);
\draw (-1.3, 1.9) -- (-1.3, 3);
\draw (-1.3, 3) -- (-1.3, 4.1);
\draw (-1.3, 4.9) -- (-1.3, 6);
\draw (-1.3, 6) -- (-1.3, 8.6);
\draw (-1.3, 9.4) -- (-1.3, 12);

\draw (-1.3, 1.5) node {$\lvalue$};
\draw (-1.3, 4.5) node {$\mvalue$};
\draw (-1.3, 9) node {$\hvalue$};

\draw (-0.6, 0) node {$0$};
\draw (-0.6, 3) node {$\valp$};
\draw (-0.6, 6) node {$\valdp$};
\draw (-0.6, 12) node {$\highestval$};
\draw (0, 12.8) node {$\val_2$};

\draw (1.5, 1.5) node {\LARGE $\LS$};
\draw (4.5, 4.5) node {\LARGE $\MM$};
\draw (9, 9) node {\LARGE $\HH$};
\draw (9, 4.5) node {\LARGE $\HM$};
\draw (4.5, 9) node {\LARGE $\MH$};

\end{tikzpicture}
\caption{
We partition the dual variables into 
$\LS$ (at least one agent with value in $\lvalue$),
$\HH$ (both agents with values in $\hvalue$),
$\MM$ (both agents with values in $\mvalue$),
$\MH$ and $\HM$ (one agent with value in $\mvalue$ and the other with value in $\hvalue$)
five areas.
}
\label{fig: DSIC-partition}
\end{figure}

We will partition the dual variables into following five areas ($\LS$, $\MM$, $\HH$, $\MH$ and $\HM$)
as in Figure \ref{fig: DSIC-partition};
and construct the dual solution for them separately.
We denote $\lambdas$ as the discrete derivative of the dual variable $\Lambdas$, 
i.e.\ $\lambda(\val) = \Lambda(\val) - \Lambda(\val + 1)$.
\begin{description}
%\item [Construction of $\Lambda(\val)$ for $\val\in \lvalue$:] 
\item [$\Lambdas$ in $\lvalue$:] 
Since the budget constraints do not bind,
by complementary slackness,
$$\Lambda(\val) = 0\text{ for all }\val \in \lvalue.$$

%\item[Construction of $\beta, \mu$ in A:] 
\item[$\betas, \mus$ in $\LS$:] 
Let $(\val, \val')$ be a value profile in area $\LS$ such that $\val \geq \val'$.
By complementary slackness on $\alloc(\val, \val')$,
$\beta(\val, \val') + \mu(\val, \val') - \Lambda(\val') = \val$
if $\val < \highestval$; 
$\beta(\val, \val') - \sum_{t=1}^{\highestval - 1}\mu(t, \val') + (\highestval - 1)\Lambda(\val') = \val$ 
otherwise (i.e.\ $\val = \highestval)$.
We let
$$\beta(\val, \val') = \val\text{ and }\mu(\val, \val') = 0.
\footnote{An intuition here is: $\mus$ are the dual variables for the relaxed monotonicity constraint and can be thought as indicators of ironing. Though the monotonicity constraint binds, this is not because of ironing but binding allocation (i.e. $\alloc(\cdot) \leq 1$). Therefore, we set $\mus$ as zero.}
$$

Since the relaxed monotonicity constraint does not bind at $\alloc(\val',\val)$,
i.e. $\alloc(\val', \val) < \alloc(\highestval, \val)$, 
the corresponding dual variable is 
$$\mu(\val', \val) = 0.$$

%\item[Construction of $\beta, \mu$ in B:]
\item[$\betas, \mus$ in $\HH$:]
Let $(\val, \val')$ be a value profile in area $\HH$ such that $\val \geq \val'$.
Since both agents win with non-zero probability,
by complementary slackness on $\alloc(\val, \val')$ and $\alloc(\val',\val)$, the corresponding dual constraints bind.
Since the relaxed monotonicity constraint does not bind at $\alloc(\val',\val)$,
%$\alloc(\val',\val) < \alloc(\highestval, \val)$, 
the monotonicity dual variable is 
$$\mu(\val', \val) = 0.$$
The binding dual constraint of $\alloc(\val',\val)$ is 
$\beta(\val',\val) - \Lambda(\val) + \mu(\val',\val) = \val'$. Hence, 
$$\beta(\val, \val') = \beta(\val', \val) = \val' + \Lambda(\val).
\footnote{Recall that $\beta(\val,\val')$ and $\beta(\val',\val)$ denote the same dual variable.}
$$
The binding dual constraint of $\alloc(\val, \val')$ is 
$\beta(\val,\val') - \Lambda(\val') + \mu(\val,\val') = \val$.
Note the relaxed monotonicity constraint is tight for $(\val, \val')$.
Hence, 
$$\mu(\val,\val') 
%= \val + \Lambda(\val') - \beta(\val,\val') 
= 
\val - \val' + \Lambda(\val') - \Lambda(\val).
$$
Here we write $\betas, \mus$ as terms of $\Lambdas$. In the next paragraph, we will solve for $\Lambdas$.

%\item[Construction of $\Lambda(\val)$ for $\val \in \hvalue$:]
\item[$\Lambdas$ in $\hvalue$:]
Let $\val \in \hvalue$.
Consider the binding dual constraint of $\alloc(\highestval, \val)$,
$(\highestval - 1)\Lambda(\val) + \beta(\highestval, \val) - \sum_{t = 1}^{\highestval - 1}\mu(t,\val) = \highestval$.
Notice that by complementary slackness, $\mu(t,\val) = 0$ for all $t \leq \val$.
Plugging $\betas$ and $\mus$ as terms of $\Lambdas$ into the these dual constraints of $\alloc(\highestval, \val)$, 
we can solve for $\Lambdas$ as
$$\lambda(\val) = \frac{\highestval - \val}{\val}
\text{ for all }v \in \hvalue
\text{ and }
\Lambda(\highestval) = 0.
\footnote{Recall that $\lambdas$ is the discrete derivative of dual variables $\Lambdas$, so $\Lambda(\val) = \sum_{t = \val}^{\highestval - 1}\lambda(\val)$.}
$$

%\item[Construction of $\beta, \mu$ in C:] 
\item[$\betas, \mus$ in $\MM$
 and $\Lambdas$ in $\mvalue$:] 
Let $(\val, \val')$ be a value profile in area $\MM$ such that $\val \geq \val'$.
Since the relaxed monotonicity constraints do not bind for either $\alloc(\val, \val')$ or
$\alloc(\val', \val)$,
the corresponding dual variables are
$$\mu(\val,\val') = \mu(\val',\val) = 0.$$
The binding dual constraints of $\alloc(\val,\val')$ implies
$\beta(\val, \val') = \val' + \Lambda(\val)$.
On the other hand, the binding dual constraints of $\alloc(\val',\val)$ implies
$\beta(\val',\val) = \val + \Lambda(\val')$. 
Recall that $\beta(\val,\val')$ and $\beta(\val',\val)$ denote the same variable, hence,
$$
\lambda(\val) = -1 \text{ for all } \val \in \mvalue \setminus \{\valdp - 1\},
\footnote{
Complementary slackness does not pin down $\dvp$. 
We leave it as a variable 
and identify it later when we choose the thresholds 
$\valp$, $\valdp$ to 
ensure that the dual solution is feasible.
}
$$
$$
\beta(\val,\val') = 
\Lambda(\valdp) + \dvp + \val + \val' - \valdp.
$$

%\item[Construction of $\beta,\mu$ in D:] 
\item[$\betas,\mus$ in $\MH$ and $\HM$:] 
Let $(\val, \val')$ be a value profile in area $\HM$ such that $\val > \val'$.
With the similar argument for area $\HH$,
$$\mu(\val',\val) = 0 \text{ and } \mu(\val,\val') = \val - \val' + \Lambda(\val') - \Lambda(\val),$$ 
$$\beta(\val,\val') = \val' + \Lambda(\val)\text{ if }\val < \highestval.$$
Plugging $\mus$ as terms of $\Lambdas$ into the binding dual constraint of $\alloc(\highestval, \val')$,
$$\beta(\highestval, \val') = 
(\highestval - 1)(\valdp - \val') + 1 + (\valdp - 1)\dvp.$$
\end{description}

With the analysis above, we construct the following dual solution which satisfies complementary slackness with the middle-ironed clinching auction as a solution in primal,
\begin{align}\label{dual construction}
\begin{split}
\Lambda(\val_2) &= 
\left\{
\begin{array}{ll}
0 \qquad &\text{if }\val_2 < \valp\\
\sum_{k=\valdp}^{\highestval - 1} \frac{\highestval - k}{k} + \val_2 - \valdp + 1  + \dvp
&\text{if }\valp \leq \val_2 < \valdp\\
\sum_{k=\val_2}^{\highestval - 1} \frac{\highestval - k}{k} 
&\text{if }\val_2 \geq \valdp\\
\end{array}
\right.
\\
\beta(\val_1,\val_2) &=
\left\{
\begin{array}{ll}
\val_1 
\qquad &\text{if }\val_1 \geq \val_2,\ \val_2 < \valp \\
\sum_{k=\valdp}^{\highestval-1}\frac{\highestval - k}{k} + \val_1 + \val_2 -\valdp + 1 + \dvp
\qquad &\text{if }\val_1 \geq \val_2,\ \valp \leq \val_2 < \valdp,\ \val_1 < \valdp \\
\sum_{k=\val_1}^{\highestval-1}\frac{\highestval - k}{k} + \val_2
\qquad &\text{if }\val_1 \geq \val_2,\ \valp \leq \val_2 < \valdp,\ \valdp \leq \val_1 \leq \highestval - 1\\
(\highestval - 1)(\valdp - \val_2) + 1 + (\valdp - 1)\dvp
\qquad &\text{if }\val_1 \geq \val_2,\ \valp \leq \val_2 < \valdp,\ \val_1 = \highestval\\
\sum_{k=\val_1}^{\highestval-1}\frac{\highestval - k}{k} + \val_2
\qquad &\text{if }\val_1 \geq \val_2,\ \valdp \leq \val_2
\end{array}
\right.
\\
\mu(\val_1,\val_2) &=
\left\{
\begin{array}{ll}
0 \qquad &\text{if }\val_2 < \valp\\
0 
&\text{if } \valp \leq \val_2 < \valdp, \val_1 < \valdp \\
\val_1 - \valdp + \sum_{k=\valdp}^{\val_1-1}\frac{\highestval - k}{k} + 1 + \dvp
&\text{if } \valp \leq \val_2 < \valdp, \val_1 \geq \valdp \\
0 
&\text{if }  \val_2 \geq \valdp, \val_1 \leq \val_2 \\
\val_1 - \val_2 + \sum_{k=\val_2}^{\val_1-1}\frac{\highestval - k}{k}
&\text{if }  \val_2 \geq \valdp, \val_1 > \val_2 \\
\end{array}
\right.
\end{split}
\end{align}

\begin{lemma}
For the middle-ironed clinching auction with arbitrary thresholds $\valp$ and $\valdp$,
the dual solution 
\eqref{dual construction} satisfies the complementary slackness.
\end{lemma}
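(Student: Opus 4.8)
The plan is to verify, one family of conditions at a time, that the pair consisting of the middle-ironed clinching auction (as a primal feasible point) and the point \eqref{dual construction} (as a dual point) satisfies complementary slackness. This is two families of implications. \emph{Primal direction:} whenever $\alloc(\val_1,\val_2)>0$ the dual constraint indexed by $\alloc(\val_1,\val_2)$ holds with equality (for $\val_1=\highestval$, the one indexed by $\alloc(\highestval,\val_2)$). \emph{Dual direction:} whenever $\Lambda(\val_2)>0$ the budget constraint at $\val_2$ is tight, whenever $\beta(\val_1,\val_2)>0$ the feasibility constraint at $(\val_1,\val_2)$ is tight, and whenever $\mu(\val_1,\val_2)>0$ the relaxed monotonicity constraint $\alloc(\val_1,\val_2)\le\alloc(\highestval,\val_2)$ is tight. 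Since \eqref{dual construction} was assembled block by block in the preceding discussion by positing the primal-direction equalities and solving for $\Lambdas$, $\betas$, $\mus$, most of the work is to confirm that the resulting piecewise formulas are mutually consistent and then to check the dual-direction conditions on the blocks where the dual variables are nonzero.

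\emph{Dual direction.} For feasibility: the middle-ironed clinching auction allocates the whole item at every profile --- highest-bid-wins on $\LS$, a fair lottery on $\MM$, award-to-the-$\hvalue$-agent on $\MH\cup\HM$, and a clinching split on $\HH$ all satisfy $\alloc(\val_1,\val_2)+\alloc(\val_2,\val_1)=1$ --- so every feasibility constraint is tight and the condition on $\betas$ holds vacuously. For monotonicity: by \eqref{dual construction}, $\mu(\val_1,\val_2)=0$ outside the two ``clinching'' blocks ($\val_2\in\mvalue$ with $\val_1\in\hvalue$, and $\val_2\in\hvalue$ with $\val_1>\val_2$), and on those blocks $\alloc(\cdot,\val_2)$ has already reached its maximum in $\val_1$ --- the $\hvalue$-valued agent wins outright against an $\mvalue$-valued opponent, and within $\HH$ the winner's allocation depends only on the loser's value --- so $\alloc(\val_1,\val_2)=\alloc(\highestval,\val_2)$ there and that constraint is tight. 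For the budget: $\Lambda(\val_2)=0$ when $\val_2<\valp$, so it suffices to show that the budget constraint is tight for $\val_2\ge\valp$, i.e.\ that the highest type's payment against a fixed opponent $\val_2$ equals $\budget$, equivalently $\highestval\,\alloc(\highestval,\val_2)-\sum_t\alloc(t,\val_2)=\budget$. When $\val_2\in\mvalue$ this is a count: $\alloc(\highestval,\val_2)=1$ and $\alloc(t,\val_2)$ is $0$ on $\lvalue$, $\tfrac{1}{2}$ on $\mvalue$, $1$ on $\hvalue$, so the left side is $\highestval-\tfrac{1}{2}|\mvalue|-|\hvalue|$, which equals $\budget$ precisely because $\valdp=2\budget-\valp$. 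When $\val_2\in\hvalue$ it is the statement that the budget binds for the highest type, which comes from the clinching form of the allocation on $\HH$.

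\emph{Primal direction.} Here I would go block by block over Figure~\ref{fig: DSIC-partition}, each time restricting to profiles with $\alloc(\val_1,\val_2)>0$ and substituting \eqref{dual construction}. On $\LS$, $\alloc(\val_1,\val_2)>0$ forces $\val_1\ge\val_2$ and hence $\val_2<\valp$, so $\Lambda(\val_2)=0$ and $\mu(\val_1,\val_2)=0$, and the constraint reads $\beta(\val_1,\val_2)=\val_1$ --- its definition there (the $\val_1=\highestval$ row uses in addition $\sum_t\mu(t,\val_2)=0$). On $\MM$ every allocation is $\tfrac{1}{2}$, and the two constraints at $(\val_1,\val_2)$ and $(\val_2,\val_1)$, together with $\beta(\val_1,\val_2)=\beta(\val_2,\val_1)$, are exactly the relations that forced $\lambda\equiv-1$ on $\mvalue\setminus\{\valdp-1\}$ and the stated form of $\beta$, leaving $\dvp$ undetermined. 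On $\MH\cup\HM$ and on $\HH$ (diagonals included) every winning allocation is strictly positive, so every corresponding constraint must bind; for $\val_1,\val_2<\highestval$ one substitutes $\beta(\val_1,\val_2)=\min(\val_1,\val_2)+\Lambda(\max(\val_1,\val_2))$ together with the matching $\mu$ from \eqref{dual construction}, which telescope inside the constraint to give equality in each of the sub-cases $\val_1\gtrless\val_2$ and $\val_1=\val_2$, while the $\val_1=\highestval$ rows are precisely the equations used earlier to solve for $\beta(\highestval,\cdot)$ and then for $\Lambda$ on $\hvalue$. The delicate point is matching the several piecewise formulas along the block boundaries ($\val_2=\valp$, $\val_1=\valdp$, $\val_1=\val_2$, $\val_1=\highestval$); keeping $\dvp$ free is exactly what makes the $\MM$ and $\HM$ formulas agree along $\{\val_1=\valdp\}$.

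\textbf{Main obstacle.} Most of the labor is the routine block-by-block substitution and the boundary bookkeeping above; the one genuinely non-mechanical ingredient is the budget-tightness claim for $\val_2\in\hvalue$ --- that, just as for the ordinary clinching auction, the top type's payment against an $\hvalue$-valued opponent equals $\budget$ --- which is where the clinching / differential-equation structure of the allocation on $\HH$ enters, and which is the same structural fact that produces the $\lambda(\val)=\tfrac{\highestval-\val}{\val}$ profile in \eqref{dual construction}.
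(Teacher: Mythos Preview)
Your proposal is correct and is essentially the same approach as the paper's, just spelled out in far greater detail: the paper's entire proof is the single sentence ``the complementary slackness is directly implied by the construction above,'' since the dual solution \eqref{dual construction} was assembled block by block precisely by enforcing the complementary-slackness equalities. Your block-by-block verification is exactly that construction read backwards, so the two agree; the only caveat is that your budget-tightness count for $\val_2\in\mvalue$ uses the continuous relation $\valdp=2\budget-\valp$, whereas the lemma is stated for the discrete program, so a small discretization adjustment may be needed there.
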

\begin{proof}
The complementary slackness is directly implied by the construction above.
\end{proof}

Though the this dual solution satisfies the complementary slackness, it may be infeasible. 
Therefore, we argue that there exists some thresholds $\valp$, $\valdp$ and $\dvp$ under which the dual solution is feasible.

\begin{lemma}
There exists $\valp, \valdp$ and $\dvp$ such that the constructed dual solution 
\eqref{dual construction} is feasible.
\end{lemma}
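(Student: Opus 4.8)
The plan is to check, region by region along the partition of Figure~\ref{fig: DSIC-partition}, that for a suitable choice of the free parameters $\valp$, $\valdp=2\budget-\valp$ and $\dvp$ the candidate \eqref{dual construction} is coordinatewise nonnegative and satisfies every dual constraint. The key structural point is that \eqref{dual construction} was built from complementary slackness with the middle-ironed clinching auction, so on every cell $(\val_1,\val_2)$ on which that mechanism gives agent~$1$ positive probability the corresponding dual constraint already holds \emph{with equality} --- in particular every $\alloc(\highestval,\val_2)$ constraint, since the top type always wins --- and needs no checking. What remains is (i) coordinatewise nonnegativity of $\Lambda$, $\beta$, $\mu$, and (ii) the dual constraint on each ``losing'' cell, i.e.\ those $(\val_1,\val_2)$ with $\val_1<\val_2$ lying in $\lvalue\times\lvalue$, $\lvalue\times\mvalue$, $\lvalue\times\hvalue$ or $\mvalue\times\hvalue$, where $\alloc(\val_1,\val_2)=0$ so complementary slackness forces only an inequality.

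For (i) I would use the explicit shape of $\Lambda$ recorded in the construction: $\Lambda\equiv0$ on $\lvalue$; $\Lambda(\val)=\Lambda(\valdp)+\val-\valdp+1+\dvp$, linear with slope $1$, on $\mvalue$; and $\Lambda(\val)=\sum_{k=\val}^{\highestval-1}\tfrac{\highestval-k}{k}$, nonnegative and decreasing, on $\hvalue$. Thus $\Lambda\ge0$ on $\hvalue$ is automatic and on $\mvalue$ reduces to $\Lambda(\valp)\ge0$. The winning-cell $\mu$'s are telescoping sums of the nonnegative quantities $\tfrac{\highestval-k}{k}$, sometimes shifted by $1+\dvp$ (worst case $\mu(\valdp,\cdot)=1+\dvp\ge0$), and the $\beta$'s are values plus such sums plus $\Lambda$-values, except the $\MM$ and boundary $\beta(\highestval,\cdot)$ entries which carry $\dvp$ directly. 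Hence nonnegativity amounts to a short list of linear inequalities among $\dvp,\valp,\valdp,\highestval$, confining $\dvp$ to an interval with $\valp$-dependent endpoints.

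For (ii) I would use $\beta(\val_1,\val_2)=\beta(\val_2,\val_1)$ together with the fact that $\beta(\val_2,\val_1)$ and $\mu(\val_2,\val_1)$ were fixed to make the \emph{winning} constraint at $(\val_2,\val_1)$ tight; substituting, each losing constraint collapses to $(\val_2-\val_1)+(\Lambda(\val_1)-\Lambda(\val_2))\ge\mu(\val_2,\val_1)-\mu(\val_1,\val_2)$ (the case $\val_2=\highestval$ being handled separately, as there the winning orientation is an $\alloc(\highestval,\cdot)$ constraint). When $\val_1\in\lvalue$ this reads $\val_2-\Lambda(\val_2)\ge\val_1$: immediate for $\val_2\in\lvalue$; for $\val_2\in\hvalue$ it follows because $\val\mapsto\val-\Lambda(\val)$ is increasing on $\hvalue$ (its discrete increment is $1+\lambda(\val)=\tfrac{\highestval}{\val}>0$), so the minimum over $\hvalue$ is at $\valdp$; and for $\val_2\in\mvalue$ it becomes one further bound linking $\dvp$ to $\valp,\valdp$. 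When $\val_1\in\mvalue$ and $\val_2\in\hvalue$ the collapsed inequality is identically tight for $\val_2<\highestval$, and a further linear inequality in $\dvp$ for $\val_2=\highestval$.

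Assembling these, feasibility of \eqref{dual construction} becomes: choose $\valp\in[0,\budget]$ (hence $\valdp=2\budget-\valp$; we may assume $\budget$ small enough that $\valdp\le\highestval$, i.e.\ that the budget binds, the other case being trivial since highest-bid-wins is then dominant-strategy optimal) so that the finitely many lower and upper bounds on $\dvp$ assembled in (i) and (ii) are mutually compatible, and then take $\dvp$ in the resulting window. I would prove compatibility by tracking how the controlling bounds move with $\valp$: lowering $\valp$ enlarges $\valdp$, which shrinks the sums $\sum_{k=\valdp}^{\highestval-1}\tfrac{\highestval-k}{k}$ that appear on both sides, so the window opens up; I would then pick the $\valp$ at which the binding pair of bounds just meets, which simultaneously pins $\dvp$. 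I expect this final compatibility step --- showing the admissible $\valp$ exists --- to be the main obstacle, along with the bookkeeping of tracking, cell by cell, which orientation of $\beta$ and which branch of $\mu$ is active; the winning-cell equalities and the remaining nonnegativity checks are routine given the explicit construction.
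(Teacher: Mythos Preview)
Your plan is essentially the paper's own: use the fact that complementary slackness makes every ``winning'' dual constraint tight, so only nonnegativity and the ``losing'' cells ($\lvalue\times\mvalue$, $\lvalue\times\hvalue$, $\mvalue\times\{\highestval\}$) need checking; reduce each such block to a single pivotal inequality in the free parameters; then argue the resulting system is consistent for some $\valp,\valdp,\dvp$.

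Where you diverge is the last step, and there your heuristic points the wrong way. The paper packages everything through the scalar function
\[
\mt(v)\;=\;2v-2\budget-2-\sum_{k=v}^{\highestval-1}\tfrac{\highestval-k}{k},
\]
and the four pivotal inequalities collapse to $-1\le \mt(\valdp)\le \tfrac{\highestval}{\valdp-1}$ together with $\mt(\valdp)-1\le\dvp\le\min\{\mt(\valdp),\tfrac{\highestval}{\valdp-1}-1\}$. Now $\mt$ is \emph{increasing} in $\valdp$ while $\tfrac{\highestval}{\valdp-1}$ is decreasing, so lowering $\valp$ (raising $\valdp$) does not ``open the window'' as you suggest --- the window for $\dvp$ has width one regardless, and the competing bounds on $\mt(\valdp)$ actually move toward each other. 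What makes the argument work is not monotone widening but a discrete intermediate-value fact: the paper checks that $\mt(\budget+1)\le 0<\tfrac{\highestval}{\budget}$ at one end, and that (assuming the $\valp=1$ case fails) $\mt(2\budget+1)>\tfrac{\highestval}{2\budget}>-1$ at the other, and then shows $\mt$ cannot skip over the target interval $(-1,\tfrac{\highestval}{\valdp-1})$ in a single step (if $\mt(v)<-1$ then $\mt(v+1)<\tfrac{\highestval}{v}$, and symmetrically). So the missing idea in your sketch is this no-jump intermediate-value argument for $\mt$; replace the ``window opens up'' heuristic with it and your proof goes through exactly as the paper's does.
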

\begin{proof}
We define function $\mt(v) = 2v - 2\budget - 2 - \sum_{k = v}^{\highestval - 1}\frac{\highestval - k}{k}$ 
to simplify the argument.
Notice that
$\Lambda(v) = \sum_{k = v}^{\highestval-1}\frac{\highestval - k}{k}$ 
in the dual solution \eqref{dual construction} if $v \in \hvalue$.

Due to complementary slackness, all dual constraints corresponding to some $\alloc(\val, \val') > 0$ 
bind, so they are satisfied automatically.
Hence, to ensure the constructed dual solution is feasible, 
there remain four groups of constraints which need to be satisfied.
For each group of constraints, there is a ``pivotal'' constraint such that 
if it is satisfied, all constraints in that group is satisfied.
We list these four groups of constraints and ``pivotal'' constraint for each group below,
\begin{description}
\item 
[All dual constraints of $\alloc(\val,\val')$ where $\val \in \lvalue$ and $\val' \in \mvalue$:]
The pivotal constraint is the dual constraint of $\alloc(\valp - 1, 1)$, which can be simplified as
%$$\dvp \leq \valdp - \valp - \sum_{k = \valdp+1}^\highestval \frac{\highestval - k + 1}{k - 1}.$$
$$\dvp \leq \mt(\valdp).$$

\item
[All dual constraints of $\alloc(\val,\val')$ where $\val \in \lvalue$ and $\val' \in \hvalue$:]
The pivotal constraint is the dual constraint of $\alloc(\valp - 1, \valdp)$, which can be simplified as
%$$-1 \leq \valdp - \valp - \sum_{k = \valdp+1}^\highestval \frac{\highestval - k + 1}{k - 1}.$$
$$-1 \leq \mt(\valdp).$$

\item
[All dual constraints of $\alloc(\val, \highestval)$ where $\val \in \mvalue$:]
The pivotal constraint is the dual constraint of $\alloc(\valdp - 1, \highestval)$, which can be simplified as
$$\dvp \leq \frac{\highestval}{\valdp - 1} - 1.$$

\item [$\Lambdas,\mus,\betas \geq 0$:]
The pivotal constraint is $\Lambda(\valp) \geq 0$, which can be simplified as
%$$\dvp \geq \valdp - \valp - \sum_{k = \valdp+1}^\highestval \frac{\highestval - k + 1}{k - 1} - 1.$$
$$\dvp \geq \mt(\valdp) - 1.$$
\end{description}

We now show how to relate $\valp$, $\valdp$ and $\dvp$ to satisfy the four inequalities
identified above.

Notice that when $\valp = 1$ and $\valdp = 2\budget + 1$, the interval $\lvalue$ becomes empty. 
In that case, the first and second groups of constraints disappear. The combination of these four inequalities is equivalent to 
\begin{enumerate}
\item[i.] $\valdp = 2\budget + 1$ and 
%$2\valdp - 2\budget - 2 - \sdk \leq \frac{\highestval}{\valdp - 1}$; 
$\mt(\valdp) \leq \frac{\highestval}{\valdp - 1}$; 
or
%\item $-1 \leq 2\valdp - 2\budget - 2 - \sdk \leq \frac{\highestval}{\valdp - 1}$.
\item[ii.] $-1 \leq \mt(\valdp) \leq \frac{\highestval}{\valdp - 1}$.
\end{enumerate}
%where $\valdp - \valp$ can be simplified as $2\valdp - 2\budget - 2$.

Without loss of generality, we assume that Condition (i) does not hold 
and then argue that Condition (ii) holds in this case.

The construction of $\mt(\cdot)$ implies the following two facts, 
%By construction, $\mt(v + 1) = \mt(v) + 2 + \lambda(v + 1) = \mt(v) + 1 + \frac{\highestval}{v}$ which derives the following two facts,
\begin{enumerate}
\item[(a)] if $\mt(v) < -1$, then $\mt(v + 1) < \frac{\highestval}{v}$;  
\item[(b)] if $\mt(v) > \frac{\highestval}{v - 1}$, then $\mt(v - 1) > -1$.
\end{enumerate}
If we think of interval $(-1,\frac{\highestval}{\val - 1})$ as a ``window'', 
these two facts say that 
if a point $\mt(\val)$ is on the left hand side of this window
(i.e.\ $\mt(\val) < -1$)
, then the next point $\mt(\val + 1)$ cannot jump to the right hand side of the window 
(i.e.\ it is either in the window or still on the left hand side of the window)
and vice versa.
%(a) $\mt(v) < -1$ implies $\mt(k + 1) < \frac{\highestval}{v}$; and 
%(b) $\mt(v) > \frac{\highestval}{v - 1}$ implies $\mt(v - 1) > -1$.
Notice that 
$\mt(\budget + 1) = -\sum_{k = \budget + 1}^{\highestval-1}\frac{\highestval - k}{k} \leq 0 < \frac{\highestval}{\budget}$ 
and our assumption
(i.e.\ Condition (i) does not hold)
implies
$\mt(2\budget + 1) > \frac{\highestval}{2\budget} > -1$. 
Thus, there exists $\valdp$ such that $-1\leq \mt(\valdp) \leq \frac{\highestval}{\valdp - 1}$, i.e.\ Condition (ii) holds.
\qedhere
\end{proof}

The construction of the dual solution which satisfies feasibility and complementary slackness witnesses the optimality of the middle-ironed clinching auction. 
We offer the following discrete version of 
\Cref{thm: middle-ironed clinching}.

\begin{theorem}\label{thm:discrete characterization}
For two public-budget agents with value uniformly distributed from $\{1,\dots, \highestval\}$, 
the Bayesian optimal DSIC mechanism is the middle-ironed clinching auction for some thresholds $\valp$ and $\valdp$.
\end{theorem}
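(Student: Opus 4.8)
The plan is to close the linear-programming duality argument that the preceding development has already set up: all that remains is to assemble the two lemmas into an optimality proof and then remove the monotonicity relaxation. Recall that, via the dominant-strategy characterization, welfare maximization over DSIC, ex-post IR, feasible, budget-balanced mechanisms on the grid $\{1,\dots,\highestval\}$ is captured by the stated primal LP in the agent-symmetric allocation variables $\alloc(\val_1,\val_2)$, with payments pinned down by the payment identity. So it suffices to exhibit an optimal solution of this LP whose allocation rule is that of a middle-ironed clinching auction.

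First I would check that, for any admissible thresholds $\valp\le\budget$ and $\valdp=2\budget-\valp$ (discretized so that $\lvalue,\mvalue,\hvalue$ partition $\{1,\dots,\highestval\}$), the allocation rule of the middle-ironed clinching auction is a feasible point of the relaxed primal LP. Nonnegativity and the single-item feasibility constraint $\alloc(\val_1,\val_2)+\alloc(\val_2,\val_1)\le 1$ are immediate from \Cref{f:middle-ironed-clinching-auction}; the budget constraint holds because on $\hvalue$ the mechanism is exactly the clinching auction, where \citet{DLN-08} guarantees it, and elsewhere the top-type payment is only smaller; the relaxed monotonicity constraint $\alloc(\val_1,\val_2)\le\alloc(\highestval,\val_2)$ holds since the own-value allocation is nondecreasing along each row. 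I would in fact record the stronger \emph{un-relaxed} monotonicity $\alloc(\val_1,\val_2)\le\alloc(\val_1+1,\val_2)$ here too --- again a direct reading of the figure --- since it is what makes the relaxation tight at the end.

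Next I would invoke the two lemmas already proved: the dual solution \eqref{dual construction} satisfies complementary slackness with this middle-ironed clinching auction for arbitrary thresholds, and there is a choice of $\valp,\valdp,\dvp$ under which that dual solution is feasible. Fixing such thresholds, we then have a primal-feasible point (the middle-ironed clinching auction) and a dual-feasible point together satisfying complementary slackness; by the complementary-slackness optimality criterion for linear programs, both are optimal for their respective programs, so the middle-ironed clinching auction attains the optimal value of the relaxed primal LP. To finish, I would remove the relaxation: the relaxed LP weakens $\alloc(\val_1,\val_2)\le\alloc(\val_1+1,\val_2)$ to $\alloc(\val_1,\val_2)\le\alloc(\highestval,\val_2)$, which is implied by it, so the relaxed optimum upper-bounds the exact optimum; since the middle-ironed clinching auction with the chosen thresholds satisfies the un-relaxed monotonicity constraint it is feasible, hence optimal, for the exact LP as well. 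Recovering payments by the payment identity turns this optimal allocation rule into a mechanism, which is by definition the middle-ironed clinching auction, proving \Cref{thm:discrete characterization}.

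The main obstacle is not in this assembly, which is routine LP bookkeeping, but in the discretization that underlies it: one must define the middle-ironed clinching auction on $\{1,\dots,\highestval\}$ precisely enough that the rows in which its budget constraint is tight are exactly $\mvalue\cup\hvalue$ --- matching where the dual variables $\Lambda(\val_2)$ are positive --- and that every primal constraint is tight exactly where complementary slackness with \eqref{dual construction} requires, while also checking that the threshold-selection of the feasibility lemma is consistent with the integrality of the grid. Verifying this alignment is the fiddly part, though it is shallow.
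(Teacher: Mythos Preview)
Your proposal is correct and follows the same LP-duality route as the paper: invoke the complementary-slackness lemma and the dual-feasibility lemma to certify that the middle-ironed clinching auction is optimal for the relaxed primal, then observe that it already satisfies the un-relaxed monotonicity constraint. The paper's own proof is the one-line observation that ``the construction of the dual solution which satisfies feasibility and complementary slackness witnesses the optimality of the middle-ironed clinching auction''; you have simply spelled out the primal-feasibility check and the relaxation-removal step that the paper leaves implicit.
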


We now focus on continuous uniform distribution with value space $[0, \highestval]$. 
Again, we write the problem as an optimization program as follows,
$$
\begin{array}{ll}
\max\limits_{(\alloc,\price)\geq 0} 
&
\int\limits_0^\highestval
\int\limits_0^\highestval
\val_1\cdot\alloc(\val_1,\val_2)d\val_2 d\val_1
\\
s.t.\\&\ 
\begin{array}{lll}
\highestval \cdot \alloc(\highestval, \val_2) - 
\int_0^\highestval \alloc(t,\val_2)dt \leq \budget
\qquad &\text{for all }\val_2 \in [0, \highestval] 
\qquad &\text{[Budget Constraint]}
\\
\alloc(\val_1,\val_2) + \alloc(\val_2, \val_1) \leq 1
&\text{for all } \val_1,\val_2 \in [0, \highestval]
&\text{[Feasibility Constraint]}
\\
\alloc(\val_1,\val_2) \leq \alloc(\highestval,\val_2)
&\text{for all } \val_1,\val_2 \in [0, \highestval]
&\text{[Relaxed Monotonicity Constraint]}
\end{array}
\end{array}
$$ 

\begin{proof}[Proof of \Cref{thm: middle-ironed clinching}]
Discretize the value space $[0, \highestval]$ into $\{\epsilon, 2\epsilon, \dots, m\epsilon\}$ where $m\epsilon = \highestval$ with density $\tfrac{1}{m}$ each.
Define $\allocspace_\epsilon$ to be the class of all possible DSIC, ex-post IR, budget balanced allocations such that each value $\val \in [(k-1)\epsilon, k\epsilon)$ must be ironed for all $k = 1, \dots, \highestval$.
By the construction of $\allocspace_\epsilon$, the allocation function $\alloc^\epsilon$ in \Cref{thm:discrete characterization} indeed solves 
$\max_{x\in \allocspace_\epsilon}
\int_0^\highestval
\int_0^\highestval
\val_1\cdot\alloc(\val_1,\val_2)d\val_2 d\val_1
$ 
after rescaling both value space and budget by $\tfrac{1}{\epsilon}$.

Let $\allocspace$ be the class of all possible DSIC, ex-post IR, budget balanced allocations. 
Notice that $\allocspace_\epsilon$ converges to $\allocspace$ pointwise and that both are compact subsets of the $L_1$ space defined by uniform measure. 
The operator $T(\alloc) = \int_0^\highestval
\int_0^\highestval
\val_1\cdot\alloc(\val_1,\val_2)d\val_2 d\val_1$
is a bounded linear operator from the $L_1$ space of allocation function to $\mathbb R$.
Therefore, $T$ achieves its maximum on each set $\allocspace^\epsilon$ and $\allocspace$. 

The pointwise convergence ensures that 
\begin{align*}
\begin{split}
\lim\limits_{\epsilon\rightarrow 0}
\max\limits_{\alloc \in \allocspace^{\epsilon}}
\int_0^\highestval
\int_0^\highestval
\val_1\cdot\alloc(\val_1,\val_2)d\val_2 d\val_1 
=
\max\limits_{\alloc \in \allocspace}
\int_0^\highestval
\int_0^\highestval
\val_1\cdot\alloc(\val_1,\val_2)d\val_2 d\val_1 
\end{split}
\end{align*}

Since $T(\alloc)$ is a bounded linear operator and $\{\alloc^\epsilon\}$ has a pointwise limit,
$$\lim\limits_{\epsilon\rightarrow 0}\alloc^\epsilon \in \{\argmax_{\alloc\in\allocspace}
\int_0^\highestval
\int_0^\highestval
\val_1\cdot\alloc(\val_1,\val_2)d\val_2 d\val_1 \}
.
$$

Thus, we see that \Cref{thm: middle-ironed clinching} holds.
\qedhere
\end{proof}

Based on \Cref{thm: middle-ironed clinching}, we compare the
performance between the DSIC mechanism and the welfare-optimal BIC
mechanism.

\begin{lemma}\label{cor: middle-ironed clinching}
There exists the instance of public-budget regular agents where 
the welfare-optimal DSIC mechanism is a $\gaplowerbound$-approximation to the welfare-optimal BIC mechanism.
\end{lemma}
\begin{proof}
Consider two agents with values drawn uniformly from $[0,\highestval]$ 
where $\highestval \geq 5.5$ 
and the budget $\budget = 1$.
By \Cref{thm: middle-ironed clinching},
the welfare-optimal DSIC mechanism in this case is the middle-ironed clinching auction
with $\valp = 0$ and $\valdp = 2$. 
The welfare-optimal BIC mechanism is the all-pay auction (applying the revelation principle). 
By computing the welfare for both mechanisms under this distribution, 
and setting $\highestval = 5.5$,
it optimizes the ratio as $\gaplowerbound$.
\end{proof}

\section{Revelation Gap}
\label{s:revelation gap}
\label{s:gap}

In the literature, prior-independent mechanisms have been shown to
approximate the Bayesian optimal mechanism for many
objectives. Interestingly, except when the solution is trivial, none
of the approximation mechanisms in the literature are known to be
optimal. The formal question of optimal prior-independent mechanism
design is the following:
$$\beta = \min\limits_{\mech\in \MECHS} \max\limits_{F\in \DISTS}
\frac{\expect[\vals\sim F]{\OPT_F(\vals)}}{\expect[\vals\sim F]{\mech(\vals)}}.
$$
In this definition, $\expect[\val\sim F]{\mech(\val)}$ is the equilibrium performance of mechanism $\mech$ on distribution $F$ and $\OPT_F$ is the optimal mechanism for given objective on distribution $F$. 
Importantly in this definition, the family of mechanism $\MECHS$ may be restricted from all mechanisms and 
the family of distribution $\DISTS$ may be restricted from all distributions.

As discussed in the introduction, the revelation principle is not without loss for prior-independent mechanism design. 
%Usually, there is a strict loss in performance for prior-independent revelation mechanisms, relative to non-revelation mechanisms. 
Based on this idea, we introduce the concept of revelation gap.
\begin{definition}
The \emph{revelation gap} is the ratio of the prior-independent approximation of incentive compatible mechanisms to the prior-independent approximation of (generally non-revelation) mechanisms.
\end{definition}

In this section, we consider welfare maximization with public-budget
regular agents. With the results in the previous sections, we give
both the upper bound and the lower bound of revelation gap for this
objective.
\begin{theorem}
For public-budget regular i.i.d.\ agents, the revelation gap for welfare maximization is at most $e$.
Specifically, this upper bound considers prior-independent DSIC, ex-post IR mechanisms.
\end{theorem}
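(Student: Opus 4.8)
The plan is to assemble the theorem from the two approximation results already proved, since by \Cref{def:clinching}-adjacent terminology the revelation gap is a \emph{ratio}: the numerator is the best prior-independent approximation attainable by incentive-compatible (revelation) mechanisms, and the denominator is the best prior-independent approximation attainable by arbitrary (possibly non-revelation) mechanisms. Writing $\beta_{\mathrm{rev}}$ for the former and $\beta$ for the latter, it suffices to show $\beta_{\mathrm{rev}} \le e$ and $\beta = 1$, whence the revelation gap is at most $e/1 = e$.

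For the denominator I would invoke \Cref{thm: all-pay regular}: the all-pay auction is prior-independent (its description references neither $F$ nor the budget beyond its public value), and its unique Bayes--Nash equilibrium is welfare-optimal for every public-budget regular i.i.d.\ distribution, so it is a prior-independent $1$-approximation to $\OPT_F$. Since $\OPT_F$ upper bounds the equilibrium welfare of \emph{any} mechanism on $F$, the prior-independent approximation of any mechanism is at least $1$; hence $\beta = 1$.

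For the numerator I would use the clinching auction. It is prior-independent (again its definition does not refer to $F$), and by the cited proposition of \citet{DLN-08} it is DSIC and ex-post IR, so truthtelling is an equilibrium and it is a revelation mechanism; moreover its equilibrium outcome is exactly the DSIC outcome analyzed in \Cref{thm:clinching auction e approx}, which shows it obtains at least a $1/e$ fraction of the welfare of $\OPT_F$ for every public-budget regular i.i.d.\ distribution. Therefore $\beta_{\mathrm{rev}} \le e$, and, since the clinching auction lies in the more restrictive class of prior-independent DSIC, ex-post IR mechanisms, the bound holds even when the numerator is taken over that class --- this is the refinement stated in the ``Specifically'' clause.

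Combining the two bounds gives the claimed revelation gap of at most $e$. There is no real technical obstacle here; the only points requiring care are bookkeeping ones --- that the performance of a prior-independent revelation mechanism is measured at its truthtelling equilibrium (which for the clinching auction is the DSIC outcome already bounded), and that ``incentive compatible'' in the definition of the revelation gap is satisfied by (indeed, strengthened to) ``DSIC, ex-post IR'' so that the upper bound on the numerator is legitimate.
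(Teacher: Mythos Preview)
Your proposal is correct and matches the paper's approach: the paper's proof simply cites that the clinching auction is a prior-independent DSIC, ex-post IR mechanism and invokes \Cref{thm:clinching auction e approx} to conclude the revelation gap is at most $e$. Your treatment of the denominator via \Cref{thm: all-pay regular} is more explicit than necessary for the \emph{upper} bound---since $\beta\ge 1$ holds trivially, $\beta_{\mathrm{rev}}\le e$ already yields revelation gap $\le e$---but it is correct and harmless (and indeed the paper uses exactly that $\beta=1$ fact in the companion \emph{lower}-bound theorem).
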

\begin{proof}
This upper bound is given by considering the clinching auction which is a 
%In Section \ref{s:clinching}, the clinching auction is a 
prior-independent DSIC and ex-post IR mechanism.
\Cref{thm:clinching auction e approx} says that the clinching auction is 
an $e$-approximation to the welfare-optimal mechanism for public-budget regular agents.
Thus, the revelation gap is at most $e$.
\end{proof}

For the lower bound, we use the result in \Cref{s:DSIC} where we
solve the welfare-optimal DSIC mechanism for two agent with uniformly
distributed values.  Note that for two-agent environments, the DSIC
ex-post IR constraints are equivalent to prior-independent BIC and IIR
constraints.\footnote{Prior-independent BIC and IIR mechanisms are the
  mechanisms which are BIC and IIR for all i.i.d.\ distributions.
  This property is stronger than BIC (for a single distribution) but
  generally weaker than DSIC.}  With more than two agents, this
equivalence does not generally hold.

\begin{lemma}
For two i.i.d.\ agents, a mechanism is Bayesian incentive compatible
and interim individual rational for all i.i.d.\ distributions if and
only if it is dominant strategy incentive compatible and ex-post
individually rational.
\end{lemma}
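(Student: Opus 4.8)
The plan is to prove the two implications separately; the substance is all in the ``only if'' direction. The ``if'' direction is routine: if the mechanism is DSIC and ex-post IR then, for any i.i.d.\ distribution $\dist$, its interim allocation $\val_1\mapsto\expect[\val_2\sim\dist]{\alloc(\val_1,\val_2)}$ is an average of the ex-post allocation rules $\alloc(\cdot,\val_2)$, each monotone non-decreasing by the DSE characterization, hence itself monotone; integrating the ex-post payment identity over $\val_2$ gives the interim payment identity (Fubini); and averaging the ex-post IR inequality gives interim IR. By the \citet{mye-81} characterization this is precisely BIC and IIR, and it holds for every $\dist$.

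For the ``only if'' direction, let $\alloc(\cdot,\cdot)$ and $\price(\cdot,\cdot)$ be the (symmetric) allocation and payment rules induced by truthtelling, with the convention $\alloc_1(a,b)=\alloc_2(b,a)=\alloc(a,b)$. It suffices to show that for every fixed opponent value $s\in[0,\highestval]$: the map $\val\mapsto\alloc(\val,s)$ is monotone non-decreasing, $\price(\val,s)=\val\,\alloc(\val,s)-\int_0^{\val}\alloc(t,s)\,dt$, and $\val\,\alloc(\val,s)\ge\price(\val,s)$ for all $\val$; these are exactly the DSE characterization together with ex-post IR, and the analogous statements for agent $2$ follow from symmetry of the mechanism. (That this reduces to a one-parameter family of conditions — one per opponent value $s$ — is exactly where $n=2$ is used.) The difficulty is that the obvious candidate distribution, a point mass $\dist=\delta_s$, is useless: it has a single realizable type, so BIC relates $\val=s$ to nothing.

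The fix is to perturb. For each $s\in[0,\highestval]$ and each $\epsilon\in(0,1)$, let $\dist_{s,\epsilon}$ be the i.i.d.\ distribution that places probability $1-\epsilon$ on $s$ and spreads the remaining $\epsilon$ uniformly over $[0,\highestval]$. This distribution has full support $[0,\highestval]$, so BIC and IIR for $\dist_{s,\epsilon}$ apply at \emph{every} value $\val\in[0,\highestval]$: writing $\alloc_\epsilon(\val)=\expect[\val_2\sim\dist_{s,\epsilon}]{\alloc(\val,\val_2)}$ and $\price_\epsilon(\val)=\expect[\val_2\sim\dist_{s,\epsilon}]{\price(\val,\val_2)}$, the function $\alloc_\epsilon$ is monotone non-decreasing, $\price_\epsilon(\val)=\val\,\alloc_\epsilon(\val)-\int_0^{\val}\alloc_\epsilon(t)\,dt$, and $\val\,\alloc_\epsilon(\val)\ge\price_\epsilon(\val)$. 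By construction $\alloc_\epsilon(\val)=(1-\epsilon)\,\alloc(\val,s)+\epsilon\,\bar\alloc(\val)$ and $\price_\epsilon(\val)=(1-\epsilon)\,\price(\val,s)+\epsilon\,\bar\price(\val)$, where $\bar\alloc(\val)$ and $\bar\price(\val)$ are the averages of $\alloc(\val,\cdot)$ and $\price(\val,\cdot)$ against the uniform distribution on $[0,\highestval]$ — finite constants independent of $\epsilon$ (allocations lie in $[0,1]$, and $\bar\price(\val)$ is finite since the mechanism has a well-defined interim payment under $\dist_{s,\epsilon}$). Letting $\epsilon\downarrow0$: the $\epsilon\bar\alloc(\val)$ and $\epsilon\bar\price(\val)$ terms vanish, and bounded convergence on $[0,\val]$ lets the limit pass inside the integral, so we recover $\alloc(\cdot,s)$ monotone, $\price(\val,s)=\val\,\alloc(\val,s)-\int_0^{\val}\alloc(t,s)\,dt$, and $\val\,\alloc(\val,s)\ge\price(\val,s)$ for every $\val$. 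As $s$ was arbitrary, the mechanism is DSIC and ex-post IR.

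The step I expect to be the main obstacle is exactly this choice of perturbing family together with the limit: one has to use a distribution with full support (rather than one merely concentrated near $s$) so that the Myerson envelope conditions bind simultaneously at all values, and then verify that the monotonicity inequality, the payment identity, and the IR inequality all survive $\epsilon\to0$. This last part is routine given the uniform $[0,1]$ bound on allocations and the budget bound on payments, but it is the place where a careless choice of $\dist_{s,\epsilon}$ would break the argument. (To avoid atoms, one may instead put the weight $1-\epsilon$ on a uniform distribution over $[s-\eta,s+\eta]\cap[0,\highestval]$ and send $\eta\to0$ after $\epsilon\to0$; nothing else changes.)
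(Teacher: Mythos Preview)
Your proof is correct, but you take a longer route than the paper. The paper simply uses point masses: for each opponent value $s$, take the distribution $\delta_s$. Under this distribution the opponent's value is deterministically $s$, so agent~1's interim allocation and payment at any report $\val$ are exactly the ex-post quantities $\alloc(\val,s)$ and $\price(\val,s)$, and the BIC constraint at type $\val$ is literally the DSIC constraint at the profile $(\val,s)$. Varying $s$ over $[0,\highestval]$ recovers every DSIC and ex-post IR constraint in one line.

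Your dismissal of this approach (``a point mass $\dist=\delta_s$ is useless: it has a single realizable type, so BIC relates $\val=s$ to nothing'') rests on reading BIC as constraining only types in the support of $\dist$. Under the paper's definition---truthtelling is a BNE, i.e., the truthful strategy maps \emph{every} value in the type space $[0,\highestval]$ to a best response---BIC for $\delta_s$ binds at every $\val\in[0,\highestval]$, not just at $\val=s$, and the point-mass argument goes through directly. Your perturbation-and-limit construction is more robust in that it does not rely on this reading: since $\dist_{s,\epsilon}$ has full support, the BIC constraints unambiguously apply at every type, and sending $\epsilon\downarrow 0$ recovers the ex-post constraints. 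The cost is the bookkeeping of the limit (bounded convergence, finiteness of $\bar\price$); what you buy is immunity to the convention on zero-probability types.
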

\begin{proof}
The direction that
DSIC implies BIC for all i.i.d.\ distribution is trivial by the definition.
To show the other direction, for arbitrary value $\val$, consider the distribution 
which puts the whole mass on $\val$. 
These distributions break the interim constraints in BIC into
the ex-post constraints in DSIC for every value profiles.
Hence, BIC for all i.i.d.\ distribution implies DSIC for two agents setting.
%This argument only works when the number of agents is two.
\end{proof}

\begin{theorem}
For public-budget regular i.i.d.\ agents, the revelation gap for welfare maximization is at least $\gaplowerbound$.
% Specifically, this lower bound considers prior-independent BIC, IIR mechanisms.
\end{theorem}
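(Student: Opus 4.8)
The plan is to deduce this lower bound as a corollary of the DSIC-versus-BIC separation of \Cref{cor: middle-ironed clinching}, together with the two-agent equivalence (the preceding lemma) between prior-independent incentive compatibility and dominant-strategy incentive compatibility. First I would pin down the denominator of the revelation gap. In the two-agent public-budget regular environment, \Cref{thm: all-pay regular} says the all-pay auction is welfare-optimal for \emph{every} public-budget regular i.i.d.\ distribution, and it is prior-independent; hence the prior-independent approximation factor of unrestricted mechanisms is exactly $1$, i.e.\ the minimizing $\mech$ over all mechanisms achieves $\max_F \OPT_F/\mech = 1$. Therefore the revelation gap equals the prior-independent approximation factor of revelation (incentive-compatible) mechanisms, and it suffices to show that this factor is at least $\gaplowerbound$.

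Next I would fix the two-agent environment and take an arbitrary prior-independent revelation mechanism $\mech$, that is, one that is BIC and IIR for all i.i.d.\ distributions. By the preceding lemma (instantiating the BIC and IIR conditions at point-mass distributions), $\mech$ is DSIC and ex-post IR. Now evaluate $\mech$ on the specific distribution $F = U[0,\highestval]$ with $\highestval = 5.5$ and budget $\budget = 1$. Since $\mech$ is DSIC and ex-post IR, its expected welfare on $F$ is at most that of the welfare-optimal DSIC mechanism for this instance, which by \Cref{thm: middle-ironed clinching} is the middle-ironed clinching auction with the thresholds $\valp$, $\valdp$ identified in \Cref{cor: middle-ironed clinching}. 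That lemma computes the ratio of the welfare-optimal BIC mechanism -- which is the all-pay auction, namely $\OPT_F$ by \Cref{thm: all-pay regular} -- to this middle-ironed clinching auction to be $\gaplowerbound$ at $\highestval = 5.5$. Hence $\expect[\vals \sim F]{\OPT_F(\vals)} \big/ \expect[\vals \sim F]{\mech(\vals)} \ge \gaplowerbound$.

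Since this holds for every prior-independent revelation mechanism $\mech$, taking the worst case over distributions and then the infimum over such $\mech$ shows the prior-independent approximation factor of revelation mechanisms is at least $\gaplowerbound$; combined with the first paragraph, the revelation gap is at least $\gaplowerbound$. I do not expect a genuine obstacle here: all the technical weight has already been carried by the LP-duality characterization of the optimal DSIC mechanism in \Cref{s:DSIC} and by the numerical optimization in \Cref{cor: middle-ironed clinching}. The only point that needs care is the reduction in the second paragraph -- that a prior-independent revelation mechanism must be DSIC and ex-post IR in the two-agent case -- and this is exactly the content of the preceding lemma, so the present theorem is essentially a packaging of earlier results.
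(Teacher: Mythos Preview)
Your proposal is correct and follows essentially the same approach as the paper: establish that the all-pay auction gives a prior-independent approximation of $1$ for general mechanisms, invoke the two-agent equivalence between prior-independent BIC and DSIC, and then use \Cref{thm: middle-ironed clinching} together with \Cref{cor: middle-ironed clinching} on the uniform instance to lower-bound the approximation factor of any revelation mechanism by $\gaplowerbound$. Your write-up is slightly more explicit in spelling out the quantifier order and the comparison of an arbitrary DSIC mechanism to the optimal DSIC mechanism, but the argument is the same.
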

\begin{proof}
This lower bound is given by considering the all-pay auction
and the middle-ironed clinching auction.

As the characterization in \Cref{s:prelim}, the all-pay auction is 
a prior-independent mechanism. \Cref{thm: all-pay regular} says that
the all-pay auction is welfare-optimal for public budget regular agents.
Hence, the prior-independent approximation of
the all-pay auction is 1.

Next, we show that the prior-independent approximation of Bayesian incentive compatible mechanisms is at least $\gaplowerbound$.
\Cref{thm: middle-ironed clinching} says that
the middle-ironed clinching auction is Bayesian optimal DSIC mechanism for 
two agents with values drawn uniformly from $[0,\highestval]$.
Since for two agents case, 
the DSIC property is equivalent to  
the BIC for all i.i.d.\ distribution property,
\Cref{cor: middle-ironed clinching} suggests that
the prior-independent approximation of incentive compatible mechanisms is at least 
$\gaplowerbound$.

Thus, the revelation gap for welfare maximization is at least $\gaplowerbound$.
\qedhere

\end{proof}

\section{Welfare Approximation for Irregular Distribution}
\label{s:irregular}

In this section, we analyze the welfare approximation of the all-pay auction and the clinching auction 
for public budget agents without regularity assumption.
%\citet{PV-14} shows that the welfare-optimal mechanism is 

The main technique we use is the 
following lemma which relaxes the budget constraint to another constraint which upper bounds the wining probability of the highest value,
i.e.\ $\alloc(\highestval)$
where $\highestval$ is the highest value in the support of the distribution.

\begin{lemma}\label{lem:relax budget constraint}
Given any interim constraint $\calloc$ and budget $\budget$,
let $\valp$ be the value where the budget binds 
in $\calloc$ after ironing from $\valp$ to $\highestval$,
i.e.\ $\valp \cdot \starred \aalloc(\valp) - \int_{0}^{\valp} \calloc(t)dt = \budget$ where $\starred \aalloc(\valp) = \frac{1}{1-F(\valp)}\int_{\valp}^\highestval \starred \alloc(t)dF(t)$,
the averaging wining probability for value beyond $\valp$ in allocation $\calloc$. 
Any interim feasible and budget balanced allocation $\alloc$ 
satisfies 
$$\alloc(\highestval) \leq 2\starred \aalloc(\valp).$$

\end{lemma}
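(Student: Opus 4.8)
The plan is to bound the allocation at the top value $\highestval$ by using the budget constraint together with interim feasibility. First I would write the budget constraint for the allocation $\alloc$ at the top: since payments are given by the payment identity and the budget must not be exceeded at $\highestval$, we have $\highestval\cdot\alloc(\highestval) - \int_0^\highestval \alloc(t)\,dF(t) \leq \budget$ (this is the interim form of the budget constraint, exactly as it appears in the linear programs of \Cref{s:DSIC}). The quantity on the left is increasing in $\alloc(\highestval)$ and decreasing in the mass $\alloc$ places below, so to upper bound $\alloc(\highestval)$ we want a lower bound on $\int_0^\highestval \alloc(t)\,dF(t)$ in terms of $\alloc(\highestval)$ itself.

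The key observation is that $\alloc(\cdot)$ is monotone non-decreasing (the Myerson/DSIC monotonicity constraint), so $\alloc(t) \le \alloc(\highestval)$ everywhere, but more usefully, interim feasibility against the constraint $\calloc$ controls how little mass $\alloc$ can place in the tail. Specifically, for the value $\valp$ defined in the statement, interim feasibility gives $\int_{\valp}^{\highestval}\alloc(t)\,dF(t) \le \int_{\valp}^{\highestval}\calloc(t)\,dF(t) = (1-F(\valp))\,\starred\aalloc(\valp)$. I would then manipulate the budget inequality for $\alloc$ by splitting the integral at $\valp$: $\highestval\,\alloc(\highestval) - \int_0^{\valp}\alloc(t)\,dF(t) - \int_{\valp}^{\highestval}\alloc(t)\,dF(t) \le \budget$. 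Using monotonicity ($\alloc(t)\le\alloc(\highestval)$ on the tail, hence $\int_{\valp}^\highestval \alloc(t)\,dF(t) \le (1-F(\valp))\alloc(\highestval)$, while also $\alloc(t)\ge 0$ on $[0,\valp]$) and comparing with the analogous identity that defines $\valp$ for $\calloc$, namely $\valp\cdot\starred\aalloc(\valp) - \int_0^{\valp}\calloc(t)\,dF(t) = \budget$, I can subtract to eliminate $\budget$ and solve for $\alloc(\highestval)$ in terms of $\starred\aalloc(\valp)$.

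Carrying this out, the two budget relations combine to something like $\highestval\,\alloc(\highestval) - (1-F(\valp))\alloc(\highestval) \le \valp\cdot\starred\aalloc(\valp) + \big[\text{tail contribution of }\calloc\big]$, i.e.\ $F(\valp)\highestval\,\alloc(\highestval) \le \valp\,\starred\aalloc(\valp) + (1-F(\valp))\starred\aalloc(\valp)$ modulo the correct bookkeeping, and since $\valp \le \highestval$ and $1-F(\valp)\le 1$ one gets $\alloc(\highestval)\le 2\,\starred\aalloc(\valp)$ after bounding $F(\valp)\highestval \ge \valp$ or a comparable estimate. The main obstacle I anticipate is getting the bookkeeping of the two integrals exactly right — in particular handling the $\int_0^{\valp}$ term for $\alloc$ (which we only know is nonnegative) versus the $\int_0^{\valp}\calloc$ term in the definition of $\valp$, and being careful that the ironing in the definition of $\valp$ (averaging $\calloc$ above $\valp$) is what makes $\starred\aalloc(\valp)$ the right reference quantity rather than $\calloc(\valp)$ itself. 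Once the two scalar inequalities are lined up, the factor of $2$ falls out from the crude bounds $\valp\le\highestval$ and $1-F(\valp)\le 1$, which is exactly why the constant is $2$ and not tighter.
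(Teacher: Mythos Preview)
Your proposal has a genuine error in the payment identity: you write the budget constraint as
\[
\highestval\cdot\alloc(\highestval) - \int_0^{\highestval} \alloc(t)\,dF(t) \;\le\; \budget,
\]
but the Myerson payment identity integrates with respect to $dt$, not $dF(t)$ (see the preliminaries, and note that even the defining equation for $\valp$ in the lemma statement reads $\valp\cdot\starred\aalloc(\valp) - \int_0^{\valp}\calloc(t)\,dt = \budget$). With the wrong measure your subsequent bookkeeping cannot close: the inequalities you sketch, e.g.\ $F(\valp)\highestval \ge \valp$, are not true for general distributions, and the factors of $F(\valp)$ and $1-F(\valp)$ that appear in your third paragraph are artifacts of the error.

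With the correct identity the argument is much cleaner than what you outline, and this is what the paper does. From the definition of $\valp$ one immediately gets $\budget \le \valp\cdot\starred\aalloc(\valp)$ (drop the nonnegative integral). From the budget constraint for $\alloc$ and monotonicity,
\[
\budget \;\ge\; \highestval\,\alloc(\highestval) - \int_0^{\highestval}\alloc(t)\,dt
\;=\; \int_0^{\highestval}\bigl(\alloc(\highestval)-\alloc(t)\bigr)\,dt
\;\ge\; \valp\bigl(\alloc(\highestval)-\alloc(\valp)\bigr).
\]
Combining gives $\alloc(\highestval) \le \alloc(\valp) + \starred\aalloc(\valp)$. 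The remaining step, which your proposal does not identify, is to bound $\alloc(\valp)$ itself: monotonicity of $\alloc$ gives $\alloc(\valp) \le \frac{1}{1-F(\valp)}\int_{\valp}^{\highestval}\alloc(t)\,dF(t)$, and then interim feasibility bounds this by $\starred\aalloc(\valp)$. So $\alloc(\highestval) \le 2\,\starred\aalloc(\valp)$. No comparison of $\valp$ with $\highestval$ or of $F(\valp)$ with $1$ is needed; the factor of $2$ comes from the two additive copies of $\starred\aalloc(\valp)$, not from the crude estimates you anticipate.
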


\begin{proof}
Recall that $\val\primed $ is the value where the budget binds in $\calloc$ after ironing from $\valp$ to $\highestval$.
Thus,
\begin{align*}
\budget &= \valp \cdot \starred \aalloc(\valp) - \int_{0}^{\valp} \calloc(t)dt
\leq \valp \cdot \starred \aalloc(\valp).\\
\intertext{On the other hand, suppose $\alloc$ is budget balance,}
%$$\val\primed \cdot (\alloc(\highestval) - \alloc(\val\primed))\leq \budget.$$
\budget &\geq
\highestval \cdot \alloc(\highestval) - 
\int_{0}^{\highestval}\alloc(t)dt
%\geq 
%\valp\cdot \alloc(\highestval) - 
%\int_{0}^{\valp}\alloc(t)dt
\geq 
\valp\cdot(\alloc(\highestval) - \alloc(\valp))\\
\intertext{Suppose $\alloc$ is interim feasible, }
\alloc(\valp) &\leq 
\frac{1}{1 - F(\valp)}\int_{\valp}^\highestval \alloc(t)dt 
\leq 
\frac{1}{1 - F(\valp)}\int_{\valp}^\highestval \calloc(t)dt 
=
\starred \aalloc(\valp).\\
\intertext{Combine the inequalities above,}
\alloc(\highestval)
&\leq
\alloc(\valp) +\starred \aalloc(\valp)
\leq
2\starred \aalloc(\val \primed).
\qedhere
\end{align*}
\end{proof}

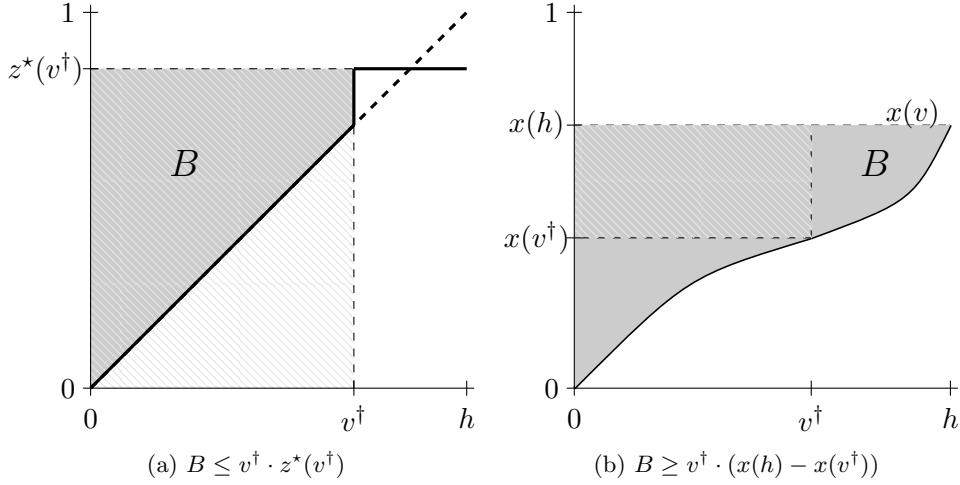
\begin{figure}
\centering
\subfloat[$\budget\leq \valp\cdot 
\starred \aalloc(\valp)$]
{\begin{tikzpicture}[scale = 0.5]
\fill[gray!40!white] (0, 0) -- (7, 7) -- (7, 8.5) -- (0, 8.5);

\fill[pattern=north west lines, pattern color=gray!20!white] (0, 0) -- (0, 8.5) 
-- (7, 8.5) -- (7, 0);

\draw (2.5, 6) node {\Large$\budget$};

\draw (-0.2,0) -- (10.2, 0);
\draw (0, -0.2) -- (0, 10.2);
\draw (7, 0.2) -- (7, -0.2);
\draw (10, 0.2) -- (10, -0.2);
\draw (-0.2, 8.5) -- (0.2, 8.5);
\draw (-0.2, 10) -- (0.2, 10);

\draw [dashed] (0, 8.5) -- (7, 8.5);
\draw [dashed] (7, 0) -- (7, 7);

\begin{scope}[very thick]

\draw (0, 0) -- (7, 7);
\draw (7, 7) -- (7, 8.5);
\draw (7, 8.5) -- (10, 8.5);
\draw [dashed] (0,0) -- (10, 10);
\end{scope}

\draw (0, -0.8) node {$0$};
\draw (7, -0.8) node {$\valp$};
\draw (10, -0.8) node {$\highestval$};

\draw (-0.6, 0) node {$0$};
\draw (-1.2, 8.5) node {$\starred \aalloc(\valp)$};
\draw (-0.6, 10) node {$1$};

%\draw (9, 7) node {$\calloc(\val)$};
\end{tikzpicture}

\begin{comment}

\begin{tikzpicture}[scale = 0.5]
\fill[gray!40!white] (0, 0) -- (7, 7) -- (0, 7) -- (0, 7);

\fill[pattern=north west lines, pattern color=gray!20!white] (0, 0) -- (0, 7) 
-- (7, 7) -- (7, 0);

\draw (2, 5) node {\Large$\budget$};

\draw (-0.2,0) -- (10.2, 0);
\draw (0, -0.2) -- (0, 10.2);
\draw (7, 0.2) -- (7, -0.2);
\draw (10, 0.2) -- (10, -0.2);
\draw (-0.2, 7) -- (0.2, 7);
\draw (-0.2, 10) -- (0.2, 10);

\draw [dashed] (0, 7) -- (7, 7);
\draw [dashed] (7, 0) -- (7, 7);

\begin{scope}[very thick]

\draw (0, 0) -- (10, 10);
\end{scope}

\draw (0, -0.8) node {$0$};
\draw (7, -0.8) node {$\valp$};
\draw (10, -0.8) node {$\highestval$};

\draw (-0.6, 0) node {$0$};
\draw (-1.0, 7) node {$\calloc(\valp)$};
\draw (-0.6, 10) node {$1$};

\draw (9, 7) node {$\calloc(\val)$};
\end{tikzpicture}
\end{comment}}
\subfloat[$\budget\geq \valp\cdot 
(\alloc(\highestval) - \alloc(\valp))$]
{\begin{tikzpicture}[scale = 0.5]

\begin{scope}[very thick]
\draw [name path global = A](0, 0) .. controls 
(3, 3) .. (6.3, 4);
\draw [name path global = B](6.3, 4) .. controls
(9, 5) .. (10, 7);

\end{scope}

%\draw [name path = C, color = gray!40!white]
\draw [name path = C, dashed]
(0, 7) -- (6.3, 7);
\draw [name path = D, dashed]
(6.3, 7) -- (10, 7);

\tikzfillbetween[of=A and C]{gray!40!white};
\tikzfillbetween[of=B and D]{gray!40!white};

\draw [dashed] (6.3, 4) -- (0, 4);
\draw [dashed] (6.3, 4) -- (6.3, 7);
\fill[pattern=north west lines, pattern color=gray!20!white] (0, 4) -- (0, 7) 
-- (6.3, 7) -- (6.3, 4);

%\fill[gray!40!white] (0, 0) -- (6.3, 6.3) 
%-- (6.3, 8) -- (0, 8);

%\draw[name path= A ] (0, 1) -- (5, 5);

\draw (8, 6) node {\Large$\budget$};

\draw (-0.2,0) -- (10.2, 0);
\draw (0, -0.2) -- (0, 10.2);
\draw (6.3, 0.2) -- (6.3, -0.2);
\draw (10, 0.2) -- (10, -0.2);
\draw (-0.2, 4) -- (0.2, 4);
\draw (-0.2, 10) -- (0.2, 10);
\draw (-0.2, 7) -- (0.2, 7);

%\draw [dashed] (0, 6.3) -- (6.3, 6.3);
%\draw [dashed] (6.3, 8) -- (0, 8);

\draw (0, -0.8) node {$0$};
\draw (6.3, -0.8) node {$\valp$};
\draw (10, -0.8) node {$\highestval$};

\draw (-0.6, 0) node {$0$};
\draw (-1.0, 4) node {$\alloc(\valp)$};
\draw (-1.0, 7) node {$\alloc(\highestval)$};
\draw (-0.6, 10) node {$1$};

\draw (9, 7.3) node {$\alloc(\val)$};
\end{tikzpicture}}
\caption{Proofs by picture of the upper bound
and lower bound on budget $\budget$.}
\end{figure}

\subsection*{The All-pay Auction}
First, we discuss the performance of all-pay auction for the irregular distribution. 
\citet{PV-14} show that the welfare-optimal interim allocation is both ironing top interval and perhaps ironing some other intervals in the middle. It turns out that even though the all-pay auction only irons the top interval, its welfare only suffers a modest loss.

\begin{theorem}\label{thm:all-pay 2-approx}
For public-budget i.i.d.\ agents, the all-pay auction is a 2-approximation to the welfare-optimal mechanism.
\end{theorem}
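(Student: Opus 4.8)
The plan is to compare the all-pay auction to the welfare-optimal mechanism by passing through the relaxation in \Cref{lem:relax budget constraint}. Throughout, let $\calloc(\val)=\dist^{n-1}(\val)$ be the highest-bid-wins interim constraint, let $\valp$ be the threshold of \Cref{lem:relax budget constraint}, and for a symmetric interim allocation $\alloc$ write $A(\val)=\int_\val^\highestval \alloc(t)\,d\dist(t)$ for its tail integral, with $C(\val)=\int_\val^\highestval \calloc(t)\,d\dist(t)$. Two preliminary facts: (i) by integration by parts, the expected welfare of any symmetric $n$-agent mechanism with interim allocation $\alloc$ equals $n\int_0^\highestval A(\val)\,d\val$ (since $\expect{\sum_i \vali\alloci}=n\int_0^\highestval \val\,\alloc(\val)\,d\dist(\val)$ and $A(\highestval)=0$); and (ii) the all-pay auction's equilibrium interim allocation $\alloc_{\text{AP}}$ is $\calloc$ ironed on the top interval $[\valp,\highestval]$, with ironed value $\starred\aalloc(\valp)$, where $\valp$ is exactly the value at which the budget binds for this ironing, i.e.\ the one in \Cref{lem:relax budget constraint}. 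Because $\starred\aalloc(\valp)$ is the $\dist$-average of $\calloc$ over $[\valp,\highestval]$, fact (ii) gives $A_{\text{AP}}(\val)=C(\val)$ for $\val\le\valp$ and $A_{\text{AP}}(\val)=\starred\aalloc(\valp)\,(1-\dist(\val))$ for $\val\ge\valp$.

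Next I would bound the welfare-optimal mechanism. Let $\alloc^\star$ be its interim allocation. Since the environment is single-item, $\alloc^\star$ is interim feasible with respect to $\calloc=\dist^{n-1}$, so $A^\star(\val)\le C(\val)$ for all $\val$; since the mechanism is budget-balanced, \Cref{lem:relax budget constraint} gives $\alloc^\star(\highestval)\le 2\starred\aalloc(\valp)$, hence by monotonicity $\alloc^\star(t)\le 2\starred\aalloc(\valp)$ for every $t$ and therefore $A^\star(\val)\le 2\starred\aalloc(\valp)\,(1-\dist(\val))$ for all $\val$. Combining these two bounds with the formula for $A_{\text{AP}}$ yields the pointwise inequality $A^\star(\val)\le 2A_{\text{AP}}(\val)$: on $[0,\valp]$ use $A^\star(\val)\le C(\val)=A_{\text{AP}}(\val)$, and on $[\valp,\highestval]$ use $A^\star(\val)\le 2\starred\aalloc(\valp)(1-\dist(\val))=2A_{\text{AP}}(\val)$. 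Integrating over $[0,\highestval]$ and invoking fact (i), the welfare of the optimal mechanism is $n\int_0^\highestval A^\star(\val)\,d\val\le 2n\int_0^\highestval A_{\text{AP}}(\val)\,d\val$, which is twice the equilibrium welfare of the all-pay auction; this proves the theorem. Note this argument makes no case distinction between ``budget binds'' and ``allocation binds'': when $2\starred\aalloc(\valp)\ge 1$ the second bound is vacuous on $[\valp,\highestval]$ but $A^\star(\val)\le 1-\dist(\val)\le 2\starred\aalloc(\valp)(1-\dist(\val))$ still holds.

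The main obstacle is fact (ii): pinning down the all-pay auction's equilibrium interim allocation for irregular distributions and, in particular, verifying that its ironing threshold coincides with the $\valp$ of \Cref{lem:relax budget constraint}. The subtlety is that, unlike the unconstrained case, the flat (ironed) region does not begin where the ``efficient'' all-pay bid $\val\calloc(\val)-\int_0^\val\calloc(t)\,dt$ first reaches $\budget$; rather, it begins at the value $\valp$ for which the interim payment of the top-ironed allocation equals $\budget$, which forces a gap in the equilibrium bid distribution just below $\budget$. One should check the incentive constraints for this candidate equilibrium --- no type below $\valp$ wants to jump up to bid $\budget$ and no type at or above $\valp$ wants to undercut $\budget$ --- which, using that $\calloc$ is nondecreasing (so $\calloc(\valp)\le\starred\aalloc(\valp)$), reduces at the threshold to the identity $\valp\,\starred\aalloc(\valp)-\int_0^\valp\calloc(t)\,dt=\budget$; this is precisely the defining equation of $\valp$ in \Cref{lem:relax budget constraint}. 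The remaining ingredients --- the integration-by-parts identity for welfare, interim feasibility of the optimal single-item allocation with respect to $\dist^{n-1}$, and the monotonicity of $\starred\aalloc(\cdot)$ used implicitly --- are routine.
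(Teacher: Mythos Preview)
Your proof is correct and follows essentially the same approach as the paper: invoke \Cref{lem:relax budget constraint}, split the comparison at $\valp$, use interim feasibility below $\valp$ and the cap $\alloc(\highestval)\le 2\starred\aalloc(\valp)$ above. Your tail-integral formulation (bounding $A^\star\le 2A_{\text{AP}}$ pointwise and integrating) is a slightly cleaner packaging than the paper's, which introduces an intermediate relaxed optimum ironing on $[\valdp,\highestval]$ before making the same comparison; your final paragraph justifying that the all-pay equilibrium threshold coincides with the $\valp$ of \Cref{lem:relax budget constraint} also fills in a detail the paper simply asserts.
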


\begin{proof}

Applying Lemma \ref{lem:relax budget constraint}, we relax the budget constraint to the constraint that $\alloc(\highestval) \leq 2\starred\aalloc(\val\primed)$.

Denote $\alloc^0$ as the welfare-optimal interim feasible and budget balanced allocation and $\alloc$ as the welfare-optimal interim feasible allocation under the relaxed constraint,
then $\welfare[\alloc^0] \leq \welfare[\alloc]$.

Since $\alloc$ maximizes the welfare under interim feasibility 
constraint $\calloc$, it allocates as by $\calloc$ 
except that values in $[\valdp,\highestval]$ are ironed. 
The threshold $\valdp$ is selected such that 
$\alloc(\highestval) = 2\starred\aalloc(\valp)$.
By definitions of $\valp$ and $\valdp$, we know $\valp \leq \valdp$.
Consider $\alloc$ for values below and beyond $\valp$ separately.
For value below $\valp$, the expected welfare
$\int_0^{\valp}\val\cdot\alloc(\val)dF(\val)
=
\int_0^{\valp}\val\cdot\calloc(\val)dF(\val).
$
For value beyond $\valp$,
the expected welfare
$\int_{\valp}^\highestval\val\cdot\alloc(\val)dF(\val)
\leq
\int_{\valp}^\highestval\val\cdot 2\starred\aalloc(\valp)dF(\val).
$

Notice that $\valp$ coincides with the threshold in the all-pay auction,
and the all-pay auction allocates as by $\calloc$ 
except value beyond $\valp$ win with probability 
$\starred\aalloc(\valp)$. 
Thus,
$\welfare[\alloc] \leq 2\cdot \welfare[\text{All-pay}]
$.
\end{proof}

In fact, the 2-approximation bound is tight.

\begin{lemma}
There exists the instance where the welfare of the all-pay auction is half of welfare-optimal mechanism.
\end{lemma}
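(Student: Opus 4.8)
The plan is to exhibit a family of irregular distributions on which the chain of inequalities in the proof of \Cref{thm:all-pay 2-approx} is asymptotically tight, and to verify tightness by a direct computation: an exact formula for the welfare of the all-pay auction together with a lower bound on the optimal welfare coming from an explicit budget-feasible mechanism. Recall that that proof loses a factor of two in exactly two places -- it discards the welfare contributed by values below the all-pay threshold $\valp$, and it replaces the ironed allocation on $[\valp,\highestval]$ by twice the all-pay allocation $\starred\aalloc(\valp)$ there (via the relaxed budget bound of \Cref{lem:relax budget constraint}). So I want an instance in which essentially no welfare comes from below $\valp$, and in which the welfare-optimal mechanism serves the value-mass above $\valp$ at an interim rate of essentially $2\starred\aalloc(\valp)$.

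Concretely, I would fix the number of agents $n$ -- the relaxed bound $\alloc(\highestval)\le 2\starred\aalloc(\valp)$ is vacuous unless $\starred\aalloc(\valp)<\sfrac{1}{2}$, so tightness requires $n\ge 3$, and I would take $n$ large (and the highest value $\highestval\to\infty$) for the most slack -- and take $\dist$ to be a perturbation of a multi-level distribution: a negligible amount of value-mass near zero, a block of mass at a moderate value, and a sparse tail of values up to $\highestval$, with the block and tail masses scaled appropriately in $n$. The budget $\budget$ is chosen so that the all-pay threshold $\valp$ sits at (or just below) the moderate block; then the moderate block is exactly the ``middle'' interval that the welfare-optimal mechanism of \citet{PV-14} irons, whereas the all-pay auction, which irons only the top interval $[\valp,\highestval]$, pools the block together with the tail. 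I would then compute $\welfare[\text{All-pay}]$ in closed form from the characterization (used in the proof of \Cref{thm:all-pay 2-approx}) that the all-pay interim allocation equals $\calloc(\cdot)=\dist^{n-1}(\cdot)$ below $\valp$ and the constant $\starred\aalloc(\valp)$ above $\valp$; by design this is dominated by the tail term $\starred\aalloc(\valp)\int_\valp^{\highestval}\val\,d\dist(\val)$. For the lower bound on the optimal welfare I would write down an explicit budget-feasible mechanism of the \citet{PV-14} form -- following $\calloc$ on the low range, ironing the moderate block \emph{down} to a smaller constant, and, since this frees budget, ironing the tail from a higher threshold so that tail values receive close to $2\starred\aalloc(\valp)$ -- and compute its welfare as a lower bound on that of the Bayesian optimal mechanism. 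Taking the limit, the ratio of optimal welfare to $\welfare[\text{All-pay}]$ tends to $2$, which with \Cref{thm:all-pay 2-approx} gives the claim.

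The step I expect to be the main obstacle is the construction itself together with the feasibility check: the two requirements -- negligible welfare below $\valp$, and an optimal-side allocation of nearly $2\starred\aalloc(\valp)$ above $\valp$ -- pull the parameters in opposite directions once one insists that the near-optimal mechanism actually respects the per-agent budget and ex-post feasibility. The delicate balance is to iron the moderate block down by just enough to free the budget that pays for the doubled tail allocation, while keeping the welfare lost on the block negligible compared to the welfare gained on the tail; most of the work is in pinning down the block and tail masses (and the local shape of $\dist$ near $\valp$) that make both factor-two losses in the proof of \Cref{thm:all-pay 2-approx} simultaneously tight, and in checking that the lower-bound mechanism is budget feasible.
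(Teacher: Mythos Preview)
Your plan is the paper's: a multi-level distribution, an exact formula for the all-pay welfare, an explicit budget-feasible alternative whose tail allocation is roughly $2\starred\aalloc(\valp)$, and a limit. The paper's instance is the three-point distribution on $\{N-\epsilon,\,N,\,N^3\}$ with $N+1$ agents, masses $\tfrac{1}{N+1},\,\tfrac{N-1}{N+1},\,\tfrac{1}{N+1}$, and $\budget=1$; the all-pay auction irons $\{N,N^3\}$ at allocation $\approx\tfrac1N$, while the comparison mechanism irons $\{N-\epsilon,N\}$ together and gives $N^3$ allocation $\tfrac{2}{N+1}$, with the welfare ratio tending to $2$ as $N\to\infty$.

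One piece of your intuition is backwards, though, and following it would not produce a feasible lower-bound mechanism. ``Ironing the moderate block down to a smaller constant'' does not free budget for the tail: by the payment identity $\price(\highestval)=\highestval\,\alloc(\highestval)-\int_0^{\highestval}\alloc(t)\,dt$, lowering $\alloc$ on the block \emph{raises} $\price(\highestval)$. The mechanism that works keeps the block's allocation essentially at $\starred\aalloc(\valp)$; what changes is that the all-pay auction burns its entire budget on the payment step between the low value and the block (distinguishing $N-\epsilon$ from $N$, which contributes almost nothing to welfare), whereas the alternative eliminates that step by ironing the low value \emph{with} the block and spends the same budget on the step from the block up to the tail. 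This is also why the paper puts the low value at $N-\epsilon$, adjacent to the block and with mass $\tfrac{1}{N+1}$ equal to the tail's, rather than a ``negligible'' atom ``near zero'': the low atom is not there to make the sub-$\valp$ welfare small, it is there to absorb the ex-ante probability that must be shed when the tail's share is doubled, and pooling it with the block must cost nothing.
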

\begin{proof}
Consider the following single-item instance with budget $\budget = 1$. 
There are  $N + 1$ agents with valuation distribution
\begin{align*}
\begin{split}
\val = \left\{
\begin{array}{ll}
N - \epsilon & \qquad \text{w.p. } 
\frac{1}{N + 1},\\
N & \qquad \text{w.p. } 
\frac{N - 1}{N + 1},\\
N^3 & \qquad \text{w.p. } 
\frac{1}{N + 1}.\\
\end{array}
\right.
\end{split}
\end{align*}

In the all-pay auction, the interim allocation rule irons values $N$ and $N^3$, 
\begin{align*}
\begin{split}
x(v) = \left\{
\begin{array}{ll}
\delta & \quad \text{if } 
v = N - \epsilon,\\
\frac{1 - \delta}{N} & \quad \text{if } 
v = N \text{ or } N^3,\\
\end{array}
\right.
\quad \text{
where $\delta = (\tfrac{1}{(N + 1)})^{N + 1} \rightarrow 0$.}
\end{split}
\end{align*}
where the expected welfare is roughly $N + 1$.

Notice that in the all-pay auction, the mechanism uses almost all 
budget to distinguish values $N -\epsilon$ and $N$ whose contribution
to the expected welfare is almost the same.
Therefore, consider the auction which irons values $N - \epsilon$ and $N$, and moves some winning probability from value $N^3$ to values $N-\epsilon, N$ as follows,
\begin{align*}
\begin{split}
x(v) = \left\{
\begin{array}{ll}
\frac{N - 1}{N(N + 1)} & \qquad \text{if } 
v = N - \epsilon \text{ or } N,\\
\frac{2}{N + 1} & \qquad \text{if } 
v = N^3.\\
\end{array}
\right.
\end{split}
\end{align*}
where the expected welfare is roughly $2N + 1$.

Let $N \rightarrow \infty$ and $\epsilon \rightarrow 0$, the expected welfare from the all-pay auction is exactly half of the expected welfare from the optimal auction.
\end{proof}

\subsection*{The Clinching Auction}
To analyze the welfare approximation of the clinching auction for
irregular distributions, we follow almost the same argument as for
regular distributions.  The only difference in the argument is that
the ex ante welfare-optimal mechanism may not be a simple posted price
for a probabilistic allocation.  However, by Lemma~\ref{lem:relax
  budget constraint}, such a posted pricing is still a
2-approximation.
\begin{lemma}\label{lem:PO approx}
For public-budget i.i.d.\ agents, the posted pricing described in
Lemma~\ref{lem:PO} is a 2-approximation to the ex ante welfare-optimal
mechanism.
\end{lemma}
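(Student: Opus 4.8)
The plan is to mirror the proof of \Cref{thm:all-pay 2-approx}, with the highest-bid-wins constraint replaced by the interim constraint that encodes the $\sfrac{1}{n}$ ex ante relaxation. Since each agent is allocated at most $\sfrac{1}{n}$ in expectation, the ex ante relaxation is exactly the single-agent problem under the interim constraint $\calloc(\val)=\mathbf{1}\{\val\geq\val^\circ\}$ with $\val^\circ=F^{-1}(1-\sfrac{1}{n})$: for this $\calloc$ one has $\int_{\val}^{\highestval}\calloc(t)\,dF(t)=\min\{\sfrac{1}{n},\,1-F(\val)\}$, which together with $\alloc\leq 1$ is precisely the constraint $\expect[\val]{\alloc(\val)}\leq\sfrac{1}{n}$. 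If the budget does not bind here, i.e.\ $\val^\circ\leq\budget$, then posting price $\val^\circ$ for allocation probability one (the second case of \Cref{lem:PO}) allocates to the top $\sfrac{1}{n}$ mass of values and has payment $\val^\circ\leq\budget$, so it is exactly ex ante welfare-optimal and the claim is immediate; hence assume $\val^\circ>\budget$, so the budget binds.

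I would then apply \Cref{lem:relax budget constraint} to this $\calloc$ and budget $\budget$, obtaining the threshold $\valp$ at which the budget binds in $\calloc$ after ironing from $\valp$ to $\highestval$, together with the relaxed constraint $\alloc(\highestval)\leq 2\starred\aalloc(\valp)$ that holds for every interim feasible, budget balanced allocation. Since $\frac{\budget}{\val}(1-F(\val))$ is non-increasing in $\val$, equals $\sfrac{1}{n}$ at $\valp$, and is strictly below $\sfrac{1}{n}$ at $\val^\circ$ when the budget binds, we get $\valp<\val^\circ$; hence $\int_{0}^{\valp}\calloc(t)\,dt=0$ and $\starred\aalloc(\valp)=\frac{1}{1-F(\valp)}\int_{\valp}^{\highestval}\calloc(t)\,dF(t)=\frac{1}{n\,(1-F(\valp))}$, so the defining equation $\valp\cdot\starred\aalloc(\valp)=\budget$ is exactly $\sfrac{1}{n}=\frac{\budget}{\valp}(1-F(\valp))$, which is the equation defining $\valp$ in the budget-binding case of \Cref{lem:PO}. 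Moreover $\budget/\valp=\starred\aalloc(\valp)$, so the posted pricing of \Cref{lem:PO} hands allocation probability $\starred\aalloc(\valp)$ to every value $\val\geq\valp$ and $0$ to values below $\valp$ --- i.e.\ it is exactly the allocation obtained by ironing $\calloc$ from $\valp$ to $\highestval$ --- and its payment at the top is $\valp\cdot\starred\aalloc(\valp)=\budget$, so it is budget-feasible.

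Finally, let $\alloc^0$ be the ex ante welfare-optimal (interim feasible, budget balanced) allocation and let $\alloc$ be the welfare-optimal interim feasible allocation under the relaxed constraint $\alloc(\highestval)\leq 2\starred\aalloc(\valp)$; relaxing a constraint only helps, so $\welfare[\alloc^0]\leq\welfare[\alloc]$. Exactly as in \Cref{thm:all-pay 2-approx} (via \Cref{thm: opt welfare}), $\alloc$ follows $\calloc$ except that values in $[\valdp,\highestval]$ are ironed for the $\valdp\geq\valp$ making $\alloc(\highestval)=\min\{1,\,2\starred\aalloc(\valp)\}$; in particular $\alloc(\val)=\calloc(\val)=0$ for $\val<\valp$, and $\alloc(\val)\leq 2\starred\aalloc(\valp)$ for $\val\geq\valp$ (using $\alloc\leq 1$ in the harmless case $2\starred\aalloc(\valp)\geq 1$). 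Comparing against the posted pricing, which allocates $0$ below $\valp$ and $\starred\aalloc(\valp)$ above $\valp$,
\[
\welfare[\alloc]=\int_{\valp}^{\highestval}\val\cdot\alloc(\val)\,dF(\val)\leq 2\int_{\valp}^{\highestval}\val\cdot\starred\aalloc(\valp)\,dF(\val)=2\,\welfare[\allocPO],
\]
and hence $\welfare[\alloc^0]\leq 2\,\welfare[\allocPO]$, which is the claim. The one place needing care is the middle paragraph: matching the $\valp$ of \Cref{lem:relax budget constraint} to the posted price of \Cref{lem:PO} and recognizing the posted pricing as the $\valp$-ironing of $\calloc$; once that is in hand the factor-$2$ bound is a verbatim repetition of the all-pay argument, with the degenerate situations ($\val^\circ\leq\budget$ and $2\starred\aalloc(\valp)\geq 1$) dispatched as above.
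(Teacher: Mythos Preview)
Your proof is correct and follows the same approach as the paper's (one-line) proof: set $\calloc$ to be the step function encoding the $\sfrac{1}{n}$ ex ante constraint, apply \Cref{lem:relax budget constraint}, and repeat the comparison from \Cref{thm:all-pay 2-approx}. Your added detail---matching the $\valp$ of \Cref{lem:relax budget constraint} to the $\valp$ of \Cref{lem:PO} and recognizing the posted pricing as the $\valp$-ironing of $\calloc$---is exactly what the paper leaves implicit. One small note: your parenthetical ``(via \Cref{thm: opt welfare})'' is misplaced, since that theorem assumes public-budget regularity while this lemma does not; the iron-the-top form of the welfare optimum under the relaxed cap $\alloc(\highestval)\leq c$ holds here for elementary reasons (welfare equals $\int_0^{\highestval}\bigl(\int_v^{\highestval}\alloc\,dF\bigr)\,dv$, and both interim feasibility and the cap bound the inner integral pointwise by the ironed allocation's), not because of regularity.
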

\begin{proof}
Consider the interim constraint $\calloc(\val) = 1$ if $\dist(\val)
\geq 1/n$ and $\calloc(\val) = 0$ if $\dist(\val) < 1/n$. Applying the similar argument in
Theorem~\ref{thm:all-pay 2-approx} with Lemma~\ref{lem:relax budget constraint}, the lemma holds.
\end{proof}

The following corollary is combines Lemma~\ref{lem:PO approx} with
\autoref{thm:clinching auction e approx}.

\begin{corollary}\label{cor:clinching auction 2e approx}
For public-budget i.i.d.\ agents, the clinching auction is a
$2e$-approximation to the welfare-optimal mechanism.
\end{corollary}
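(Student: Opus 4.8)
The plan is to chain the two ingredients named in the statement — Lemma~\ref{lem:PO approx} and \autoref{thm:clinching auction e approx} — through the ex ante relaxation, exactly as in the regular case but absorbing one extra factor of two. First I would recall the upper bound used in \Cref{s:clinching}: since the agents are symmetric, in any feasible single-item mechanism each agent wins with ex ante probability exactly $\sfrac{1}{n}$, so replacing the ex post feasibility constraint $\sum_i \alloci(\vals)\le 1$ by the per-agent ex ante constraint $\expect[\val]{\alloc(\val)}\le\sfrac{1}{n}$ can only increase the optimal welfare; hence the ex ante welfare-optimal mechanism upper bounds the welfare-optimal mechanism \emph{without} any regularity assumption.

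Next I would interpose the posted pricing $\PO$ of \Cref{lem:PO} as an intermediate benchmark. For irregular distributions $\PO$ need not be the ex ante optimum, but \Cref{lem:PO approx} (via \Cref{lem:relax budget constraint}) shows $\welfare[\PO]$ is at least half of the ex ante optimal welfare. Then I would re-run the argument in the proof of \autoref{thm:clinching auction e approx} verbatim, but against $\PO$ in place of the ex ante optimum: the interim allocation $\allocPO(\cdot)$ is a step function that steps at $\valp$; \Cref{lem:PO-CL} gives that the clinching auction with lotteries (with the appropriate number $k=\lceil k_0\rceil$ of lotteries, $k_0 = 1/\allocPO(\highestval)$) satisfies $\allocCL(\valp)\ge\tfrac1e\,\allocPO(\valp)$; since $\allocCL(\cdot)$ is monotone it therefore pointwise $e$-approximates the step function $\allocPO(\cdot)$, whence $\welfare[\CL]\ge\tfrac1e\,\welfare[\PO]$; and finally \Cref{lem:CL} gives that on every value profile the clinching auction weakly dominates the clinching auction with lotteries, so $\welfare[\text{Clinch}]\ge\welfare[\CL]$. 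Stringing these together, $\welfare[\text{Clinch}]\ge\tfrac1e\welfare[\PO]\ge\tfrac1{2e}\welfare[\text{ex-ante OPT}]\ge\tfrac1{2e}\welfare[\text{OPT}]$, which is the claim.

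The one thing I would have to verify carefully — and which I expect to be the only real obstacle — is that none of \Cref{lem:PO-CL}, \Cref{lem:CL}, nor the monotonicity/step-function reduction in the proof of \autoref{thm:clinching auction e approx} secretly relies on regularity. Inspecting those proofs, regularity entered \Cref{s:clinching} \emph{only} through \Cref{lem:PO}'s identification of the ex ante optimum as $\PO$: the order-statistic computation in \Cref{lem:PO-CL} uses only the defining relations $F(\valp)=1-k_0/n$ and $\valp\le\budget\, k_0$ of the posted pricing, and \Cref{lem:CL} is a purely mechanical statement about the clinching auction. Thus the substitution of ``ex ante optimum $=\PO$'' by ``$\PO$ is a $2$-approximation of the ex ante optimum'' is the only change, and it propagates as a single multiplicative factor of $2$. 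I would therefore present the proof compactly as: apply the argument of \autoref{thm:clinching auction e approx} with $\PO$ in place of the ex ante optimal mechanism to conclude that the clinching auction is an $e$-approximation to $\PO$, and then compose with \Cref{lem:PO approx}.
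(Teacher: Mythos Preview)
Your proposal is correct and is exactly what the paper intends: the paper's ``proof'' of this corollary is the single sentence that it ``combines Lemma~\ref{lem:PO approx} with \autoref{thm:clinching auction e approx},'' and you have accurately unpacked that combination, including the key observation that regularity enters the \Cref{s:clinching} argument only through \Cref{lem:PO}'s identification of the ex ante optimum with $\PO$. One cosmetic point: your parenthetical ``each agent wins with ex ante probability exactly $\sfrac{1}{n}$'' should say ``at most $\sfrac{1}{n}$'' for an arbitrary feasible mechanism; the inequality is all that is needed for the ex ante relaxation to upper bound the true optimum.
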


%% \begin{proof}
%% \autoref{thm:clinching auction e approx} shows that the clinching
%% auction is an $e$-approximation to the optimal ex ante posted pricing
%% (i.e., the mechanism described in Lemma~\ref{lem:PO}).
%% Lemma~\ref{lem:PO approx} shows that the optimal ex ante posted
%% pricing is a 2-approximation to ex ante optimal wefare.  Combinine
%% these results gives the corollary.
%% \end{proof}

\section{Welfare Approximation of Losers-pay-nothing Mechanisms}
\label{s:winner}

Both the all-pay and clinching auctions discussed previously allocate
probabilistically and agents make payments even when they lose.  These
mechanisms contrast with the first-price and second-price auctions
which have the property that losers pay nothing.  In this section we
will show that losers-pay-nothing mechanisms do not give good
approximations to the welfare-optimal mechanism.

Our approach is to prove three results.  First, we show that when
optimizing over losers-pay-nothing mechanisms, it is without loss in
the objective to consider only winner-pays-bid mechanisms.  Second, we
show that for public-budget regular agents and a single-item the the
welfare-optimal winner-pays-bid mechanism is highest bid wins, i.e.,
it is the first-price auction. Third, we show that there is a
public-budget regular distribution where the welfare of the
first-price auction is a linear factor from the welfare of the optimal
mechanism (which is the all-pay auction).  Combining these results we
see that losers-pay-nothing mechanisms are not good approximation
mechanisms.

The main difference between winner-pays-bid and all-pay mechanisms is
that the budget binds over a larger range of values for
winner-pays-bid mechanisms.  Consequently the allocation rule of
winner-pays-bid mechanisms is further from the efficient
highest-value-wins allocation rule.

\subsection*{Winner-pays-bid versus Loser-pays-nothing Mechanisms}

Mechanisms can map bids to probabilistic outcomes.  Denote the random
bits accessed by a mechanism by $\rand$.  Our previously defined
allocation and payment rules take expectation over this randomization,
i.e., $\ballocs(\bids) = \expect[\rand]{\ballocs^\rand(\bids)}$ and
$\bprices(\bids) = \expect[\rand]{\bprices^\rand(\bids)}$ where both
$\ballocs^{\rand}$ and $\bprices^{\rand}$ are deterministic functions.
Specifically $\ballocs^{\rand}(\bids) \in \{0,1\}^n$.

\begin{definition}
A \emph{loser-pays-nothing} mechanism $(\ballocs,\bprices)$ satisfies
$\bpricei^{\rand}(\bids) = 0$ if $\balloci^{\rand}(\bids) = 0$ for all
agents $i$ and random bits $\rand$.
\end{definition}

\begin{definition}
A \emph{winner-pays-bid} mechanism $(\ballocs,\bprices)$ satisfies
$\bpricei^{\rand}(\bids) = \bidi\,\balloci^{\rand}(\bids)$ for all
agents $i$ and random bits $\rand$.
\end{definition}

\begin{lemma}
For public budget agents, a winner-pays-bid mechanism is optimal among
all loser-pays-nothing mechanisms.
\end{lemma}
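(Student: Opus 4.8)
The plan is to reduce everything to the level of symmetric interim allocation rules: I will show that the interim allocation rule induced at equilibrium by an arbitrary loser-pays-nothing mechanism is also induced by a winner-pays-bid mechanism, whence the two give the same value of the objective (welfare depends only on the equilibrium allocation, and in fact the construction also preserves the interim payments, so the reduction applies to revenue as well). Fix a loser-pays-nothing mechanism $(\ballocs,\bprices)$ and one of its equilibria, which is symmetric by \citet{CH-13}, with induced interim allocation rule $x(\cdot)$ and interim payment $p(\cdot)$. The \citet{mye-81} characterization gives that $x$ is monotone non-decreasing and $p(\val)=\val\,x(\val)-\int_0^\val x(t)\,dt$, with $p(0)=0$ since ex-post IR forbids a value-$0$ agent from ever making a payment.

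The one real observation is that the loser-pays-nothing property together with the budget constraint forces $\bpricei^\rand(\bids)\le\budget\cdot\balloci^\rand(\bids)$ for every realization $\rand$ and every bid profile $\bids$: a winner pays at most $\budget$ and a loser pays $0$. Integrating over $\rand$ and over the other agents' values yields $p(\val)\le\budget\,x(\val)$ for all $\val$. Now define the bid function $b(\val):=p(\val)/x(\val)=\val-\tfrac{1}{x(\val)}\int_0^\val x(t)\,dt$ on $\{x(\val)>0\}$ and $b(\val):=0$ otherwise. Then $b(\val)\le\budget$ by the displayed inequality, $b(\val)\le\val$ since $\int_0^\val x\ge 0$, and $b$ is monotone non-decreasing: writing $P(\val):=\int_0^\val x$, for $\val_1<\val_2$ with $x(\val_1)>0$ the bound $\int_{\val_1}^{\val_2}x\le(\val_2-\val_1)\,x(\val_2)$ gives $\tfrac{P(\val_2)}{x(\val_2)}\le(\val_2-\val_1)+\tfrac{P(\val_1)}{x(\val_2)}\le(\val_2-\val_1)+\tfrac{P(\val_1)}{x(\val_1)}$, i.e.\ $b(\val_2)\ge b(\val_1)$, and across the edge of the zero-set of $x$ the function $b$ only jumps up from $0$.

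It remains to exhibit a winner-pays-bid mechanism $M'$ realizing $x$. Since $x$ is the interim rule of the actual single-item (or position) mechanism $(\ballocs,\bprices)$ it is feasible, so fix a symmetric allocation scheme that, on i.i.d.\ inputs, produces interim rule $x$; because $b$ is order-preserving, the relabeling $\val\leftrightarrow b(\val)$ transports this scheme to a rule on bid profiles, and we let winners pay their bids. This $M'$ is winner-pays-bid by construction. When every agent uses the strategy $b(\cdot)$, a bidder of value $\val$ who instead submits $b(\val')$ gets expected utility $\val\,x(\val')-b(\val')\,x(\val')=\val\,x(\val')-p(\val')$, which the Myerson characterization maximizes at $\val'=\val$; hence $b(\cdot)$ is an equilibrium, with interim allocation $x$ and interim payment $p$. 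The budget constraint holds since a winner pays $b(\val)\le\budget$, ex-post IR holds since a winner's utility is $\val-b(\val)\ge 0$ and a loser pays nothing, and the welfare equals that of the original mechanism. If one wants $M'$ to be DSIC and ex-post IR, take the scheme realizing $x$ to be ex-post monotone, which is available in single-item environments.

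I expect the last step to be the only delicate point: passing from the abstract interim rule $x$ back to a genuine winner-pays-bid \emph{mechanism}. The two facts that carry it are (i) the inequality $p(\val)\le\budget\,x(\val)$, which is precisely what keeps the implied price-conditional-on-winning $b(\val)=p(\val)/x(\val)$ within the budget, and (ii) the monotonicity of $b$, which is what lets the allocation scheme be transported to bid space at all; (i) is immediate once the loser-pays-nothing and budget constraints are combined, and (ii) is the one short computation to carry out with care (the remaining fussiness, over ironed regions where $b$ is flat, is harmless because $x$ is constant there).
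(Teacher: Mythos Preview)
Your proposal is correct and follows essentially the same route as the paper: build a winner-pays-bid mechanism with the same interim allocation and payment as the given loser-pays-nothing mechanism, and observe that the conditional-on-winning payment $p(\val)/x(\val)$ stays within the budget because $\bpricei^\rand(\bids)\le\budget\,\balloci^\rand(\bids)$ realization-by-realization. The paper's write-up is terser---it cites \citet{CH-13} for the existence of the winner-pays-bid implementation and phrases the budget check as ``the deterministic payment equals the conditional expectation, which is at most the conditional maximum''---while you spell out the bid function $b(\val)=p(\val)/x(\val)$ and verify its monotonicity directly (this monotonicity is exactly what the paper later isolates as a separate lemma in the appendix).
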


\begin{proof}
Let $(\allocs,\prices)$ be the BNE allocation and payment rules of any
loser-pays-nothing mechanism.  First, disregarding the budget
constraints of the agents, there is a winner-pays-bid mechanisms with
the same BNE allocation and payment rules.  This fact is a
straightforward consequence of the characterization of Bayes-Nash
equilibrium, e.g., see \citet{CH-13}.  Second, as these two mechanisms
have the same interim allocation rule, namely $\allocs$; the
payment-identity requires that they have the same interim payment
rule, namely $\prices$.  Losers pay nothing in both mechanisms.  Thus,
the expected payment of an agent $i$ with a value $\vali$ conditioned
on winning is the same in the two mechanisms.  In the winner pays bid
mechanism the agent's payment conditioned on winning is
deterministically equal to its conditional expectation.  In the
original loser-pays-nothing mechanism the maximum payment in the
support of the conditional payment distribution is no lower than its
expectation.  Consequently, the disregarded budget constraints are
satisfied by the constructed winner-pays-bid mechanism and it obtains
the same objective value.
\end{proof}

\subsection*{Optimal Winner-pays-bid Mechanisms}

Recall that for i.i.d.\ public-budget regular agents, the optimal
mechanism allocates efficiently except that an interval of
highest-valued agents are ironed and payments are deterministic
functions of values.  Specifically, in a single-item environment, the
all-pay auction is optimal.  We will show below that, restricting the
mechanism to be winner-pays-bid, the optimal mechanism has the same
form.  It allocates efficiently except that an interval of highest
values agents are ironed.  Specifically, in a single-item environment,
the first-price auction is optimal.  Notice that the first-price
auction is the winner-pays-bid highest-bid-wins mechanism.

\begin{theorem}\label{thm:first-price}
For i.i.d.\ public-budget regular agents, the welfare-optimal
winner-pays-bid mechanism is the highest-bid-wins mechanism, i.e.\ the
first-price auction.
\end{theorem}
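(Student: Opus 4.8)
The plan is to reduce the optimization over winner-pays-bid mechanisms to a single-agent interim optimization, solve that interim problem with an ironing/geometric argument in the style of \citet{mas-00} and \citet{AFHH-13}, and then verify that the resulting interim allocation is exactly the one the highest-bid-wins mechanism induces in equilibrium. For the reduction, I would argue: in any symmetric Bayes-Nash equilibrium of a symmetric winner-pays-bid single-item mechanism, an agent of value $\val$ submits a (deterministic) bid $\bid(\val)$ and pays it exactly when she wins, so her interim payment is $\price(\val) = \bid(\val)\cdot\alloc(\val)$, while the payment identity forces $\price(\val) = \val\,\alloc(\val) - \int_0^{\val}\alloc(t)\,dt$. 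Thus the budget constraint $\bid(\val)\le\budget$ is equivalent to $\val\,\alloc(\val) - \int_0^{\val}\alloc(t)\,dt \le \budget\cdot\alloc(\val)$ for all $\val$; conversely, every monotone interim allocation that is interim feasible with respect to the highest-bid-wins constraint $\calloc(\val) = \dist^{n-1}(\val)$ and satisfies this inequality is implementable as a winner-pays-bid mechanism (the standard construction, cf.\ \citealp{CH-13} and \Cref{app:winner}). So the problem becomes: maximize $\int_0^{\highestval}\val\,\alloc(\val)\,d\dist(\val)$ over monotone, interim-feasible $\alloc$ subject to that constraint. This constraint is exactly the all-pay budget constraint $\price(\val)\le\budget$ with $\budget$ replaced by $\budget\,\alloc(\val)\le\budget$, so it is weakly tighter --- the precise sense in which the budget binds over a larger range of values for winner-pays-bid mechanisms.

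For the interim problem, a brief computation from the payment identity shows that $\bid(\val) = \val - \int_0^{\val}\alloc(t)\,dt / \alloc(\val)$ is nondecreasing in $\val$, so the budget constraint reduces to the single inequality $\bid(\highestval)\le\budget$, an upper bound coupling $\alloc(\highestval)$ with $\int_0^{\highestval}\alloc$. I would then claim the optimum $\alloc^\star$ equals $\calloc$ on $[0,\valp)$ and is the distribution-weighted average of $\calloc$ on $[\valp,\highestval]$, where $\valp\le\budget$ is the unique threshold making the budget constraint bind at $\highestval$; on the ironed interval $\alloc^\star$ is the constant $\alloc^\star(\highestval) = \frac{1}{1-\dist(\valp)}\int_{\valp}^{\highestval}\calloc(t)\,d\dist(t)$, so there the budget constraint reads $\val\,\alloc^\star(\highestval) - \int_0^{\val}\alloc^\star \le \budget\,\alloc^\star(\highestval)$ and binds at $\val=\highestval$, which pins down $\valp$. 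Feasibility of $\alloc^\star$ is immediate: it is monotone, an ironing of $\calloc$ is interim feasible with respect to $\calloc$, and the budget constraint holds with equality at $\highestval$ hence (by monotonicity of $\bid$) everywhere. Optimality I would get by adapting the geometric argument behind \Cref{thm: opt welfare} to this constraint: take a Lagrangian relaxation of the budget constraint, observe that the pointwise maximizer of the Lagrangian is $\calloc$ truncated so as to stay interim feasible, and check that the optimal truncation is a top-ironing; the complete argument is carried out with the geometric framework of \Cref{app:winner}.

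Finally I would recognize $\alloc^\star$ as the equilibrium interim allocation of the highest-bid-wins (first-price) mechanism. In its symmetric equilibrium there is a pooling threshold: agents with value below $\valp$ separate, bidding by the usual first-price formula and winning exactly when they hold the highest value, so their interim allocation is $\calloc$; agents with value at least $\valp$ all bid their entire budget $\budget$ (there is a gap in the bid distribution just below $\budget$) and each wins with the averaged probability $\alloc^\star(\highestval)$; and by the payment identity the pooled high types pay exactly $\budget$ upon winning only when $\budget = \valp - \int_0^{\valp}\calloc(t)\,dt\,/\,\alloc^\star(\highestval)$, which is precisely the budget-binding condition that determines $\valp$ in the interim problem. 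So the two allocations coincide, and since the first-price auction is the winner-pays-bid highest-bid-wins mechanism, it is the welfare-optimal winner-pays-bid mechanism, proving the theorem.

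The hard part will be the optimality claim in the middle step. Unlike \Cref{thm: opt welfare}, the relevant budget constraint is not a fixed interim constraint but couples $\alloc(\highestval)$ with $\int_0^{\highestval}\alloc$, so the cited theorems do not apply off the shelf; extending the ironing/geometric machinery to handle it --- and separately checking that the Lagrangian-relaxed optimum is automatically monotone, so that monotonicity can be dropped as an explicit constraint --- is where the real work lies.
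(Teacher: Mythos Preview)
Your proposal is correct and follows essentially the same route as the paper's proof in \Cref{app:winner}: reduce to a single-agent interim problem with constraint $\calloc=\dist^{n-1}$, use monotonicity of $\price(\val)/\alloc(\val)$ (your ``brief computation'' is exactly the paper's \Cref{lem:monotone-ratio}) to collapse the budget constraint to the single inequality $\price(\highestval)\le\budget\,\alloc(\highestval)$, Lagrangian-relax that inequality into the objective, and apply the payoff-curve/ironing framework to the resulting linear objective to conclude that only a top interval is ironed. One small slip: your parenthetical claim that $\valp\le\budget$ is backwards (for two agents with values uniform on $[0,1]$ and $\budget=1/4$, the first-price ironing threshold is $\valp=1/3>\budget$), but this assertion is never used in your argument.
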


The formal proof of Theorem~\ref{thm:first-price} is given in
Appendix~\ref{app:winner}.  It can be proved using standard methods in
Bayesian mechanism design with budgets, e.g., \citet{LR-96} and
\citet{mas-00}.  We give simpler approach that is based on the
\citet{AFHH-13} reduction to ex ante pricing.  Note that the approach
of \citet{AFHH-13} cannot be directly applied as public-budget agents
do not have linear utility and the optimal revenues for non-linear
agents do not satisfy an important linearity property.  We write the
problem as a optimization program, use the Lagrangian relaxation to
move the budget constraint into the objective, and then observe that
the Lagrangian relaxed objective does satisfy the linearity property
of \citet{AFHH-13}.

\subsection*{Lowerbound for Winner-pays-bid Mechanisms}

With Theorem~\ref{thm:first-price}, we compare the performance of the
welfare-optimal winner-pays-bid mechanism and the welfare-optimal
mechanism.  For i.i.d.\ public-budget regular agents and a
single-item, this comparison is between the first-price auction and
the all-pay auction.

\begin{lemma} For $n$ i.i.d.\@ public-budget agents, the first-price auction is at best an $\left(\frac{1}{4}n-o(n)\right)$-approximation to the all-pay auction.
\end{lemma}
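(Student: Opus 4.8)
The plan is to exhibit a single public-budget instance on which the all-pay auction's welfare beats the first-price auction's by a factor $\tfrac14 n - o(n)$. By \Cref{thm:first-price} (and \Cref{thm: all-pay regular}), for public-budget regular agents both mechanisms allocate by the efficient highest-value-wins rule except that a top interval of values is ironed to a flat allocation, the left endpoint being the value at which the budget $\budget$ binds; write $\valp$ for this threshold in the first-price auction and $\valdp$ for it in the all-pay auction. The structural point driving the construction is that $\valp$ is far smaller than $\valdp$: in the first-price auction an agent's equilibrium bid is her payment \emph{conditional on winning}, which over the bulk of the distribution is essentially her own value; in the all-pay auction the bid is the (much smaller) \emph{expected} payment, since low-value agents almost never win yet still pay. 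So one can arrange for $\valp$ to sit deep in the bulk -- making first-price behave like a near-uniform lottery over all $n$ agents -- while $\valdp$ stays near the top of the bulk, keeping all-pay close to efficient.

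Concretely, I would take $n$ agents with common budget $\budget$ and let $\dist$ be uniform on $[0,2\budget]$ with mass $1-\tfrac cn$ together with a thin uniform piece on $[2\budget,\highestval]$ carrying the remaining mass $\tfrac cn$, for a fixed constant $c>0$ and $\highestval=\highestval(n)\to\infty$ quickly (say $\highestval=n^{3}$). For $\highestval$ large this cumulative distribution function is concave, so the instance is public-budget regular for welfare; the scaling puts value $\budget$ at quantile $\approx\tfrac12$.

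Next I would pin down the two thresholds. Inside the bulk the first-price equilibrium bid equals $\tfrac{n-1}{n}\val$, so $\valp=\tfrac{n}{n-1}\budget\approx\budget$, at quantile $\approx\tfrac12$; hence first-price irons roughly the top half of the mass, giving every value $\geq\valp$ the common interim allocation $\tfrac{1-\dist(\valp)^{n}}{n(1-\dist(\valp))}\approx\tfrac2n$ -- essentially a uniform lottery. For all-pay the budget-binding condition reads $\valdp\,\bar\alloc-\int_{0}^{\valdp}\dist^{\,n-1}(t)\,dt=\budget$ with $\bar\alloc=\tfrac{1-\dist(\valdp)^{n}}{n(1-\dist(\valdp))}$; solving it forces $\valdp$ up to $2\budget-\Theta(\budget/n)$, i.e.\ to a quantile only $\Theta(1/n)$ below the top of the bulk, so that $\bar\alloc\approx\budget/\valdp\approx\tfrac12$ and the all-pay ironed interval holds merely the top $\Theta(1/n)$ of the mass. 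Then I would estimate the two welfares: since the tail carries mass $\tfrac cn$ with mean $\approx\highestval/2$ while the bulk contributes only $O(\budget)$, the welfare of the first-price auction is $\approx n\cdot\tfrac2n\cdot\tfrac{c\highestval}{2n}=\tfrac{c\highestval}{n}$ and the welfare of the all-pay auction is $\approx n\cdot\tfrac12\cdot\tfrac{c\highestval}{2n}=\tfrac{c\highestval}{4}$, both up to additive $O(\budget)$ terms that become negligible relative to the main terms once $\highestval\to\infty$. Hence the ratio tends to $\tfrac n4$, which is $\geq\tfrac14 n-o(n)$. (That the bulk width $2\budget$ is the right choice follows by redoing the computation with width $L$: the ratio comes out $\approx\tfrac{n\,\budget(L-\budget)}{L^{2}}$, maximized at $L=2\budget$.)

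The main obstacle is the precise control of the all-pay threshold $\valdp$ -- establishing that it lands exactly a $\Theta(1/n)$-quantile below the top of the bulk, so that its ironed allocation $\bar\alloc$ is a constant bounded away from $0$, while $\valp$ stays at a constant quantile -- and handling the equilibrium structure carefully: the all-pay equilibrium has a jump in its bid distribution at $\valdp$ (the interim payment jumps there), whereas the first-price bid is continuous at $\valp$. Once the thresholds and the ironed allocations are nailed down, the rest is routine estimation of the two expected welfares using facts like $(1-c/n)^{n}\to e^{-c}$, with the freedom to send $\highestval\to\infty$ absorbing all lower-order terms into the $o(n)$.
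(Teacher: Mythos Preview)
Your proposal is correct and follows essentially the same idea as the paper: construct a two-piece piecewise-linear (hence public-budget regular) distribution with most mass in a low ``bulk'' and $\Theta(1/n)$ mass in a high ``tail,'' and place the budget so that the first-price ironing threshold $\valp$ lands at a constant quantile inside the bulk (giving ironed allocation $\Theta(1/n)$) while the all-pay threshold $\valdp$ lands at the top of the bulk, at quantile $1-\Theta(1/n)$ (giving ironed allocation $\Theta(1)$). The only difference is parameterization: the paper keeps the top value fixed at $1$, takes bulk $[0,1/n]$ with mass $1-1/n$, and scales the budget as $\budget=(1-1/e)/n$; you instead keep the bulk $[0,2\budget]$ and the budget fixed and send the top value $\highestval\to\infty$ to kill the lower-order terms. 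Both routes yield the $n/4$ ratio, and your optimization $L=2\budget$ of the bulk width is the analogue of the paper's particular choice of $\budget$ within the bulk.
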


\begin{proof}
Consider $n$ agents with budget $\budget = (1-\frac{1}{e})\frac{1}{n}$
and value distributed from the following distribution $F$ with density
function
\begin{align*}
f(\val) = \begin{cases}
n - 1 \quad &\text{if } \val \in [0, \frac{1}{n}] \\
\frac{1}{n-1} &\text{if } \val \in (\frac{1}{n}, 1].
\end{cases}
\end{align*}

By Theorem~\ref{thm:first-price}, the welfare-optimal winner-pays-bid
mechanism is the first-price auction where each agent bids $\budget$
if her value is beyond some $\valp$ to satisfy
$\frac{\price_{\text{first-price}}(\valp)}{\alloc_{\text{first-price}}(\valp)}=
\budget$.  In this setting, as $n$ goes to infinity, $\valp$ goes to
$\budget$, and the expected welfare goes to $\left((1 -
\frac{1}{2e})(1 - \frac{1}{e}) + \frac{1}{2}\right)\frac{1}{n} +
o(\frac{1}{n})$.

The welfare-optimal mechanism is the all-pay auction where each agent
bids $\budget$ if her value is beyond some $\valdp$ such that
$\price_{\text{all-pay}}(\valdp) = \budget$.  In this setting, as $n$
goes infinity, $\valdp$ goes to $\frac{1}{n}$, and the expected
welfare goes to $(1 - \frac{1}{e})\frac{1}{2} + o(1) $.

Thus, as $n$ goes to infinity, the expected welfare of the first-price
auction is an $\frac{1}{4}n$-approximation to the all-pay auction.
\end{proof}

It is easy to see that this linear approximation is tight up to
constant factors.  Specifically, the {\em $n$-agent lottery}, which
allocates the item to a random agent without payments, is trivially an
$n$-approximation.

%% \begin{proposition}
%% For public-budget i.i.d.\ agents, the welfare-optimal winner-pays-bid mechanism is an $n$-approximation to 
%% the welfare-optimal mechanism
%% where $n$ is the number of agents.
%% \end{proposition}
%% \begin{proof}
%% Consider the mechanism which always allocates the item uniformly to one of the agents 
%% for free. 
%% This mechanism is a winner-pays-bid mechanism and achieves $n$-approximation to the welfare-optimal mechanism.
%% \end{proof}

\section{Revenue Approximation of the All-pay Auction}
\label{s:revenue}

In this section, we analyze the approximation ratio of the all-pay
auction for public-budget regular agents with the revenue-optimal
mechanism.  In the preliminaries, Theorem \ref{thm: opt revenue} shows
that the revenue-optimal mechanism irons the top and sets a reserve
with the combined effect that the budget binds for the highest value.
%if the value distribution is public-budget regular.
From the equivalence of the all-pay auction (\autoref{thm: all-pay
  regular}) and the welfare-optimal auction (\autoref{thm: opt
  welfare}), the all-pay auction irons top values to decrease the
payment of highest value to meet the budget.  Even though the value
intervals ironed at top are different between all-pay auction and
optimal auction (specifically the all-pay auction irons less than the
optimal auction), the all-pay auction is still a good approximation to
the revenue-optimal mechanism.  Our analysis follows
\citeauthor{kir-06}'s \citeyear{kir-06} proof of the main theorem from
\citet{BK-96}.  These results show that, without budgets, the revenue
of the second-price auction approximates the optimal revenue for
i.i.d.\ and regular agents.  The revenue-optimal mechanism allocates
the item to the highest agent whose value exceeds a specific reserve
price while the second-price auction allocates the item to the highest
agent with no reserve.  As is crucial for \citeauthor{kir-06}'s proof,
the second-price auction can be thought as the revenue-optimal
mechanism under the constraint that the item must be allocated.

\begin{theorem}[\citealt{BK-96}]
For $n \geq 2$ regular i.i.d.\ agents with linear utilities, 
the expected revenue of the second-price auction is an $\frac{n}{n-1}$-approximation 
to the revenue-optimal mechanism.
\end{theorem}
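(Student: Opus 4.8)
The plan is to prove this in the now-standard way, following \citet{kir-06}: lower-bound the revenue of the $n$-agent second-price auction by the \emph{optimal} revenue with $n-1$ agents, and then argue that dropping one agent costs at most a factor $\tfrac{n-1}{n}$ of the optimal revenue. Throughout I would use only linearity of utility and regularity of $F$, in the form of two consequences of \citet{mye-81}. (i) For $m$ i.i.d.\ regular agents and a single item, the revenue-optimal mechanism awards the item to the agent of highest virtual value whenever that value is nonnegative; regularity makes $\virt$ monotone, so this is the highest-value agent, whence $\OPT_m = \expect[\vals\sim F^m]{\positivepart{\virt(\val_{(1)})}}$ with $\val_{(1)}$ the largest of $m$ i.i.d.\ draws. (ii) Among \emph{all} BIC, individually rational, single-item mechanisms on $m$ agents that allocate the item with probability one, the no-reserve second-price auction is revenue-optimal: by the payment identity the revenue of any such mechanism equals its expected virtual surplus, which --- subject to always allocating one item --- is maximized pointwise by awarding the item to the agent of highest virtual (hence, by regularity, highest) value, and that allocation rule is precisely the second-price auction's.

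\textbf{Step 1 (Kirkegaard's padding).} I would take the revenue-optimal mechanism on $n-1$ agents and extend it to $n$ agents: run it on agents $1,\dots,n-1$, and whenever it leaves the item unsold hand the item to agent $n$ for free. The extended mechanism allocates with probability one; it is BIC and individually rational, since agents $1,\dots,n-1$ face exactly the original mechanism and agent $n$'s (trivial) outcome is independent of her report; and its expected revenue is unchanged, namely $\OPT_{n-1}$. By fact (ii), $\SP_n \geq \OPT_{n-1}$.

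\textbf{Step 2 (one fewer agent).} I would pass to quantile space: write $\val(\quant)$ for the value at quantile $\quant$, $\revcurve(\quant) = \quant\,\val(\quant)$ for the single-agent revenue curve, and $\quant\opt$ for the monopoly quantile; regularity makes $\revcurve$ concave, so $\revcurve'(\quant) = \virt(\val(\quant))$ is nonnegative exactly on $[0,\quant\opt]$. The quantile of the largest of $m$ i.i.d.\ draws has density $m(1-\quant)^{m-1}$, so fact (i) rewrites as $\OPT_m = \int_0^{\quant\opt}\revcurve'(\quant)\,m(1-\quant)^{m-1}\,\dd\quant$ (this also handles $m=1$, giving $\OPT_1 = \revcurve(\quant\opt)$, the monopoly revenue). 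Consequently $\tfrac{n}{n-1}\OPT_{n-1} - \OPT_n = \int_0^{\quant\opt}\revcurve'(\quant)\,n\,\quant\,(1-\quant)^{n-2}\,\dd\quant \geq 0$, since the integrand is nonnegative on $[0,\quant\opt]$. Hence $\OPT_{n-1} \geq \tfrac{n-1}{n}\OPT_n$, and combining with Step 1, $\SP_n \geq \OPT_{n-1} \geq \tfrac{n-1}{n}\OPT_n$ --- the claimed $\tfrac{n}{n-1}$-approximation.

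I do not expect a serious obstacle: the trickiest bit is the identity $\OPT_m = \int_0^{\quant\opt}\revcurve'(\quant)\,m(1-\quant)^{m-1}\,\dd\quant$ of Step 2, which just requires translating Myerson's characterization into quantile/revenue-curve language, and the remaining care is routine bookkeeping --- that the padded mechanism of Step 1 is genuinely incentive compatible with unchanged revenue, and that regularity is what licenses both ``highest value $=$ highest virtual value'' and the sign of $\revcurve'$ on $[0,\quant\opt]$. The genuinely harder work is in the budgeted analog the paper builds toward: with budgets utilities are non-linear, revenue ceases to be a linear functional of the allocation, and the revenue-curve identity of fact (i) fails verbatim; what survives --- and what keeps Kirkegaard's template usable for the all-pay auction --- is that highest-bid-wins is revenue-optimal subject to always allocating the item.
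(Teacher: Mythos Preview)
Your proof is correct. Note, however, that the paper does not actually prove this theorem: it is stated as background (attributed to \citet{BK-96}, with the proof idea credited to \citet{kir-06}) and no proof is supplied; the paper then proves the budgeted analog for the all-pay auction in the theorem that follows.

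It is still worth comparing your argument to the one the paper uses for that analog, since the structures differ. You go through $\OPT_{n-1}$: Step~1 pads the $(n{-}1)$-agent optimal mechanism to an $n$-agent always-sell mechanism, giving $\SP_n \geq \OPT_{n-1}$, and Step~2 is a separate revenue-curve calculation showing $\OPT_{n-1} \geq \tfrac{n-1}{n}\OPT_n$. The paper's lower-bound construction (its mechanism ``LB'') instead runs the \emph{$n$-agent} optimal mechanism on the first $n{-}1$ agents together with a \emph{fake} agent drawn from $F$, and hands the item to the real $n$th agent whenever it would otherwise go unsold or to the fake agent; by symmetry across the $n$ slots this mechanism has revenue exactly $\tfrac{n-1}{n}\OPT_n$, so the always-sell optimality of the highest-bid-wins auction (your fact~(ii)) finishes the proof in one stroke, with no need for the integral identity of your Step~2. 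The fake-agent route is shorter and is what the paper actually needs in the budgeted setting, where a clean formula like $\OPT_m = \int_0^{\quant\opt}\revcurve'(\quant)\,m(1-\quant)^{m-1}\,\dd\quant$ is unavailable; your route, on the other hand, makes the two uses of regularity (highest value $=$ highest virtual value, and $\revcurve'\geq 0$ on $[0,\quant\opt]$) fully explicit, which has expository value.
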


For agents with public budget, the all-pay auction plays the similar role as the second-price auction.

\begin{theorem}
For $n \geq 2$ public-budget regular i.i.d.\ agents, the all-pay auction is an $\tfrac{n}{n-1}$-approximation to the revenue-optimal mechanism.
\end{theorem}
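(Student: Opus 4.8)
The plan is to follow \citeauthor{kir-06}'s proof of the Bulow--Klemperer theorem, with the all-pay auction playing the role of the second-price auction and the revenue-optimal public-budget mechanism (characterized in \Cref{thm: opt revenue}) playing the role of the unconstrained revenue-optimal auction. Write $\Rev{\cdot}$ for expected equilibrium revenue, and let $\OPT_m$ denote the optimal expected revenue over BIC, IIR, budget-respecting mechanisms for $m$ i.i.d.\ public-budget regular agents. I would prove the theorem by chaining two inequalities, $\Rev{\text{all-pay}_n} \ge \OPT_{n-1}$ and $\OPT_{n-1} \ge \tfrac{n-1}{n}\OPT_n$, which together give $\OPT_n \le \tfrac{n}{n-1}\Rev{\text{all-pay}_n}$, i.e.\ the claimed approximation.

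The second inequality is the ``drop a random agent'' step and uses that an i.i.d.\ optimal mechanism may be taken symmetric. The mechanism obtained from the revenue-optimal $n$-agent mechanism by integrating out the $n$-th agent's value and charging only agents $1,\dots,n-1$ is a feasible BIC, IIR, budget-respecting mechanism for $n-1$ agents (integrating out a phantom last agent can only leave the item unsold, which is permitted, and the remaining agents' incentives and payment bounds are unchanged); its expected revenue is the expected total payment of agents $1,\dots,n-1$ in the optimal $n$-agent mechanism, which by symmetry equals $\tfrac{n-1}{n}\OPT_n$. Hence $\OPT_{n-1} \ge \tfrac{n-1}{n}\OPT_n$.

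For the first inequality I would first establish the budgeted analogue of the fact that the second-price auction is revenue-optimal among mechanisms that always allocate the item: \emph{for public-budget regular i.i.d.\ agents, the all-pay auction is revenue-optimal among all single-item mechanisms whose allocation rule always awards the item.} This follows by applying the interim reduction behind \Cref{thm: opt revenue} with the interim constraint $\hat{x}$ equal to the highest-bid-wins rule together with the additional ``always-sell'' constraint that the ex ante allocation probability equals $1/n$: the always-sell constraint eliminates the reserve-pricing component of the optimal rule, leaving highest-bid-wins ironed on a top interval, and --- because virtual values are monotone --- the top-ironing that loses the least virtual surplus is exactly the one making the budget bind, which by \Cref{thm: all-pay regular} and \Cref{thm: opt welfare} is the interim allocation of the all-pay auction. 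Granting this, the first inequality follows from an ``absorber'' construction: run the revenue-optimal $(n-1)$-agent mechanism on agents $1,\dots,n-1$ and, on the event that it fails to sell, award the item to agent $n$ for free. This mechanism always allocates; it respects the budgets (agent $n$ pays $0$, the others pay as before); and it is incentive compatible, since an agent in $\{1,\dots,n-1\}$ who deviates so as to trigger the no-sale event then receives nothing and agent $n$'s report is irrelevant. Its revenue is $\OPT_{n-1}$, so optimality of the all-pay auction among always-allocating mechanisms gives $\Rev{\text{all-pay}_n} \ge \OPT_{n-1}$.

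I expect the main obstacle to be the structural claim inside the first inequality --- that the all-pay auction is \emph{revenue}-optimal, not merely welfare-optimal, among always-allocating public-budget mechanisms. Establishing it requires checking that adding the always-sell constraint to \Cref{thm: opt revenue} genuinely kills the reserve and leaves no benefit to ironing a second, interior interval, and that the revenue-optimal and welfare-optimal top-ironing thresholds coincide (both being pinned down by the budget constraint binding). A minor point to handle is that the auxiliary mechanisms above are BIC but generally not DSIC; this is harmless because both the benchmark $\OPT_n$ and the performance of the all-pay auction are measured in Bayes--Nash equilibrium.
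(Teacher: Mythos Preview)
Your proposal is correct and follows the same Kirkegaard-style adaptation of Bulow--Klemperer that the paper uses. Two packaging differences are worth noting. First, the paper collapses your two inequalities into one: instead of going through $\OPT_{n-1}$, it builds the comparison mechanism directly from the $n$-agent revenue-optimal auction by running it on the first $n-1$ real agents together with a \emph{fake} $n$-th agent (value drawn from $F$), and then handing the item to the real $n$-th agent for free whenever it would otherwise go unsold or to the fake agent; symmetry gives revenue exactly $\tfrac{n-1}{n}\OPT_n$ in a single step. Second, the paper sidesteps your ``main obstacle'' by strengthening the class in which it proves the all-pay auction optimal: rather than showing all-pay is revenue-optimal among \emph{all} always-allocating budget-respecting mechanisms, the paper restricts to mechanisms that additionally iron the top interval $[\valp,h]$ (where $\valp$ is the all-pay threshold), and then checks---using the characterization of the revenue-optimal auction in \Cref{thm: opt revenue}---that the comparison mechanism irons at least this much, so it lies in the restricted class. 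This makes the virtual-surplus argument immediate. The paper does remark in a footnote that your stronger structural claim (optimality among all always-allocating mechanisms) is also true, so your route goes through; the paper's extra ironing requirement just buys a shorter argument.
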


\begin{proof}

First, we introduce a common approach of the revenue analysis in Bayesian mechanism design.
\citet{mye-81} defined the virtual valuation for sum of payments as
$\phi(\val) = \val - \frac{1-F(\val)}{f(\val)}$ and proved that 
the expected payment $\expect{\price(\val)}$ of an agent in any BNE is equal
to her expected virtual surplus 
$\expect{\phi(\val)\alloc(\val)}$.
We use this concept in the following argument.

Denote value $\valp$ as the threshold where the all-pay auction starts to iron.

Consider the optimal mechanism which is (i) budget balanced, 
(ii) irons the interval $[\valp, \highestval]$, and 
(iii) always sells the item for $n$ public-budget regular i.i.d.\ agents.

We claim that the optimal auction under these three requirements is
the all-pay auction by analyzing the virtual surplus.\footnote{Under
  Requirement (i) and Requirement (iii), the all-pay auction is
  already optimal.  We introduce Requirement (ii) to simplify the
  argument.  } For public-budget regular agents, their virtual value
is monotone non-decreasing.  Consider the virtual surplus from values
$\val$ below and above $\valp$ separately.  For values $\val \geq
\valp$, requirement (ii) upper bounds the allocation $\alloc(\val)$ to
be at most as the all-pay auction (otherwise, the budget constraint
will be violated).  For values $\val \leq \valp$, the all-pay auction
always allocates the item to the agent with highest virtual value.
Thus, the all-pay auction maximizes the virtual surplus under these
three requirements.

On the other hand, consider the following auction LB: 
\begin{enumerate}
\item run the $n$-agent revenue-optimal budget-balanced auction on the
  first $n-1$ agents and a fake agent;
\item if the auction does not sell the item to anyone within the first $n-1$ agents 
(i.e.\ the auction does not sell the item or sells it to the fake agent), 
give the item to the $n$-th agent for free.
\end{enumerate}
LB is budget balanced and always sells the item for $n$ public-budget regular i.i.d.\ agents.
Notice that the revenue-optimal auction for public budget $n$ agents irons more than
the all-pay at the top. 
Hence, LB satisfies the three requirements.
The expected revenue from it is $\tfrac{n-1}{n}$ fraction of the expected revenue from the revenue-optimal auction for $n$  public-budget regular i.i.d.\ agents.

We conclude that the expected revenue of the all-pay auction is at least the expected revenue of  LB  
which is $\tfrac{n-1}{n}$ fraction of the revenue-optimal mechanism. 
Thus, the all-pay auction is an $\tfrac{n}{n-1}$ approximation to the revenue-optimal mechanism.
\end{proof}

\bibliographystyle{apalike}
\bibliography{auctions}

\appendix

\section{Clinching Auction with Price Jumps}

\label{app:price-jump}

In this section, we introduce the clinching auction with price jumps.
In the standard clinching auction, with a continuous increasing price-clock,
excess demand decreases continuously to the point where supply
equals demand and the market clears.  With a price jump, which leads
to a strict drop of demands, the standard clinching auction may leave
some supply unallocated.  Therefore, to clear the market, the
clinching auction with price jumps will need to reallocate some amount
of units at the pre-jump price after a price jump.  We
first focus on the clinching auction with price jumps for agents with
identical budgets.  The results can be extended to agents with distinct
budgets, which we will discuss at the end of this section.

To formally describe this reallocation, suppose the price-clock jumps
from $\valp$ to $\valdp$.  Consider the state $\mathcal C = (s, \Sp,
\budget)$ at price $\valp$ after the clinching step (i.e.\ Step~3 in
\Cref{def:clinching}) where $s$ is the current supply remaining, $\Sp$
is the agents with values at least $\valp$ (let $\kp = |\Sp|$),
and $\budget$ is the current budget of the active agents.
\footnote{In the model considered in this paper where initially the
  agents have identical budgets, the remaining budgets of all active agents
  remain identical throughout the execution of the clinching auction.}
When the price jumps to $\valdp$, active agents with values below
$\valdp$ (``low-valued'' agents) will quit, while active agents
with values at least $\valdp$ (``high-valued'' agents) will stay in
the auction.  Denote by $\Sdp$ the set of high-valued agents and by $k
= |\Sdp|$ the number of high-valued agents.  With pre-jump state
$\mathcal C = (s, \Sp, \budget)$ and $k$ agents remaining
after the jump, define $h^{\mathcal C}_k$ and $l^{\mathcal C}_k$ as
the additional supply allocated at the low price to high- and
low-valued agents, respectively.  

In the following discussion, we fix an arbitrary state $\mathcal C =
(s, \Sp, \budget)$ with $\kp = |\Sp|$ active agents, drop the
superscript of $h^{\mathcal C}_k$ and $l^{\mathcal C}_k$, and consider
$h_k$ and $l_k$
constrained to the following polytope:
\begin{align}\label{price jump}
\begin{split}
\begin{aligned}
\text{IC: } &\forall
k \in \{1, \dots, \kp\}
&h_k &= l_{k - 1},\\
\text{BB: } &\forall
k\in\{0, \dots, \kp\}
&h_k,l_k &\leq \budget/\valp,\\
\text{NN: } &\forall
k \in \{0,\dots, \kp\} 	
&h_k,l_k &\geq 0, \qquad \\
\text{MC: } &\forall 
k \in \{0,\dots, \kp\}& 
k \, h_k + (\kp - k) \, l_k
+ \tfrac{k}{\valdp}(\budget - \valp\, h_k)
&\geq s,
\\
\text{LS: } &\forall
k \in \{0, \dots, \kp\}
&k\, h_k + (\kp-k)\, l_k& \leq s. \\
\end{aligned}
\end{split}
\end{align}
The constraints above are, respectively, incentive compatibility (IC),
budget balance (BB), non-negative consumption (NN), market clearing
(MC), and limited supply (LS).  The IC constraint requires that the
amount of supply which an agent gets at price $\valp$ does not depend
on whether the agent stays or quits during the price jump.  The
left-hand side of the incentive compatibility (IC) constraint is the
additional allocation quantity at price $\valp$ if an agent stays
during the price jump, while the right hand side is the additional
allocation quantity at price $\valp$ if she quits.  Since the two
quantities are equal, active agents with values in $[\valp,\valdp)$
  prefer to quit after the clinching step at price $\valp$ while
  agents with value at least $\valdp$ prefer to stay in the auction.
  The market clearing (MC) constraint states that the reallocated supply at
  the low price $\valp$ (the first two terms) plus the quantity
  demanded by the high-valued agents at the high price $\valdp$ (the
  third term) must be at least the supply.  The limited supply (LS)
  constraint states that the amount allocated at the low price for any
  number $k$ of high-valued agents must not exceed the supply.

Consider the problem of selecting a point in polytope \eqref{price
  jump} to optimize the expected welfare under the value distribution
$\dist$.  First notice that, since the state $\mathcal C = (s, \Sp,
\budget)$ is induced by the clinching auction, the total demand at
price $\valp$ under budget $\budget$ exceeds the remaining supply $s$,
i.e., $\kp \budget / \valp \geq s$; setting $h_k = l_k = s/ \kp$
for all $k\in[\kp]$ is feasible; and, thus, polytope \eqref{price
  jump} is not empty.  The expected welfare of the clinching auction
is complicated to express; we instead consider the objective of
minimizing, within the constraints of polytope \eqref{price jump}, the
expected supply reallocated to low-valued agents, i.e.,
$\sum_{k=0}^{\kp} l_k \pi^{\kp-k} (1-\pi)^k$ 
where $\pi =
\frac{\dist(\valdp)-\dist(\valp)}{1 - \dist(\valp)} 
= \prob[\val \sim \dist]{\val < \valdp
  \mid \val \geq \valp}$ is the probability an agent has a low
value. Based on this reallocation, we formally define a clinching
auction with price jumps and show that it clears the market, is
ex-post IR, and is DSIC.

\begin{definition}
The \emph{clinching auction with price jumps} maintains an allocation
and price-clock starting from zero.  Before and after each price jump
point, the price-clock ascends continuously and the allocation and the
budget are adjusted as in the standard clinching auction.  When the
price-clock jumps from $\valp$ to $\valdp$ the following steps are taken:
\begin{enumerate}
\item run the standard clinching steps on price-clock $\valp\!$ and
  the current budgets and let the subsequent state be $\mathcal C = (s,
  \Sp\!, \budget)$ with $\kp = |\Sp|$;
\item increase the price-clock to $\valdp$ and let $\kdp = |\Sdp|$ be the
  number of agents remaining in the auction;
\item solve for $\{h_k, l_k\}_{k \in [\kp]}$ to minimize 
the expected quantity reallocated to the low-valued agents 
in the polytope \eqref{price jump};
\item allocate $h_{\kdp}$ units at price $\valp$ to each of the $\kdp$ agents
  that stay after the price jump, allocate $l_{\kdp}$ units at price
  $\valp$ to each of the $\kp-\kdp$ agents that quit during the price jump,
  and adjust all the agents' budgets for the amount and price
  allocated;
\item run the standard clinching step with price-clock $\valdp\!$ and updated budgets.
\end{enumerate}
\end{definition}

\begin{proposition}
The clinching auction with price jumps always clears the market.
\end{proposition}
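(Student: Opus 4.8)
The plan is to follow a single invariant through the entire run of the price clock and to deduce market clearing for the procedure with jumps from the market-clearing guarantee for the standard clinching auction \citep{DLN-08}. First I would observe that the procedure is well defined: the polytope~\eqref{price jump} is nonempty (as noted above, $h_k = l_k = s/\kp$ for every $k$ is feasible, because the state $\mathcal C = (s, \Sp, \budget)$ was produced by the clinching auction and hence $\kp \budget/\valp \geq s$), so Step~3 has a solution and Steps~4--5 are well defined.

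The invariant I would maintain is: at every price $p$, immediately after the most recent clinching or reallocation step, the residual supply is at most the aggregate demand $\sum_{i \text{ active}} \budget_i / p$ of the agents still in the auction at price $p$. This holds at the start (residual supply $1$, demands tending to infinity as $p \to 0^+$), and it is preserved throughout each continuous phase, since those phases are verbatim the standard clinching auction, whose clinching rule exists precisely to keep it true \citep{DLN-08}. The only novel event is a price jump from $\valp$ to $\valdp$, which is exactly the moment at which a whole block of agents (those with values in $[\valp,\valdp)$) leaves simultaneously, and the reallocation governed by~\eqref{price jump} is designed to re-establish the invariant after this mass exit. Concretely, with $\kdp = |\Sdp|$ agents remaining, the limited-supply constraint (LS) at $k = \kdp$ gives residual supply $s' := s - \kdp h_{\kdp} - (\kp - \kdp) l_{\kdp} \geq 0$, so Step~4 does not over-allocate; the budget-balance constraint (BB) gives $\budget - \valp h_{\kdp} \geq 0$, so after Step~4 each remaining agent has budget $\budget - \valp h_{\kdp}$ and thus demand $(\budget - \valp h_{\kdp})/\valdp$ at the new price; and the market-clearing constraint (MC) at $k = \kdp$, once rearranged, is exactly $\kdp (\budget - \valp h_{\kdp})/\valdp \geq s'$, i.e.\ the aggregate demand of the remaining agents at $\valdp$ is at least the residual supply. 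Since every remaining agent also has value at least $\valdp$, the configuration after Step~4 is a legitimate clinching state at price $\valdp$ satisfying the invariant, and Step~5 resumes from there; an induction over the finitely many jump points then gives the invariant for the whole run.

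Finally I would close the argument exactly as for the standard clinching auction \citep{DLN-08}: whenever only one active agent remains, the invariant says that agent's demand is at least the residual supply, so the clinching rule has that agent clinch all of it, and applying this to each agent in turn as the others successively drop out forces the residual supply down to $0$ by the time the price clock passes the largest agent value. The main obstacle is the jump step: one has to unwind the bookkeeping of the post-jump budgets $\budget - \valp h_{\kdp}$ to recognize (MC) as the statement ``residual demand dominates residual supply at the new price,'' and then verify---using (BB) for non-negativity of the surviving budgets and the fact that all surviving values exceed the current price---that the post-jump configuration is genuinely a state from which \citet{DLN-08}'s market-clearing guarantee can be re-invoked; the continuous phases and the nonemptiness of~\eqref{price jump} are routine by comparison.
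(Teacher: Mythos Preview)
Your proposal is correct and follows essentially the same approach as the paper: both argue that the continuous phases are handled by \citet{DLN-08}, and that the market-clearing constraint (MC) in polytope~\eqref{price jump} is precisely the statement that aggregate demand of the surviving agents at the post-jump price is at least the residual supply, so the invariant ``demand $\geq$ supply'' is preserved across each jump. The paper's proof is a two-sentence sketch of this; your version spells out the bookkeeping (LS for non-over-allocation, BB for nonnegative budgets, nonemptiness of the polytope for well-definedness) that the paper leaves implicit.
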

\begin{proof}
If the price-clock increases continuously, the demand decreases
continuously.  When the total demands meet the supply remaining,
\citet{DLN-08} show that the standard clinching auction halts and the
market clears.  For the clinching auction with price jumps, when the
price-clock goes through a price jump, the market clearing constraints
that define polytope \eqref{price jump} guarantee that the total
demands are at least the supply remaining.  Thus, the clinching
auction with price jumps clears the market.
\end{proof}

\begin{proposition}
The clinching auction with price jumps satisfies ex-post IR, DSIC,
and budget balance.
%Specifically, every truthful player obtains a non-negative utility, and cannot increase her utility by declaring any value different than her true value.
\end{proposition}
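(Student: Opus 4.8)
The plan is to verify the three properties in turn, reusing the analysis of the standard clinching auction of \citet{DLN-08} for the continuous segments of the price-clock and isolating all of the new content in the behavior at a single price jump from $\valp$ to $\valdp$.

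\emph{Budget balance.} During a continuous phase the argument is identical to the standard clinching auction: an active agent only clinches units at the current price-clock, and the clinching step never drives a remaining agent's cumulative payment above her budget. At a price jump, each agent clinches at most $\max\{h_{\kdp},l_{\kdp}\}$ additional units at the low price $\valp$, and the BB constraint of polytope~\eqref{price jump} bounds $h_{\kdp}$ and $l_{\kdp}$ by $\budget/\valp$, where $\budget$ is the common remaining budget just before the jump; hence the extra payment at price $\valp$ is at most $\budget$ and remaining budgets stay nonnegative. I would also record the invariant (noted in the definition) that all active agents have identical remaining budgets throughout the execution; step~4 preserves it because every staying agent clinches the same quantity $h_{\kdp}$ and every quitting agent the same quantity $l_{\kdp}$, each at the same price $\valp$.

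\emph{Ex-post IR.} In a continuous phase an active agent's value exceeds the price-clock, so each clinched unit is priced below her value and contributes nonnegative utility. At a price jump, a staying agent has value at least $\valdp>\valp$ and clinches $h_{\kdp}$ units at price $\valp$, strictly below her value, while a quitting agent had value at least $\valp$ (she was active at that price) and clinches $l_{\kdp}$ units at price $\valp$, at most her value. Summing the nonnegative per-unit contributions over the run gives nonnegative utility, and the budget-balance argument guarantees the $-\infty$ event is never triggered.

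\emph{DSIC.} This is the crux, and the place I expect the only real difficulty. I would use the standard clock-auction argument: the price-clock path and the per-price clinching amounts of every agent other than $i$ are functions of the other agents' reports alone (once $i$ drops out her allocation is frozen and no longer influences the others), and agent $i$'s own clinched amounts are just the residual supply, clinched monotonically at the current price; so reporting value $\hat\val$ amounts to choosing the price at which $i$ drops out, and the marginal utility of remaining active a little longer is $\val$ minus the price at which the next units clinch, positive below $\val$ and negative above it — this settles the continuous segments. The subtlety is the jump: ``dropping out'' now means reporting a value in $[\valp,\valdp)$ and ``staying'' means reporting at least $\valdp$. Fixing the other agents' reports, the number of high-valued agents among them is some $k$; if $i$ stays she receives $h_{k+1}$ units at price $\valp$, if she quits she receives $l_{k}$, and the IC constraint $h_{k+1}=l_{k}$ of polytope~\eqref{price jump} makes these — and hence $i$'s payment up through price $\valp$ — equal. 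The only remaining difference is that a staying agent additionally participates in the clinching at prices at least $\valdp$, whose units have marginal utility $\val-(\text{price}\ge\valdp)$; so staying is weakly better exactly when $\val\ge\valdp$, which coincides with truthful reporting. Combining the continuous-phase marginal argument with this jump analysis, truthful reporting is a dominant strategy. The main obstacle is precisely checking that inserting the discontinuous reallocation of step~4 between two continuous segments does not break the residual-supply/monotone-clinching structure underlying the standard DSIC proof, and that the IC constraint of polytope~\eqref{price jump} is exactly the condition needed to preserve the ``no regret about the drop-out decision'' property across the jump.
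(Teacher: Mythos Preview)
Your proposal is correct and follows essentially the same approach as the paper: reduce the continuous phases to \citet{DLN-08}, and at a price jump invoke the BB constraint for budget balance, the NN constraint (implicitly, via nonnegative clinched quantities at price $\valp\leq\val$) for ex-post IR, and the IC constraint $h_{k+1}=l_k$ for DSIC. The paper's own proof is terser---it simply cites the three constraint families of polytope~\eqref{price jump} as ensuring the three properties---whereas you spell out the marginal-utility argument across the jump and flag the residual-supply bookkeeping as the place to be careful; but the structure and key ideas are the same.
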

\begin{proof}
\citet{DLN-08} show that the standard clinching auction is ex-post IR,
DSIC, and budget balanced.  For the clinching auction with price
jumps, when the price-clock goes through a price jump from $\valp$ to
$\valdp$, the IC constraints that define polytope \eqref{price jump}
guarantee that the agents with values at most $\valdp$ weakly prefer to
quit at price $\valp$ and the agents with values above $\valdp$ prefer
to stay at price $\valp$.  Meanwhile, the budget constraints and
non-negative consumption constraints that define polytope 
\eqref{price jump} guarantee that the agents are budget balanced and have
non-negative utility after the price jump.
\end{proof}

For two i.i.d.\ agents with identical budgets, the clinching auction
with price jumps induces the same outcome as the middle-ironed
clinching auction (\Cref{d:middle-ironed-clinching-auction}).  For a general number of agents, it is polynomial
time solvable.  We conjecture that, for i.i.d.\ distributions and
identical budgets, minimizing the expected quantity reallocated to
low-valued agents, i.e., the objective described previously, is
equivalent to maximizing expected welfare.  We leave to future studies
the question of whether there is a more succinct characterization of
the expected welfare maximizing solution and the generalization to agents
with non-identical valuation distributions.

If agents have distinct budgets, the linear program can be generalized
by replacing the variables, which corresponded to the reallocation to
high- and low-valued agents with a given number $k=|\Sdp|$ of
high-valued agents, with variables that correspond to the reallocation
to each agent $i$ with a given set $\Sdp$ of high-valued agents.  With
this modification to the variables and constraints of polytope
\eqref{price jump}, the previous argument guarantees the new polytope
is non-empty.  Notice that there are $O(n\cdot 2^{n})$ variables
defining the new polytope.  It is possible, however, to optimize
expected allocation to the low-valued agents subject to this polytope
in polynomial time when there are a constant number of distinct
budgets; symmetries across agents with identical budgets allow the
number of variables in the program to be reduced to a polynomial
number.  We leave to future studies the problem of identifying a
polynomial time algorithm for optimally reallocating the supply during
a price jump when there are generally distinct budgets.

\section{Bayesian Optimal Mechanisms for Budgeted Agents}
\label{app:winner}

In this section we give a simple geometric approach for identifying
Bayesian optimal mechanisms for budgeted agents.  This approach can be
used to derive \Cref{thm: opt welfare}, which characterizes the
Bayesian optimal mechanisms for revenue and welfare.  Recall, that the
mechanisms characterized by \Cref{thm: opt welfare} have
deterministic interim payments and are naturally implemented by
all-pay mechanisms.  In this section, we use the approach to analyze
winner-pays-bid mechanisms and identify the Bayesian optimal
mechanisms restricted this family, i.e., we prove
\Cref{thm:first-price}.

The main result of this section is \Cref{lem:opt-win} which describes
the optimal single-agent mechanism for any interim constraint and
combines with the fact that in symmetric single-item environments, the
optimal multi-agent mechanism is given by the interim constraint that
corresponds to the highest-bid-wins allocation rule.
\Cref{thm:first-price}, restated below, follows.
%It is
%straightforward also to directly prove Theorem~\ref{thm:first-price}
%using our methods.

\begin{lemma}{\label{lem:opt-win}}
For a public-budget regular agent and interim allocation constraint
$\calloc(\cdot)$, the welfare-optimal winner-pays-bid single-agent
mechanism allocates as by $\calloc(\cdot)$ except that values in
$[\valp,\highestval]$ are ironed for some $\valp$; payments are given
deterministically by the payment identity and the winner-pays-bid
framework.
\end{lemma}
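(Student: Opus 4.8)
The plan is to recast the problem as a constrained optimization over the single agent's interim allocation rule, Lagrangian-relax the budget constraint, and then invoke the ex~ante reduction of \citet{AFHH-13} --- which applies because, after relaxation, the objective becomes linear in the allocation. First I would eliminate payments: the winner-pays-bid format together with the payment identity forces a value-$\val$ agent to bid $\bid(\val)=\val-\alloc(\val)^{-1}\int_0^{\val}\alloc(t)\,dt$ wherever $\alloc(\val)>0$, and she pays $\bid(\val)$ exactly when she wins. Hence the welfare-optimal winner-pays-bid single-agent mechanism solves: maximize $\int_0^{\highestval}\val\,\alloc(\val)\,d\dist(\val)$ subject to (a) $\alloc$ monotone non-decreasing, (b) interim feasibility, $\int_{\val}^{\highestval}\alloc(t)\,d\dist(t)\le\int_{\val}^{\highestval}\calloc(t)\,d\dist(t)$ for all $\val$, and (c) the budget constraint $\val\,\alloc(\val)-\int_0^{\val}\alloc(t)\,dt\le\budget\,\alloc(\val)$ for all $\val$. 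The only formal difference from the program underlying \autoref{thm: opt welfare} (the all-pay case) is the extra factor $\alloc(\val)$ on the right-hand side of (c).

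A first simplification: since $\bid(\cdot)$ inherits monotonicity from $\alloc(\cdot)$ (one checks $\bid'(\val)=\alloc'(\val)\,\alloc(\val)^{-2}\int_0^{\val}\alloc(t)\,dt\ge 0$), constraint (c) is equivalent to the single inequality $\bid(\highestval)\le\budget$, i.e.\ $\highestval\,\alloc(\highestval)-\int_0^{\highestval}\alloc(t)\,dt\le\budget\,\alloc(\highestval)$, which is linear in $\alloc$. Introducing a non-negative multiplier $\optlagrange$ for this single constraint and moving it into the objective (using Fubini on the $\int_0^{\val}\alloc$ term) yields a relaxed objective of the form $\int_0^{\highestval}\alloc(\val)\,\psi(\val)\,d\val$ for an explicit pseudo-weight $\psi$; the extra $+\optlagrange\,\budget\,\alloc(\highestval)$ term keeps everything linear. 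This is precisely the linearity property that the \citet{AFHH-13} reduction requires --- a property that, as the text notes, fails for the un-relaxed welfare/revenue objectives of budgeted agents. Applying their reduction to the single-agent problem with interim constraint $\calloc(\cdot)$, the relaxed optimum allocates as $\calloc(\cdot)$ below some threshold $\valp$ and irons (in the distribution-weighted sense of \Cref{s:prelim}) on $[\valp,\highestval]$: the welfare virtual value $\val$ is non-decreasing, and the budget penalty only discounts higher values, so no pooling other than at the top is ever profitable.

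Finally I would pin down the multiplier by complementary slackness. Once $\alloc$ is constant on $[\valp,\highestval]$ the bid is constant there ($\bid'(\val)=0$ since $\alloc'(\val)=0$), so there is a single calibration to perform: choose $\valp$ so that this common ironed bid equals $\budget$. As $\valp$ ranges over $[0,\highestval]$ the ironed bid varies continuously from $0$ (full ironing, $\bid\equiv 0$) to the efficient bid of the un-ironed $\calloc$; if the latter is at most $\budget$ no ironing is needed, and otherwise the intermediate value theorem supplies a valid $\valp$. For this choice, constraint (c) holds with equality at $\highestval$ and strictly below it (the bid is increasing up to $\valp^{-}$ and jumps up to $\budget$ at $\valp$), so the relaxed optimum is feasible for the original program; since the value of a relaxation upper-bounds that of the original, it is the welfare-optimal winner-pays-bid mechanism, with payments given by the payment identity and the winner-pays-bid framework, as claimed. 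Plugging $\calloc(\val)=\dist^{n-1}(\val)$ then yields \Cref{thm:first-price}.

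The step I expect to be the main obstacle is establishing that, for the Lagrangian-relaxed \emph{linear} objective, the optimal allocation under the interim constraint $\calloc(\cdot)$ really is ``$\calloc$ with the top ironed'' rather than some more intricate pooling pattern --- i.e.\ verifying that the pseudo-weight $\psi$ induced by the winner-pays-bid budget term retains the monotonicity structure that makes top-ironing optimal --- together with the accompanying existence of a multiplier--threshold pair realizing complementary slackness. The all-pay analogue of this is exactly \autoref{thm: opt welfare}; here the extra $\alloc(\val)$ in (c) alters $\psi$ but, as the sketch indicates, not its qualitative behavior.
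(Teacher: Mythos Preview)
Your proposal is correct and follows essentially the same route as the paper: reduce the family of winner-pays-bid budget constraints to the single inequality at $\highestval$ via monotonicity of the bid $\price(\val)/\alloc(\val)$ (the paper states this separately as \Cref{lem:monotone-ratio}), Lagrangian-relax that one constraint, observe that the resulting objective is linear in $\alloc$, and invoke the \citet{AFHH-13} ex~ante framework. The step you flag as the main obstacle---why the relaxed optimum is exactly ``$\calloc$ with the top ironed''---is precisely what the paper works out explicitly: it computes the Lagrangian price-posting welfare curve $\cumval_{\lagrange}(\quant)=\int_0^{\quant}\val(t)\,dt-\lagrange\,\val(\quant)+\lagrange\budget$ for $\quant>0$ (and $0$ at $\quant=0$), checks that public-budget regularity makes it concave on $(0,1]$, and uses the discontinuity at $\quant=0$ to conclude that the only ironing is an interval $[0,\quant\primed]$ in quantile space, i.e.\ $[\valp,\highestval]$ in value space.
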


\begin{numberedtheorem}{\ref{thm:first-price}}
For i.i.d.\ public-budget regular agents, the welfare-optimal
winner-pays-bid mechanism is the highest-bid-wins mechanism, i.e.\ the
first-price auction.
\end{numberedtheorem}

While \Cref{lem:opt-win} can be proven using the traditional analysis of
agents with budgets, e.g., \citet{LR-96}; we will give a proof that
reduces interim optimization to ex ante optimization.  For context,
\citet{BR-89} reduce interim revenue maximization to ex ante revenue
maximization for single-dimensional linear agents 
(i.e.\ without budgets).  
\citet{AFHH-13}
generalized this approach to linear objectives (which do not require
single-dimensional linear agents).  A challenge that our proof
addresses is that public budget agents do not have linear utility
functions and, therefore, welfare maximization is not a linear
objective.

\subsection*{Interim Optimization for Linear Objectives}

We first introduce the {\em interim optimal payoff}, {\em quantile
  space}, and {\em payoff curves}; payoff curves are a straightforward
generalization of revenue curves to non-revenue objectives,
cf.\ \citet{AFHH-13}.

\begin{definition}
\label{d:revenue-linearity}
\label{d:interim-pricing}
For any general objective, the {\em interim optimal payoff}, given
interim allocation constraint $\calloc$, is the payoff of the
single-agent mechanism $(\alloc,\price)$ that satisfies the interim
constraint $\int_\val^{\highval} \alloc(\val)\,d\dist(\val) \leq
\int_\val^{\highval} \calloc(\val)\,d\dist(\val)$ with the highest
objective value; denote this optimal payoff by $\Obj{\calloc}$.  The
objective is {\em linear} if the functional $\Obj{\cdot}$ is linear, i.e., if
for any allocations $\alloc = \alloc\primed + \alloc\doubleprimed$
then $\Obj{\alloc} = \Obj{\alloc\primed} + \Obj{\alloc\doubleprimed}$.
\end{definition}

\begin{definition}
The \emph{quantile} $q$ of a single-dimensional agent with value $\val$
drawn from distribution $\dist$ is the measure with respect to $\dist$
of stronger values, i.e., $\quant = 1 - \dist(\val)$; the
\emph{inverse demand curve} maps an agent's quantile to her value,
i.e., $\val(\quant) = \dist^{-1}(1 - \quant)$.
\end{definition}

\begin{definition}
\label{d:exante-pricing}
For given ex ante allocation probability $\quant$, the single-agent
\emph{ex ante pricing problem} is to find the optimal mechanism with
ex ante allocation probability exactly $\quant$.  The optimal ex ante
payoff, as a function of $\quant$, is denoted by the \emph{payoff
  curve} $\icumvirt(\quant)$.
\end{definition}

The ex ante pricing problem (\Cref{d:exante-pricing}) can be solved
via the following geometric approach.

\begin{definition}
For any $\quant$, denote the {\em price-posting payoff curve},
from posting
price $\val(\quant)$, by $\cumvirt(\quant)$ (with ties broken in favor
of higher payoff).
%For given ex ante allocation probability $\quant$, the single-agent
%\emph{ex ante pricing problem} is to find the optimal mechanism with
%ex ante allocation probability exactly $\quant$.  The optimal ex ante
%payoff, as a function of $\quant$, is denoted by the \emph{payoff
%  curve} $\icumvirt(\quant)$.
\end{definition}

If the price-posting payoff curve is differentiable and concave, the
marginal price-posting payoff, a.k.a., the derivative
$\cumvirt'(\quant)$, is well defined and monotone non-increasing.  Its
pointwise optimization leads to an optimal incentive compatible
mechanism, cf.\ \citet{mye-81}.  Specifically, the ex ante pricing
problem is solved by posted pricing and $\icumvirt = \cumvirt$.
Otherwise, analogous to the ironing method of \citet{mye-81}, the ex
ante pricing problem is solved by ironing the price-posting revenue
curve.  

\begin{lemma} [\citealp{AFHH-13}]
Given any linear payoff objective $\Obj{\cdot}$, the payoff curve
$\icumvirt$, which gives the optimal ex ante pricing as a function of
quantile, is given by the concave hull of the price-posting payoff
curve $\cumvirt$.
\end{lemma}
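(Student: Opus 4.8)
The plan is to reprise, in quantile space, the ex-ante-pricing argument of \citet{AFHH-13}, with the concave hull playing exactly the role the ironed revenue curve plays in \citet{mye-81}. Throughout I would work with allocation rules written as functions of quantile: since an incentive-compatible single-agent mechanism is, up to payments pinned down by the payment identity, determined by its monotone allocation rule, I identify such a mechanism with a non-increasing function $y \colon [0,1] \to [0,1]$, where $y(\quant)$ is the interim allocation probability of the agent with quantile $\quant$, and its ex ante allocation probability is $\int_0^1 y(\quant)\,d\quant$. The posted-price mechanism at price $\val(\theta)$ is the special case $y_\theta(\quant) = \mathbf{1}[\quant \le \theta]$, with ex ante allocation probability $\theta$ and, by definition, payoff $\cumvirt(\theta)$. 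The whole point is that ex ante pricing optimizes a linear objective over the convex set of such $y$ subject to the single scalar constraint $\int_0^1 y = \quant$, a set whose extreme points are the posted prices (together with the never-allocate rule).

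First I would prove the lower bound $\icumvirt(\quant) \ge $ (concave hull of $\cumvirt$ at $\quant$): fix the target ex ante probability $\quant$, and by Carath\'eodory in one dimension pick $\theta_1 \le \quant \le \theta_2$ and $\alpha \in [0,1]$ with $\alpha\,\theta_1 + (1-\alpha)\,\theta_2 = \quant$ such that $\alpha\,\cumvirt(\theta_1) + (1-\alpha)\,\cumvirt(\theta_2)$ equals the concave hull of $\cumvirt$ at $\quant$. Randomizing between the posted-price mechanisms at $\val(\theta_1)$ and $\val(\theta_2)$ with report-independent weights $\alpha, 1-\alpha$ gives an incentive-compatible, individually-rational mechanism with allocation rule $\alpha\,y_{\theta_1} + (1-\alpha)\,y_{\theta_2}$, hence ex ante allocation probability $\quant$; linearity of $\Obj{\cdot}$ makes its payoff $\alpha\,\cumvirt(\theta_1)+(1-\alpha)\,\cumvirt(\theta_2)$, as required.

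Next I would prove the matching upper bound. Take any incentive-compatible mechanism with ex ante allocation probability $\quant$ and quantile-space allocation rule $y$. Since $y$ is non-increasing with values in $[0,1]$, there is a probability measure $\nu$ on $[0,1]$ with $\nu\bigl([\quant,1]\bigr) = y(\quant)$ for all $\quant \in (0,1]$, the residual mass $1 - y(0^{+})$ placed at $\theta = 0$ (the never-allocate mechanism, $\cumvirt(0) = 0$). Then $y = \int_{[0,1]} y_\theta \, d\nu(\theta)$ pointwise, so linearity of $\Obj{\cdot}$ gives $\Obj{y} = \int_{[0,1]} \cumvirt(\theta)\,d\nu(\theta)$, while Fubini gives $\int_{[0,1]} \theta \, d\nu(\theta) = \int_0^1 \nu([\quant,1])\,d\quant = \int_0^1 y(\quant)\,d\quant = \quant$. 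Because the concave hull of $\cumvirt$ is concave and dominates $\cumvirt$, Jensen's inequality then bounds $\Obj{y}$ by the $\nu$-average of the concave hull, which is at most the concave hull evaluated at $\int_{[0,1]}\theta\,d\nu(\theta) = \quant$. Taking the supremum over mechanisms gives $\icumvirt(\quant) \le$ (concave hull of $\cumvirt$ at $\quant$), and the two bounds together identify $\icumvirt$ with the concave hull of $\cumvirt$.

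I expect the one genuinely delicate point to be the passage of linearity through the continuous mixture in the upper bound: one needs $\Obj{\cdot}$ to commute with the mixture $y = \int_{[0,1]} y_\theta\,d\nu(\theta)$, so that $\Obj{y} = \int_{[0,1]} \cumvirt(\theta)\,d\nu(\theta)$. This is precisely where the hypothesis that the objective is a \emph{linear} functional of the allocation is used, and it is the reason that for public-budget agents one must first move the budget constraint into the objective via a Lagrangian before this lemma applies. The remaining points --- existence of the two-point witness for the concave hull (upper semicontinuity of $\cumvirt$) and the bookkeeping for the residual mass at $\theta = 0$ --- are routine.
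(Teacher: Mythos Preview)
The paper does not give its own proof of this lemma; it is quoted from \citet{AFHH-13} and used as a black box. So there is nothing in the paper to compare your argument against line by line.

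That said, your proposal is the standard argument and is essentially what \citet{AFHH-13} do: decompose an arbitrary monotone allocation rule as a mixture of posted-price step functions via the layer-cake representation, push the linear objective through the mixture to get $\int \cumvirt(\theta)\,d\nu(\theta)$, and then apply Jensen with the concave hull for the upper bound; the lower bound is the two-point randomization witnessing the concave hull. Your identification of the one delicate step---that the paper's Definition~\ref{d:revenue-linearity} only states finite additivity, whereas the upper-bound argument needs $\Obj{\cdot}$ to commute with a continuous mixture---is exactly right, and is handled in practice because the objectives actually used (expected surplus, expected payment, point evaluations $\alloc(\highestval)$ and $\price(\highestval)$) are all bounded linear functionals of the allocation rule for which Fubini applies. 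Nothing is missing.
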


For linear objectives, the interim optimization problem
(\Cref{d:interim-pricing}) is solved by reduction to the ex ante
optimization problem (\Cref{d:exante-pricing}).

\begin{lemma}[\citealp{AFHH-13}]\label{lem:virtual surplus}
Given any linear payoff objective $\Obj{\cdot}$, for any monotone allocation
$\calloc(\cdot)$ and an agent with any price-posting payoff curve $\cumvirt(\quant)$,
the expected payoff of agent is upper-bounded her expected marginal
payoff of the same allocation rule, i.e.,
$$\Obj{\calloc} \leq \expect{\icumvirt'(\quant)\cdot \calloc(\val(\quant))}.
$$ Furthermore, this inequality holds with equality if the allocation
rule $\calloc$ is constant all intervals of values $\val(\quant)$
where $\ironed\cumvirt(\quant) > \cumvirt(\quant)$.
\end{lemma}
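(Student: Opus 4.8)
The plan is to prove the bound mechanism-by-mechanism and then pass to the supremum. Because the objective is \emph{linear}, the payoff of any incentive compatible single-agent mechanism with interim allocation rule $\alloc$ can be written as an inner product $\expect[\val]{\rho(\val)\,\alloc(\val)}$ against a fixed payoff function $\rho$ (for welfare $\rho(\val)=\val$; for revenue $\rho(\val)=\virt(\val)$; in the budgeted applications $\rho$ is the Lagrangian combination of the two). Passing to quantile space with $\efalloc(\quant):=\alloc(\val(\quant))$ -- which is non-increasing since $\alloc$ is monotone and $\val(\cdot)$ is decreasing -- this payoff equals $\int_0^1 \cumvirt'(\quant)\,\efalloc(\quant)\,d\quant$, where $\cumvirt(\quant)=\int_0^{\quant}\rho(\val(s))\,ds$ is precisely the price-posting payoff curve (posting the price $\val(\quant)$ serves the top $\quant$-measure of agents). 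The same identity admits the dual reading that $\efalloc$ is a nonnegative mixture $\int_0^1 \mathbf 1[\quant\le r]\,d\nu(r)$ of the posting-price-$\val(r)$ allocations (with $d\nu=-d\efalloc$, plus a possible atom at $r=1$ corresponding to the zero price), each of whose interim constraints collapses to the ex ante constraint ``allocation probability $\le r$'' with interim-optimal payoff $\icumvirt(r)$ by \Cref{d:exante-pricing} and the concave-hull lemma above.

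The heart of the argument is then two integration-by-parts comparisons. First, the \emph{ironing inequality}: for every non-increasing $\efalloc$, $\int_0^1\cumvirt'(\quant)\,\efalloc(\quant)\,d\quant \le \int_0^1\icumvirt'(\quant)\,\efalloc(\quant)\,d\quant$. Indeed $D:=\icumvirt-\cumvirt\ge 0$ vanishes outside the ironed intervals and at their endpoints, so $\int_0^1 D'(\quant)\,\efalloc(\quant)\,d\quant = \sum_i\bigl(-\int_{a_i}^{b_i} D(\quant)\,\efalloc'(\quant)\,d\quant\bigr)\ge 0$, using $D\ge 0$ and $\efalloc'\le 0$. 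Second, since $\alloc$ obeys the interim constraint $\calloc$, the ``from the top'' partial sums satisfy $\int_0^{Q}\efalloc \le \int_0^{Q}\calloc(\val(\cdot))$ for all $Q$; as $\icumvirt$ is concave, $\icumvirt'$ is non-increasing and a second integration by parts gives $\int_0^1\icumvirt'(\quant)\,\efalloc(\quant)\,d\quant \le \int_0^1\icumvirt'(\quant)\,\calloc(\val(\quant))\,d\quant$. Chaining these and taking the supremum over feasible $\alloc$ yields $\Obj{\calloc}\le\expect{\icumvirt'(\quant)\,\calloc(\val(\quant))}$. For the equality clause: if $\calloc$ is constant on every ironed interval then the mechanism with allocation $\calloc$ is itself feasible, the second comparison is an equality for it by inspection, and the first is an equality because the integrand $D(\quant)\,\efalloc'(\quant)$ vanishes (on ironed intervals $\efalloc'=0$; off them $D=0$); hence the bound is attained.

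The main obstacle is not the algebra but securing the two structural inputs for a \emph{general} linear objective rather than for welfare or revenue by name: that the interim-optimal payoff of the posting-price allocation at quantile $r$ is genuinely $\icumvirt(r)$ (so that linearity may be applied term-by-term across the mixture), and that $\icumvirt'$ is non-increasing \emph{and} non-negative at $\quant=1$, which is what makes the boundary term in the second integration by parts harmless -- this is why one works with the concave hull of the price-posting curve, and, where the raw marginal payoff would turn negative, its non-increasing truncation (corresponding to optimal reserves). These facts are exactly the content of the cited results of \citet{AFHH-13}; once they are in hand, the remaining measure-theoretic bookkeeping -- $\efalloc$ is merely monotone and $\icumvirt$ merely concave, so the manipulations are Riemann--Stieltjes integrations by parts -- is routine.
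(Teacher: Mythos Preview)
The paper does not prove this lemma; it is stated as a citation from \citet{AFHH-13} with no argument supplied. Your proposal is the standard Myerson-style ironing argument that underlies the AFHH-13 result---express the payoff in quantile space as $\int \cumvirt'\,\efalloc$, bound it by $\int \icumvirt'\,\efalloc$ via integration by parts against $D=\icumvirt-\cumvirt\ge 0$, then use concavity of $\icumvirt$ together with the interim constraint for the second comparison---and it is correct; your explicit flagging of the boundary requirement $\icumvirt'(1)\ge 0$ is apt and is exactly the structural input that the concave-hull construction (and, where relevant, the option not to sell) is meant to supply.
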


This framework allows optimal mechanisms for linear objectives to be
characterized.  The ex ante optimal mechanism is given by an
appropriate ironing of the price-posting payoff curve.  The optimal
interim allocation, for any interim allocation constraint, is given by
ironing the same quantiles as the price-posting payoff curve is
ironed. The resulting allocation rule $\alloc$ optimizes $\icumvirt$
pointwise and is constant on intervals of values $\val(\quant)$ where
$\icumvirt(\quant) > \cumvirt(\quant)$.

\subsection*{Interim Optimization for Lagrangian Objectives}

The difficulty of applying the framework described above is that
welfare maximization (also revenue maximization) with budgeted agents
is non-linear (Definition \ref{d:revenue-linearity}). Our approach prove
Lemma \ref{lem:opt-win} is to consider the optimization program for
welfare maximization with budgeted agents, Lagrangian relax the budget
constraint, and observe that the resulting Lagrangian objective is
linear.  
Then, a characterization of the form of the optimal mechanism for
any Lagrangian objective implies the form of the optimal mechanism for
the optimal choice of the Lagrangian parameter.  Thus, the framework
above for optimizing linear objectives can be effectively applied to
solve the problem of budgeted agents.

We begin by writing an optimization program for the interim welfare
maximization problem for winner-pays-bid mechanisms. The budget
constraint for the winner-pays-bid mechanisms is $\price(\val) \leq
\budget\,\alloc(\val)$ for all $\val \in [0, \highestval]$ where
$\highestval$ is the largest value in the support of distribution $F$.
We introduce the following lemma to help simplify the budget constraint.

\begin{lemma}\label{lem:monotone-ratio}
Given any interim allocation and payment rule $(\alloc, \price)$ in BNE
with $\alloc(\val) > 0$ for $v\in (\valp, \highestval]$, 
then the ratio $\frac{\price(\val)}{\alloc(\val)}$ is 
non-decreasing on $(\valp, \highestval]$.
\end{lemma}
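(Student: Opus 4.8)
The plan is to argue directly from the BNE characterization recalled in \Cref{s:prelim}: the interim allocation $\alloc(\cdot)$ is monotone non-decreasing and the payment identity $\price(\val) = \val\,\alloc(\val) - \int_0^\val \alloc(t)\,dt$ holds, together with non-negativity of the interim allocation (IIR). Fix any $\valp < \othertype < \val \le \highestval$; since $\alloc(\othertype), \alloc(\val) > 0$ by hypothesis, it is enough to prove the cross-multiplied inequality $\price(\val)\,\alloc(\othertype) \ge \price(\othertype)\,\alloc(\val)$, as dividing by $\alloc(\val)\,\alloc(\othertype) > 0$ then gives $\price(\val)/\alloc(\val) \ge \price(\othertype)/\alloc(\othertype)$.

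The first step is the one-sided payment bound $\price(\val) - \price(\othertype) \ge \othertype\,(\alloc(\val) - \alloc(\othertype))$ --- the usual ``type $\othertype$ does not envy type $\val$'' inequality. It follows from the payment identity, which gives $\price(\val) - \price(\othertype) = \val\,\alloc(\val) - \othertype\,\alloc(\othertype) - \int_\othertype^\val \alloc(t)\,dt$, combined with monotonicity, which gives $\int_\othertype^\val \alloc(t)\,dt \le (\val - \othertype)\,\alloc(\val)$. The second step is the individual-rationality-style bound $\price(\othertype) \le \othertype\,\alloc(\othertype)$, immediate from the payment identity and $\int_0^\othertype \alloc(t)\,dt \ge 0$.

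To finish, rewrite $\price(\val)\,\alloc(\othertype) - \price(\othertype)\,\alloc(\val) = \price(\othertype)\,(\alloc(\othertype) - \alloc(\val)) + (\price(\val) - \price(\othertype))\,\alloc(\othertype)$ and substitute the first step (using $\alloc(\othertype) \ge 0$) to lower bound this by $(\alloc(\val) - \alloc(\othertype))\,(\othertype\,\alloc(\othertype) - \price(\othertype))$, a product of two non-negative factors by monotonicity and by the second step respectively. I do not expect any genuine obstacle; the only points requiring minor care are the degenerate case $\alloc(\val) = \alloc(\othertype)$, where the payment identity forces $\price(\val) = \price(\othertype)$ and the ratio is locally constant, and invoking non-negativity of $\alloc$ on $[0,\othertype]$ for the second bound. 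The payoff of the lemma is that the winner-pays-bid budget constraint $\price(\val) \le \budget\,\alloc(\val)$, i.e.\ $\price(\val)/\alloc(\val) \le \budget$, needs only be checked at $\val = \highestval$, which is the simplification used in the Lagrangian analysis behind \Cref{thm:first-price}.
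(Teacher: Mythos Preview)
Your proof is correct. The paper's own argument is a one-line derivative computation: writing $\price(\val)/\alloc(\val) = \val - \tfrac{1}{\alloc(\val)}\int_0^\val \alloc(t)\,dt$ from the payment identity and differentiating in $\val$ gives $\tfrac{\alloc'(\val)}{\alloc(\val)^2}\int_0^\val \alloc(t)\,dt \ge 0$ by monotonicity and non-negativity of $\alloc$. Your route is different: instead of differentiating, you fix two points $\othertype < \val$ and prove the cross-multiplied inequality $\price(\val)\alloc(\othertype) \ge \price(\othertype)\alloc(\val)$ via the envy inequality $\price(\val)-\price(\othertype) \ge \othertype(\alloc(\val)-\alloc(\othertype))$ and the IR bound $\price(\othertype) \le \othertype\,\alloc(\othertype)$. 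The advantage of your approach is that it never assumes $\alloc$ is differentiable --- the BNE characterization only guarantees monotonicity --- so it is strictly more careful than the paper's calculus shortcut; the paper's version is shorter but tacitly assumes enough smoothness to differentiate. Both rely on exactly the same two ingredients (monotonicity of $\alloc$ and non-negativity of $\int_0^\othertype \alloc$), just packaged differently.
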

\begin{proof}
By the \citet{mye-81} characterization of BNE, 
the interim allocation rule is non-decreasing 
and the payment rule satisfies the payment identity, i.e.,
$\price(\val) =  \val\cdot \alloc(\val) - \int_0^\val \alloc(t)dt$.
Suppose $\alloc(\val) > 0$ for $\val \in (\valp, \highestval]$.
%Suppose $\alloc(\valp) > 0$ for some value $\valp$.
%The monotonicity of the interim allocation rule implies that 
%$\val(\val) > 0$, hence, 
%the ratio $\price(\val)/\alloc(\val)$ is well-defined for all $\val \in [\valp, \highestval]$.
Consider the derivative of the ratio $\frac{\price(\val)}{\alloc(\val)}$
with respect to $\val$,
\begin{align*}
\frac{d}{d\val}\left[\frac{\price(\val)}{\alloc(\val)}\right]
=
\frac{d}{d\val}\left[\frac{\val\cdot \alloc(\val) - \int_0^\val \alloc(t)dt}{\alloc(\val)}\right]
=
1 - 1 + \frac{\int_0^\val \alloc(t)dt}{(\alloc(\val))^2}\alloc'(\val)
\geq 0.
\end{align*}
Thus, the ratio is non-decreasing in value $\val$ from $\valp$ to $\highestval$.
\end{proof}
Combining Lemma \ref{lem:monotone-ratio} and 
the fact that the budget constraint holds at value $\val$ 
automatically if the allocation $\alloc(\val)$ equal to zero, 
the budget constraint can be simplified
as $\price(\highestval)\leq \budget\,\alloc(\highestval)$.  
We use
Lagrangian relaxation to move the budget constraint into the
objective, and get a optimization program as follows,
\begin{align}\label{prog}
\begin{array}{ll}
\max\limits_{(\alloc,\price)} 
&\expect{\val \cdot \alloc(\val)}
+
\lagrange \budget \alloc(\highestval)- \lagrange \price(\highestval)
\\
s.t.&(\alloc,\price) \text{ are BIC, IIR},
\\
&\text{and
feasible}.
\text{   }
\end{array}
\end{align}

To solve this Lagrangian relaxation program, there will be a correct
Lagrangian multiplier $\lagrange$ for which the budget constraint is
met with equality.  Once the correct value of the Lagrangian
multiplier is determined, the welfare-optimal mechanism can be solved
using the Lagrangian payoff curve that corresponds to the Lagrangian
welfare $\expect{\val \cdot \alloc(\val)}
+\lagrange\budget\alloc(\highestval)-\lagrange\price(\highestval)$.
To prove Lemma~\ref{lem:opt-win}, we give a simple characterization of
the optimal mechanism for any Lagrangian parameter.

Notice that both surplus, revenue, and the allocation and payments of
particular agents are linear objectives.  Thus, for a fixed Lagrangian
parameter the Lagrangian welfare optimization is a linear objective.

\begin{lemma}
Given any interim allocation and payment rule 
$(\alloc,\price)$ in BNE,
fox any fixed Lagrangian parameter $\lagrange$,
the
Lagrangian welfare $\expect{\val \cdot \alloc(\val)}
+\lagrange\budget\alloc(\highestval)-\lagrange\price(\highestval)$ 
is a linear
objective.
\end{lemma}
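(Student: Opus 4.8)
The plan is to show that, once the budget constraint has been Lagrangian-relaxed into the objective, the resulting Lagrangian welfare decomposes into a fixed linear combination of functionals that are each individually linear in the allocation rule $\alloc$ (with the payment rule $\price$ regarded as determined by $\alloc$). The conceptual content is that the original welfare-with-budget objective is non-linear only because the budget constraint $\price(\highestval)\le\budget\,\alloc(\highestval)$ couples the allocation and payment of the highest type; the Lagrangian relaxation replaces this coupling by two additive terms, each of which is harmless.

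First I would recall the \citet{mye-81} characterization of BNE from \Cref{s:prelim}: the interim payment rule is pinned down by the interim allocation rule through the payment identity $\price(\val)=\val\cdot\alloc(\val)-\int_0^\val\alloc(t)\,dt$, which is itself linear in $\alloc$. This has two consequences I would record: the Lagrangian welfare is a functional of $\alloc$ alone, and if $\alloc=\alloc\primed+\alloc\doubleprimed$ then the induced payment rules add, $\price=\price\primed+\price\doubleprimed$. So it suffices to verify that each of the three summands $\expect{\val\cdot\alloc(\val)}$, $\lagrange\budget\,\alloc(\highestval)$, and $-\lagrange\,\price(\highestval)$ is a linear functional of $\alloc$.

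Then I would check the three terms in turn. The term $\expect{\val\cdot\alloc(\val)}$ is the expected surplus --- an integral of $\alloc$ against a fixed measure, hence linear, and it is exactly the standard welfare objective handled by the framework of \citet{AFHH-13}. The term $\lagrange\budget\,\alloc(\highestval)$ is a fixed multiple of the pointwise evaluation of $\alloc$ at the top value, hence linear. For the term $-\lagrange\,\price(\highestval)$, the payment identity gives $\price(\highestval)=\highestval\cdot\alloc(\highestval)-\int_0^\highestval\alloc(t)\,dt$, a linear combination of a pointwise evaluation and an integral of $\alloc$, hence linear; equivalently, the payment--surplus equivalence of \citet{mye-81} shows the payment at any fixed value is linear in the allocation rule. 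Since $\lagrange$ and $\budget$ are fixed constants, a fixed linear combination of linear functionals is linear, which is precisely the condition of \Cref{d:revenue-linearity}.

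I do not expect a real obstacle. The only points needing a moment's care are: checking that the notion of ``adding allocation rules'' used in \Cref{d:revenue-linearity} is consistent with the induced payments also adding --- which is immediate from linearity of the payment identity --- and being clear that it is the Lagrangian relaxation, not the original constrained program, that restores linearity, since the budget constraint in its inequality form cannot be folded into a linear objective.
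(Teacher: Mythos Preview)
Your proposal is correct and takes essentially the same approach as the paper: decompose the Lagrangian welfare into the three summands $\expect{\val\cdot\alloc(\val)}$, $\lagrange\budget\,\alloc(\highestval)$, and $-\lagrange\,\price(\highestval)$, verify each is linear in $\alloc$ (using the payment identity for the last), and conclude that a fixed linear combination of linear objectives is linear. Your write-up is somewhat more explicit than the paper's (e.g., spelling out $\price(\highestval)=\highestval\cdot\alloc(\highestval)-\int_0^{\highestval}\alloc(t)\,dt$ and noting that payments add when allocations add), but the argument is the same.
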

\begin{proof}
By the definition of linearity,
the sum of linear objectives 
and 
a scalar multiple of a linear objective are both 
linear objectives.
$\alloc(\val)$ is linear for all $\val$,
hence, the scalar multiple of allocation 
$\lambda\budget\alloc(\highestval)$, 
and 
the expected surplus 
$\expect{\val\cdot\alloc(\val)}$
are both linear objectives.
By the payment identity,
the payment is an integral of $\alloc(v)$
and integral is a linear operator,
hence,
the scalar multiple of price 
$\lambda\price(\highestval)$
%and the expected revenue 
%$\expect{\price(\val)}$
is also a linear objective.
Thus, the 
Lagrangian welfare $\expect{\val \cdot \alloc(\val)}
+\lagrange\budget\alloc(\highestval)-\lagrange\price(\highestval)$ 
is a linear
objective.
\begin{comment}
Consider any interim allocation $\alloc = \alloc\primed + \alloc\doubleprimed$.
The scaled allocation $\lambda\budget\alloc(\val)$ at any value $\val$ is a linear objective,
\begin{align*}
\lambda\budget\alloc(\val) &=
\lambda\budget(\alloc\primed(\val)
+
\alloc\doubleprimed(\val))
=\lambda\budget\alloc\primed(\val)
+\lambda\budget\alloc\doubleprimed(\val).
\intertext{
The expected surplus is also a linear objective,}
\expect{\val\cdot\alloc(\val)} 
&= \expect{\val\cdot(
\alloc\primed(\val) + \alloc\doubleprimed(\val))}
= \expect{\val\cdot\alloc\primed(\val)} + \expect{\val\cdot\alloc\doubleprimed(\val)} .
\intertext{With payment identify, 
it can be verified that the payment is a linear objective,}
\expect{\price(\val)} 
&= \expect{\val\cdot(
\alloc\primed(\val) + \alloc\doubleprimed(\val))}.
\end{align*}
\end{comment}
\end{proof}

\begin{figure}[t]
\centering
\begin{tikzpicture}[scale = 0.5]

\draw [color = gray!20!white, dotted] (0, 5) -- (10, 5);

\draw (-0.2,0) -- (10.2, 0);
\draw (0, -0.2) -- (0, 10.2);
\draw (7.31825, 0.2) -- (7.31825, -0.2);
\draw (10, 0.2) -- (10, -0.2);
\draw (-0.2, 5) -- (0.2, 5);
\draw (-0.2, 10) -- (0.2, 10);

\draw [color = gray!40!white, dashed] (0, 5) -- (10, 5);

\begin{scope}[very thick]

\draw[color = gray] (0, 1.90911) circle (0.15cm);
\draw[color = gray, fill=gray] (0, 5) circle (0.15cm);

\draw[scale = 9.091, domain = 0:1.1, smooth, variable=\x, dotted]
plot({\x},{(\x - 0.5 *\x *\x / 1.1 + 0.65)});

\draw[scale = 9.091, domain = 0:1.1, smooth, variable=\x, dashed]
plot({\x},{-0.4 * (1.1-\x) + 0.55 });

\draw[scale = 9.091, domain = 0:1.1, smooth, variable=\x, color = gray]
plot({\x},{(\x - 0.5 *\x *\x / 1.1 + 0.1)-0.4 * (1.1-\x) + 0.55 });

\draw[scale = 9.091, domain = 0.8050075:1.1, smooth, variable=\x, color = black]
plot({\x},{(\x - 0.5 *\x *\x / 1.1 + 0.1)-0.4 * (1.1-\x) + 0.55 });

%\draw (0, 5) -- (6.81825, 9.13030555810847);
\draw (0, 5) -- (7.31825, 9.476896523983468);

%\draw [dashed]
%\draw [dashed] (0, 0) -- (8, 0);
%\draw [dashed] (8, 0) -- (8, 7);
%\draw [dashed] (8, 7) -- (15,7);

%\draw [dotted] (0, 0) -- (8, 0);
%\draw [dotted] (8, 0) -- (8, 2.575);
%\draw [dotted] (8, 2.575) -- (15, 2.575); 

%\draw (0, 0) .. controls (8, 2.575) .. (15,9.3); 
\end{scope}

\draw (0, -0.8) node {$0$};
\draw (7.31825, -0.8) node {$\quant\primed$};
\draw (10, -0.8) node {$1$};

\draw (-1.2, 0) node {$-0.6$};
\draw (-0.6, 5) node {$0$};
\draw (-0.9, 10) node {$0.6$};

\end{tikzpicture}
\caption{\label{f:curve} 
Depicted are the
 $-\lagrange\val(\quant)$ curve (black dashed), the
 $\int_{0}^{\quant}\val(t)dt +\lagrange\budget$ curve (black dotted),
 the Lagrangian welfare curve $\cumval_\lagrange(\quant)$ (gray solid),
 and the ironed Lagrangian welfare curve $\ironed\cumval_\lagrange(\quant)$ (black solid)
 corresponding to an agent with value uniformly distributed on $[0,1]$ with budget $\budget = \tfrac{1}{4}$ and Lagrangian multiplier $\lagrange = \frac{2}{5}$.
  }
\end{figure}
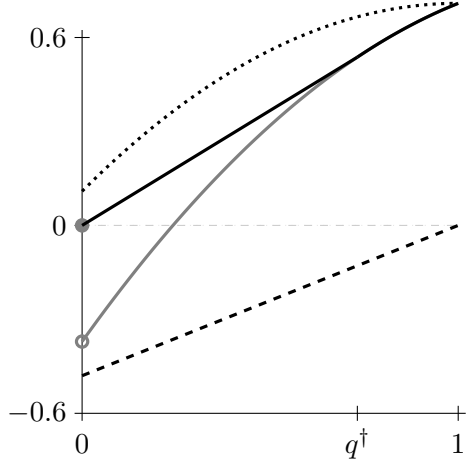

Consider optimizing the program \eqref{prog} for a fixed Lagrangian parameter
$\lagrange$.  To apply the framework discussed previously, we first
construct the price-posting payoff curve.  Notice that the identified
Lagrangian price-posting welfare curve is discontinuous at $\quant =
0$ (unless $\lagrange = 0$, i.e.\ when the budget constraint is not
binding).
\begin{lemma}
The Lagrangian price-posting welfare curve $\cumval_\lagrange(\cdot)$
for a public budget agent satisfies
\begin{align*}
\cumval_{\lagrange}(\quant) = 
\left\{
\begin{array}{ll}
0 \quad &\text{if $\quant = 0$},
\\
\int_{0}^{\quant} \val(\quant)d\quant - \lagrange\val(\quant) + \lagrange\budget
\quad &
\text{otherwise.}
\end{array}
\right.
\end{align*}
\end{lemma}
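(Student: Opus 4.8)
The plan is to substitute the posted-price mechanism directly into the Lagrangian welfare objective $\expect{\val\cdot\alloc(\val)} + \lagrange\budget\,\alloc(\highestval) - \lagrange\,\price(\highestval)$, since $\cumval_\lagrange(\quant)$ is by definition the value of this objective at the mechanism that posts price $\val(\quant)$. Posting price $\val(\quant)$ yields the deterministic single-agent mechanism that serves the agent at payment $\val(\quant)$ whenever $\val \geq \val(\quant)$ and does nothing otherwise; this mechanism is BIC, IIR, and feasible, and its ex ante allocation probability is exactly $\quant$, so it is indeed the mechanism underlying the price-posting payoff curve $\cumvirt$ at quantile $\quant$.

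First I would evaluate the expected-surplus term by passing to quantile space: the agent's quantile is uniform on $[0,1]$, her value is $\val(\cdot)$, and she is served exactly when her quantile is at most $\quant$, so $\expect{\val\cdot\alloc(\val)} = \int_0^\quant \val(t)\,dt$. Next, for $\quant > 0$ the posted price $\val(\quant) = \dist^{-1}(1-\quant)$ is strictly below $\highestval$: under the public-budget regularity assumption $\dist$ is concave, hence strictly increasing on its support $[0,\highestval]$, so $\val(\cdot)$ attains the top value $\highestval$ only at $\quant = 0$. Consequently the highest-value agent is served and pays the posted price, so $\alloc(\highestval) = 1$ and, reading the interim payment off the deterministic posted price (equivalently, from the payment identity), $\price(\highestval) = \val(\quant)$. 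Substituting these three quantities into the objective gives $\cumval_\lagrange(\quant) = \int_0^\quant \val(t)\,dt + \lagrange\budget - \lagrange\val(\quant)$, the claimed formula for $\quant > 0$.

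The remaining case $\quant = 0$ is where the asserted discontinuity comes from, and I expect it to be the only delicate point. Here the posted price is $\val(0) = \highestval$, so the agent is served only on the probability-zero event that her value equals $\highestval$, and the expected surplus vanishes. The point values $\alloc(\highestval)$ and $\price(\highestval)$ are fixed by the tie-break at value $\highestval$, which by the definition of $\cumval_\lagrange$ is resolved in favor of the larger designer payoff: serving that agent contributes $\lagrange(\budget - \highestval)$ to the objective, whereas not serving her contributes $0$. Because the budget never exceeds the largest value $\highestval$ (without loss of generality, since a budget above $\highestval$ imposes no constraint and thus forces $\lagrange = 0$), the larger of the two is $0$, so $\alloc(\highestval) = \price(\highestval) = 0$ and $\cumval_\lagrange(0) = 0$. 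Assembling the two cases yields the stated piecewise formula; the gap between this boundary value $0$ and the limiting value $\lagrange\budget - \lagrange\highestval$ of the $\quant > 0$ branch is precisely the discontinuity of $\cumval_\lagrange$ at $\quant = 0$ that will later be removed by ironing.
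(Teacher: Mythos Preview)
Your proof is correct and follows essentially the same approach as the paper's: substitute the posted-price mechanism into the Lagrangian objective, handle $\quant>0$ by observing $\alloc(\highestval)=1$ and $\price(\highestval)=\val(\quant)$, and handle $\quant=0$ via the tie-break in favor of the larger payoff. You are slightly more careful than the paper in justifying why $\val(\quant)<\highestval$ for $\quant>0$ and why serving at $\quant=0$ yields a nonpositive contribution (via $\budget\le\highestval$), but these are refinements of the same argument rather than a different route.
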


\begin{proof}
Consider posting a price $\val(\quant)$ in the quantile space. 

For $\quant > 0$ (strictly positive), the price $\val(\quant)$ is
strictly less than the highest value $\val(0) = \highestval$, so
$\price(\highestval)=\val(\quant)$ and
$\alloc(\highestval) = 1$.  Thus, the Lagrangian objective
for $\quant \in (0, 1]$ is $\cumval_{\lagrange}(\quant) =
  \int_{0}^{\quant}\val(t)dt - \lagrange \val(\quant) +
  \lambda\budget$.

For $\quant = 0$, an agent with the highest value $\highestval$ is indifferent
between buying and not buying.  
If this agent buys, the objective is negative; 
if this
agent does not buy, the objective is zero. 
Per the definition of the
price-posting revenue curve, we break this tie in favor of the
objective.   
\end{proof}

\begin{proof}[Proof of Lemma~\ref{lem:opt-win}]
If the budget does not bind, it is optimal to allocate  
the interim constraint $\calloc(\cdot)$, since the welfare of any interim feasible allocation is at most the same as the welfare of 
the interim constraint.

Next, we assume that the budget binds, i.e., $\lagrange > 0$.  Notice
that on $\quant\in (0, 1]$, the Lagrangian price-posting welfare curve
is the constant $\lagrange\budget$ plus the difference between the
original welfare curve $\int_{0}^{\quant}\val(t)dt$ and the scaled
value function $\lagrange\val(\quant)$.  
Since the original welfare
curve is always concave, 
under the public-budget regularity
assumption, 
this Lagrangian price-posting welfare curve is concave
on $\quant \in (0, 1]$.  
Due to the discontinuity at $\quant = 0$,
and the fact that for sufficient small $\quant > 0$, $\cumval_\lagrange(\quant)$ is negative, 
the Lagrangian welfare curve (i.e., solving the ex ante
optimization problem) is the Lagrangian price-posting welfare
curve with ironing from $0$ to some $\quant\primed$ (See
Figure~\ref{f:curve}).  
Therefore, by Lemma \ref{lem:virtual surplus}, the welfare-optimal 
mechanism allocates as by the interim constraint $\calloc(\cdot)$ except ironing the top values between $\valp =
\val(\quant\primed)$ and $\highestval = \val(0)$.  
\end{proof}

\end{document}